\title{Single-Winner Voting on Matchings}
\date{\vspace{-1.5cm}}
\author[1]{Niclas Boehmer}
\author[1]{Jessica Dierking}
\affil[1]{Hasso Plattner Institute, University of Potsdam, Germany} 
\affil[ ]{\texttt{niclas.boehmer@hpi.de,jessica.dierking@hpi.de}} 
\pgfplotsset{compat=1.17}
\tikzset{angle/.code 2 args=\pgfmathsincos{#1}\tikz@lib@place@handle@{#2}{180+#1}{\pgfmathresultx}{\pgfmathresulty}{#1}{1}}
\newtheorem{theorem}{Theorem}[section]
\newtheorem{observation}[theorem]{Observation}
\newtheorem{proposition}[theorem]{Proposition}
\newtheorem{lemma}[theorem]{Lemma}
\theoremstyle{definition}
\newtheorem{definition}[theorem]{Definition}
\newcolumntype{Y}{>{\centering\arraybackslash}X}
\newcolumntype{K}[1]{>{\centering\arraybackslash\hsize=#1\hsize}X}
\newcommand{\restatehere}[1]{
	\marginline{\vspace{0.6cm}\footnotesize \hyperlink{original#1}{\hypertarget{restated#1}{[Main]}}}
	\csname #1\endcsname*
}
\newenvironment{proofsketch}[1][Proof sketch]{
  \begin{proof}[#1]
  
}{
  \end{proof}
}
\crefname{observation}{observation}{observations}
\Crefname{observation}{Observation}{Observations}
\begin{document}

\maketitle

\bigskip
\begin{abstract}
  \begin{center}
    \textbf{\textsf{Abstract}} \smallskip
  \end{center}
  We introduce a single-winner perspective on \emph{voting on matchings}, in which voters have preferences over possible matchings in a graph, and the goal is to select a single collectively desirable matching. Unlike in classical matching problems, voters in our model are not part of the graph; instead, they have preferences over the entire matching.
  In the resulting election, the candidate space consists of all feasible matchings, whose exponential size renders standard algorithms for identifying socially desirable outcomes computationally infeasible.
  We study whether the computational tractability of finding such outcomes can be regained by exploiting the matching structure of the candidate space.
  Specifically, we provide a complete complexity landscape for questions concerning the maximization of social welfare, the construction and verification of Pareto optimal outcomes, and the existence and verification of Condorcet winners under one affine and two approval-based utility models.
  Our results consist of a mix of algorithmic and intractability results, revealing sharp boundaries between tractable and intractable cases, with complexity jumps arising from subtle changes in the utility model or solution concept.
\end{abstract}

\section{Introduction}
Alice is head of the HR department at Condorcet Consulting. Each year, Condorcet Consulting offers several different internship positions and receives a large number of applications. The task of the HR team is to decide which applicant should be assigned to which internship position, with some applicants only being eligible for a subset of the positions. This task becomes all the more difficult because the CEO, CTO, and other members of the management board of the company would like to implement different assignments of the applicants to positions. Therefore, Alice and her team must find a way to aggregate these conflicting opinions into a single assignment.
She quickly realizes that even deciding which principles she should apply when choosing a desirable assignment is not straightforward. What does it mean for one assignment to be better than another? And can one even determine that a proposed assignment has no clearly superior alternative?

The situation reminds her of voting scenarios, in which voters have preferences over candidates and a single compromise candidate must be selected: In her problem, each possible assignment of applicants to positions constitutes a candidate, and the members of the management board act as voters who express preferences over these candidates. Alice describes her problem to her friend Bob, who is an expert in voting. He notices that, while similar in spirit, the problem differs in some ways fundamentally from standard voting scenarios, where the candidate set is typically small and efficiently enumerable. In contrast, the number of possible assignments in Alice's problem grows exponentially with the number of applicants and positions, rendering most classical voting algorithms computationally intractable, as it becomes infeasible to iterate over all candidates. The two are thus left wondering how to aggregate the different preferences and how to do so in a computationally efficient way.

We model Alice's problem as \emph{voting on matchings}: given an underlying graph $G$, select a single matching based on the voters' preferences over possible matchings. The motivating example corresponds to a special case in which $G$ is bipartite, with internship positions on one side and applicants on the other.\footnote{In general, we do not assume the underlying graph to be bipartite, although almost all of our hardness and algorithmic results also hold in this case.} We put forward a new perspective on decision-making in matching under preferences and resource allocation: In the extensive literature on matching under preferences \citep{manlove2013algorithmics,klaus2016matching} and on resource allocation \citep{bouveret2016fair}, agents are typically part of the underlying graph and have preferences over their matched partner or allocated resource. In other words, each agent cares only about the part of the outcome that directly affects them. In contrast, we adopt a centralized perspective where voters represent entities that express preferences over the entire outcome. As a result, our model can capture situations in which multiple central stakeholders---like the management board of a company or teachers in a class---have preferences over the allocation of resources to individuals or the matching of people into~pairs.

In this paper, we focus on the problem of selecting a single winning solution in a voting on matchings instance under three classic notions of social desirability.

\textbf{Social Welfare Optimization:} Selecting candidates that maximize collective utility, evaluated through utilitarian or egalitarian welfare.

\textbf{Pareto Optimality:} Constructing and verifying candidates for which no other candidate is preferred by all voters.

\textbf{Condorcet Criterion:} Proving existence or verifying \emph{Condorcet winners}, i.e., candidates that defeat all others in pairwise majority comparisons.

\subsection{Our Contributions}
We analyze the computational complexity of identifying desirable outcomes when voting on matchings in a given graph. The exponential number of possible matchings renders standard voting algorithms inapplicable. Underlying our study is the central question of whether the combinatorial structure of matchings---the feasibility constraint defining the candidate space---suffices to restore tractability for classical notions of collective optimality.

We provide a complete complexity landscape for three key solution concepts---social welfare, Pareto optimality and Condorcet winners---under three utility models (see \Cref{tab:overview}).\footnote{Our polynomial-time results hold for general graphs. We do not explicitly seek to show NP-hardness for specific graph classes; nonetheless, almost all of our hardness reductions hold on bipartite graphs and simple graph structures like collections of paths and stars.} Each voter specifies a preferred matching, which we lift to utilities under three models, inspired by standard scoring vectors from social choice theory: In the affine utility model, the utility a voter derives from a matching is an affine function of the overlap with the voter's preferred matching, generalizing, for instance, Borda-style scoring. The other two models are approval-based, i.e., a matching provides either zero or one utility: In the one-edge approval model, a voter approves a matching whenever its overlap with the preferred matching is nonempty. In the $\kappa$-missing approval model, parameterized by $\kappa$, a voter approves a matching whenever at most $\kappa$ edges from their preferred matching are missing.

Overall, examining \Cref{tab:overview}, we find that most problems are computationally intractable, with Pareto optimality emerging as the ``most tractable'' of the three solution concepts. Still, we find several cases where we can use the problem's underlying structure to derive polynomial-time results, thereby identifying the boundaries of tractability.
However, the discovered islands of tractability are highly fragile: even minor modifications, such as increasing the parameter $\kappa$ or strengthening the solution concept from weak to strong Condorcet winners, can lead to a sharp jump in complexity.
Returning to our central question, our findings indicate that, in the case of matchings, the combinatorial structure of the candidate space alone does not suffice to restore traceability of classic notions of collective optimality in the face of an exponential candidate space.
We next discuss our results in more detail. Theorems whose proofs are (partly) deferred to the appendix are marked with $(\bigstar)$.

In \Cref{sec:socialwelfare}, we study utilitarian and egalitarian welfare. We show that maximizing utilitarian welfare  is polynomial-time solvable under affine utilities but NP-complete for the approval-based utility models. For egalitarian welfare, we show NP-completeness under affine and one-edge approval utilities. For $\kappa$-missing approval, we show that  maximizing egalitarian welfare is closely related to \textsc{Satisfiability}, allowing us to mirror the complexity jump between \textsc{$2$-Sat} and \textsc{$3$-Sat} to show the same complexity jump between $1$-missing approval, which we show to be polynomial-time solvable, and $2$-missing approval, which we show to be NP-hard.

In \Cref{sec:pareto}, we discuss (weak and strong) Pareto optimality. We show that constructing Pareto optimal candidates is generally easier than verifying if a given candidate is Pareto optimal. In the approval-based settings, we connect the verification of Pareto optimal candidates to the problem of maximizing egalitarian welfare, yielding the same complexity jump as before. In the affine case, we can utilize the feasibility of maximizing utilitarian welfare to construct strongly Pareto optimal candidates, while verifying Pareto optimality turns out to be coNP-complete via a reduction from the corresponding problem in approval-based multiwinner voting.

\Cref{sec:condorcet} deals with Condorcet winners. We show that deciding the existence of Condorcet winners is NP-hard across all of our utility models (except for the trivial existence of weak Condorcet winners under approval-based utilities). The verification of weak and strong Condorcet winners is coNP-complete under all considered utility functions.

\begin{table}[t]
  \scriptsize
  \centering
  \renewcommand{\arraystretch}{1.5}
  \setlength{\tabcolsep}{3pt}
  \setlength{\arrayrulewidth}{0.6pt}
  \setlength{\doublerulesep}{0.5pt}

  \begin{tabularx}{\linewidth}{
    >{\centering\arraybackslash}m{0.22\linewidth}||
    Y|Y|Y|Y
    }
     & \textbf{Affine}                                                                                                                                                                            & \textbf{One-Edge} & \textbf{$\kappa$-missing ($\kappa > 1$)} & \textbf{$\kappa$-missing ($0 \leq \kappa \leq 1$)} \\
    \hhline{=||====}

    \textsc{Utilitarian Welfare}
     & P [\Cref{thm:affine_utilitarian}]
     & \multicolumn{3}{>{\centering\arraybackslash}m{\dimexpr0.15\linewidth*3+6\tabcolsep+2\arrayrulewidth\relax}}{NP-complete [\Cref{thm:approval_utilitarian}]}                                                                                                                                                     \\
    \hhline{-||----}

    \textsc{Egalitarian Welfare}
     & \multicolumn{3}{>{\centering\arraybackslash}m{\dimexpr0.15\linewidth*3+6\tabcolsep+2\arrayrulewidth\relax}|}{NP-complete [\Cref{thm:affine_egalitarian,thm:1_edge_big_kappa_egalitarian}]}
     & P [\Cref{thm:small_kappa_egalitarian}]                                                                                                                                                                                                                                                                         \\
    \hhline{-||----}

    \textsc{wPO-Construction}
     & \multicolumn{4}{>{\centering\arraybackslash}m{\dimexpr0.15\linewidth*4+8\tabcolsep+3\arrayrulewidth\relax}}{P [\Cref{thm:construct_wPO}]}                                                                                                                                                                      \\
    \hhline{-||----}

    \textsc{sPO-Construction}
     & P [\Cref{thm:affine_construct_sPO}]
     & \multicolumn{2}{>{\centering\arraybackslash}m{\dimexpr0.15\linewidth*2+4\tabcolsep+\arrayrulewidth\relax}|}{NP-hard [\Cref{thm:1-edge_big_kappa_construct_sPO}]}
     & P [\Cref{cor:small_kappa_construct_sPO}]                                                                                                                                                                                                                                                                       \\
    \hhline{-||----}

    \textsc{wPO-Verification}
     & \multicolumn{3}{>{\centering\arraybackslash}m{\dimexpr0.15\linewidth*3+6\tabcolsep+2\arrayrulewidth\relax}|}{\multirow{2}{=}{\centering coNP-complete [\Cref{thm:check_wsPO}]}}
     & \multirow{2}{=}{\centering P [\Cref{thm:small_kappa_check_wsPO}]}                                                                                                                                                                                                                                              \\
    \hhline{-||~~~~}
    \textsc{sPO-Verification}
     & \multicolumn{3}{>{\centering\arraybackslash}m{\dimexpr0.15\linewidth*3+6\tabcolsep+2\arrayrulewidth\relax}|}{}
     &                                                                                                                                                                                                                                                                                                                \\
    \hhline{-||----}

    \textsc{wCW-Existence}
     & \multirow{2}{=}{\centering NP-hard [\Cref{thm:affine_exists_wsCW}]}
     & \multicolumn{3}{>{\centering\arraybackslash}m{\dimexpr0.15\linewidth*3+6\tabcolsep+2\arrayrulewidth\relax}}{P [\Cref{thm:approval_exists_wCW}]}                                                                                                                                                                \\
    \hhline{-||~---}
    \textsc{sCW-Existence}
     &
     & \multicolumn{3}{>{\centering\arraybackslash}m{\dimexpr0.15\linewidth*3+6\tabcolsep+2\arrayrulewidth\relax}}{NP-hard [\Cref{thm:approval_exists_sCW}]}                                                                                                                                                          \\
    \hhline{-||----}

    \textsc{wCW-Verification}
     & \multicolumn{4}{>{\centering\arraybackslash}m{\dimexpr0.15\linewidth*4+8\tabcolsep+3\arrayrulewidth\relax}}{\multirow{2}{=}{\centering coNP-complete [\Cref{thm:check_wsCW}]}}                                                                                                                                 \\
    \hhline{-||~~~~}
    \textsc{sCW-Verification}
     & \multicolumn{4}{>{\centering\arraybackslash}m{\dimexpr0.15\linewidth*4+8\tabcolsep+3\arrayrulewidth\relax}}{}                                                                                                                                                                                                  \\
  \end{tabularx}

  \caption{Overview of the results. See \Cref{sec:prelim} for definitions. Results for ``affine'' hold for all affine utility functions and results for ``$\kappa$-missing'' for any $\kappa\in \mathbb{N}_0$.}
  \label{tab:overview}

\end{table}

In \Cref{sec:maximal}, we examine a variant of our problem in which we restrict the candidate space to the set of \emph{maximal matchings}, asking whether this restriction is sufficient to regain tractability. This investigation is motivated by two considerations.
First, implementing a non-maximal matching is arguably unreasonable, as such matchings are clearly inefficient.
Second, natural hardness proofs for many of our problems rely on voters specifying non-maximal matchings---intuitively, because this allows voters to constrain only a small part of the decision. To obtain more robust hardness results and rule out that hardness relies on the presence of such voters, we consider the restriction to maximal matchings.\footnote{In this spirit, most hardness proofs in the appendix directly establish hardness for instances restricted to maximal matchings, thereby implying hardness for the general case. All polynomial-time algorithms for the general case naturally apply to the restricted setting as well.}

Utilizing this restriction, we are able to show that for $\kappa$-missing utilities, the size of the candidate space is effectively rendered polynomial for a fixed $\kappa$ (see \Cref{thm:kappa_maximal_poly}), which yields XP algorithms for all considered computational problems. For the remaining problems, except for the verification of weakly Pareto optimal candidates in the one-edge approval model which becomes trivial, we show that reducing the candidate space to maximal matchings does not regain tractability.

\subsection{Related Work}\label{sub:RW}
The problem of selecting a matching based on agents' preferences has been extensively studied in algorithmic game theory and computational social choice. Foundational work on stable and popular matchings includes \citep{manlove2013algorithmics, gale1962college, cseh2017popular}. More recently, adopting a multiwinner-voting perspective, \citet{boehmer2025proportional} examined the problem of selecting $k$ matchings that proportionally reflect agents’ preferences.
In contrast to most of this literature, in which agents are embedded in the graph and have preferences over the partner(s) to whom they are matched, our approach assumes that voters have preferences over the entire matching.
We model the problem as a single-winner voting scenario with an exponentially large candidate space.

In the following, we discuss two previously studied voting scenarios that deal with exponentially large candidate spaces and have been extensively studied in the literature: multiwinner voting and voting in combinatorial domains.
In \emph{multiwinner voting}, given a set of candidates, the goal is to select a winning committee, i.e., a subset of the candidates, of a given size \citep{lackner2023multi,elkind2017properties}. \emph{Voting in combinatorial domains} considers settings in which possible outcomes are defined as assignments to multiple issues such as choosing a common menu or deciding multiple referenda at the same time \citep{lang2016voting}.

Most work on both models treats outcomes as decomposable collections of individual candidates or decisions, typically assuming additive utilities where the utility of an outcome is the sum of the utilities of its components. Within this framework, various fairness concepts, including different notions of proportionality \citep{masavrik2024generalised,peters2021proportional, aziz2017justified, aziz2018complexity, sanchez2017proportional} have been analyzed. In contrast, our work adopts a \emph{single-winner perspective}, treating each outcome as a structured, indivisible object, and analyzing its properties in comparison with all other feasible outcomes. We assume that each voter casts their top choice and derives utility based on a distance notion to that top choice. \citet{freeman2019truthful} study a similar approach in a participatory budgeting setting.
Our perspective motivates the study of classical notions like Pareto optimality \citep{Luc2008} and the Condorcet criterion \citep{fishburn1977condorcet}, which were originally studied in single-winner contexts. We build on results of \citet{aziz2020computing} and  \citet{darmann2013hard}, who study the complexity of testing these properties for multiwinner voting. Our work extends these ideas by investigating these criteria for more complex feasibility constraints (beyond simple cardinality bounds) and a broader range of utility models (see \Cref{app:RW} for a more detailed discussion).

Closely related to our feasibility constraint, which requires candidates to form matchings in an underlying graph, are recent generalizations of multiwinner voting that impose structural constraints on feasible committees, such as matroid, packing, or matching constraints~\citep{mavrov2023fair, fain2018fair, masavrik2024generalised}.
Our model for voting over matchings can be viewed as a variant of matching-constrained multiwinner voting as studied by \citet{fain2018fair}, where candidates are edges of a graph and feasible committees are required to form a matching. Our work differs from these approaches in the considered preference models and solution concepts. While \citet{fain2018fair} adopt a multiwinner perspective with additive utilities over edges, our model allows for richer utility functions and focuses on classical single-winner criteria such as Pareto optimality, Condorcet winners, and social welfare optimization. In this sense, we extend their framework and related studies primarily centered on proportionality toward a more general analysis of preference aggregation under structural feasibility constraints.

\section{Preliminaries}
\label{sec:prelim}
We will now formally define the voting on matchings model. Let $G = (W,E)$ be a graph. We denote by $\mathcal{C}$ the set of all matchings in $G$, i.e., the subsets of $E$ in which each node is incident to at most one edge. In the following, we refer to matchings as \emph{candidates}. We define a set of \emph{voters} $V$, where each voter $v \in V$ has a preferred matching $M^*(v) \in \mathcal{C}$.\footnote{We assume without loss of generality that each edge in $G$ is in the preferred matching of at least one voter, as otherwise we can simply delete the edge.} We study the problem of selecting a single matching from $\mathcal{C}$ based on the voters' preferences.

\subsection{Utility Functions}
\label{subsec:utility}
The utility that a voter $v \in V$ derives from a candidate $M \in \mathcal{C}$ is defined by a \emph{utility function} $u: V \times \mathcal{C} \rightarrow \mathbb{R}$. We assume that voters prefer candidates from which they derive higher utilities and are indifferent between candidates from which they derive the same utility. We assume throughout that a voter's utility for a matching only depends on the matching's overlap with the voter's preferred matching. Our utility functions are inspired by standard scoring vectors from single-winner voting. We define three different utility models.

First, we define affine utility functions. Herein, a voter gains strictly more utility whenever a candidate has a bigger overlap with their ideal candidate. Two candidates provide the same utility if and only if they have the same overlap size. Formally:
\begin{definition}[Affine Utility]
  Let $M$ be a candidate and $v \in V$ be a voter. Let $\alpha_v \in \mathbb{R}_{> 0}$ and $\beta_v \in \mathbb{R}$.
  An \emph{affine utility function} for $v$ is a utility function of the form
  $u(v, M) = \alpha_v |M^*(v) \cap M| + \beta_v$.
\end{definition}

The coefficients of the utility functions may differ across voters and are treated as part of the voter's representation. That is, a voter $v \in V$ is represented by their preferred matching $M^*(v)$ together with parameters $\alpha_v \in \mathbb{R}_{> 0}$ and $\beta_v \in \mathbb{R}$.

\textit{Remark}: As an extension, one could consider the utility model in which agents assign an individual utility to each edge, and the utility for a matching is the sum of the individual edge utilities. This model is similar to the matching-constrained multiwinner voting studied by \citet{fain2018fair}. In our case, moving to this extension does not change the complexity results, as hardness carries over directly, and our polynomial-time results can be adapted. \smallskip

As a second model, we consider two approval-based settings where voters either approve or disapprove a candidate based on a similarity threshold.  A voter derives utility one if they approve a candidate and utility zero otherwise.

We start by defining the \emph{one-edge approval model}, in which a voter approves a candidate whenever the overlap with their preferred matching is non-empty, i.e., there is at least one common edge between the candidate and the preferred matching.\footnote{Fix some threshold $\kappa\in \mathbb{N}_0$. When considering one-edge approval parameterized by $\kappa$, where voters only approve candidates with an overlap of at least $\kappa$ edges, all our NP-hardness results still hold: There is a simple reduction from one-edge approval, as we can add $\kappa-1$ edges that are approved by all voters. As for most of or considered problems NP-hardness already holds for the one-edge case, we decided to omit the parameterized case.}
\begin{definition}[One-Edge Approval Utility]
  Let $M$ be a candidate and $v \in V$ be a voter. The utility that $v$ derives from $M$ is
  \[u(v, M) :=\begin{cases}
      1 & \text{if }M^*(v) \cap M \neq \emptyset \\
      0 & \text{otherwise.}
    \end{cases}\]
\end{definition}

We next define the \emph{$\kappa$-missing approval model}, parameterized by a threshold $\kappa\in \mathbb{N}_0$, which specifies how much a candidate may deviate from a voter's preferred matching: A voter approves a candidate if at most $\kappa$ edges from their preferred matching are missing.
\begin{definition}[$\kappa$-Missing Approval Utility]
  Let $M$ be a candidate and $v \in V$ be a voter. The utility  that $v$ derives from $M$ is
  \[u(v, M) :=\begin{cases}
      1 & \text{if }|M^*(v) \setminus M| \leq \kappa \\
      0 & \text{otherwise.}
    \end{cases}\]
\end{definition}

\subsection{Solution Concepts}
We are interested in a variety of solution concepts.
\subsubsection*{Social Welfare}
We consider two different classical approaches to social welfare.
The \emph{utilitarian} social welfare defines social welfare as the sum of the individual utilities across all voters\footnote{\label{fn:borda_plurality}When considering affine utilities with $\alpha_v = 1$ and $\beta_v = 0$ maximizing the utilitarian welfare directly corresponds to computing a Borda winner, assuming that a voter gives candidates in the same equivalence class the same score. Maximizing utilitarian welfare in the $0$-missing approval setting corresponds to finding a Plurality winner.}, giving rise to the following computational problem:
\begin{definition}[\textsc{Utilitarian Welfare}]
  \label{def:decision_social_score}
  Given candidates $\mathcal{C}$, voters $V$, and a number $k \in \mathbb{R}$, we ask whether there exists a candidate $c \in \mathcal{C}$ whose utilitarian welfare is at least $k$, i.e., does $\max_{c \in \mathcal{C}}\sum_{v \in V} u(v, c) \geq k$ hold?
\end{definition}

In contrast, the \emph{egalitarian} social welfare defines the social welfare as the minimum utility across all voters.
\begin{definition}[\textsc{Egalitarian Welfare}]
  \label{def:decision_egalitarian_welfare}
  Given candidates $\mathcal{C}$, voters $V$, and a number $k \in \mathbb{R}$, we ask whether there exists a candidate $c \in \mathcal{C}$ with egalitarian welfare of at least $k$, i.e., does $\max_{c \in \mathcal{C}}\min_{v \in V} u(v, c) \geq k$ hold?
\end{definition}

\subsubsection*{Pareto Optimality}
A Pareto improvement for a candidate $c$ formalizes the idea that there exists a candidate $c'$ such that no voter would object to switching from $c$ to $c'$. A candidate is called Pareto efficient or Pareto optimal if no such improvement exists \citep{Luc2008}.
We distinguish two types of Pareto improvements, leading to two notions of Pareto optimality.

A candidate $c'\in \mathcal{C}$ is a \textit{strong Pareto improvement} for a candidate $c\in \mathcal{C}$ if all voters derive a strictly higher utility from $c'$ than from $c$. A candidate for which no such improvement exists is called \textit{weakly Pareto optimal}.
A candidate $c'\in \mathcal{C}$ is a \textit{weak Pareto improvement} for a candidate $c\in \mathcal{C}$ if at least one voter derives strictly higher utility from $c'$ than from $c$, while no voter derives a strictly smaller utility from $c'$ than from $c$. A candidate for which no such improvement exists is called \textit{strongly Pareto optimal}.
Note that every strong improvement is also a weak improvement and thus strong Pareto optimality implies weak Pareto optimality. By definition, weak and strong Pareto optimal candidates always exist, but are not necessarily unique. We define the problems of finding and verifying Pareto optimal candidates:

\begin{definition}[\textsc{w(s)PO-Construction}]
  \label{def:find_wsPO}
  Given candidates $\mathcal{C}$ and voters $V$, find a candidate $c \in \mathcal{C}$ that is weakly (strongly) Pareto optimal.
\end{definition}

\begin{definition}[\textsc{w(s)PO-Verification}]
  \label{def:check_wsPO}
  Given candidates $\mathcal{C}$, voters $V$, and a candidate $c \in \mathcal{C}$, we ask whether $c$ is weakly (strongly) Pareto optimal.
\end{definition}

\subsubsection*{Condorcet Winners}
A \emph{Condorcet winner} is a candidate that defeats every other candidate in a pairwise majority comparison. In such a comparison, we count the number of voters who strictly prefer each candidate, ignoring those who are indifferent: Formally, a candidate $c\in \mathcal{C}$ wins a pairwise comparison against another candidate $c'\in \mathcal{C}$ if more voters prefer $c$ to $c'$ than the other way around, and ties if the numbers are equal. We distinguish between \emph{weak Condorcet winners}, who win or tie the pairwise comparison against every other candidate, and \emph{strong Condorcet winners}, who strictly win all comparisons \citep{fishburn1977condorcet}.

Unlike Pareto optimal candidates, Condorcet winners may not exist. We define the problems of deciding if a Condorcet winner exists and verifying if a given candidate is a Condorcet winner.

\begin{definition}[\textsc{w(s)CW-Existence}]
  \label{def:exists_sCW}
  Given candidates $\mathcal{C}$, voters $V$, we ask whether a weak (strong) Condorcet winner exists.
\end{definition}

\begin{definition}[\textsc{w(s)CW-Verification}]
  \label{def:check_sCW}
  Given candidates $\mathcal{C}$, voters $V$, and a candidate $c \in \mathcal{C}$, we ask whether $c$ is a weak (strong) Condorcet winner.
\end{definition}

\section{Social Welfare}
\label{sec:socialwelfare}
In this section, we study the problems \textsc{Utilitarian Welfare} and \textsc{Egalitarian Welfare}. We demonstrate that their computational complexity decisively depends on the underlying utility model. Specifically, we provide polynomial-time algorithms for \textsc{Utilitarian Welfare} under affine utilities and for \textsc{Egalitarian Welfare} under $0$-missing and $1$-missing approval. For all remaining cases, we establish NP-completeness.

\subsection{Affine Utility Functions}
In the affine setting, the utility that a voter $v$ gains from a single edge is either $\alpha_v$ if the edge is part of the voter's preferred matching or zero otherwise. Thus, the contribution of an edge to a candidate's utilitarian welfare depends solely on the voters' preferred matchings and their $\alpha$ values, but not on the remaining selected edges. This property allows us to maximize utilitarian welfare in polynomial time by finding a maximum-weight matching.
\begin{theorem}
  \label{thm:affine_utilitarian}
  \textsc{Utilitarian Welfare} under affine utility functions is solvable in polynomial time.
\end{theorem}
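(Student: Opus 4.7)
The plan is to show that the utilitarian welfare of a matching decomposes as a linear function of its edges, and then invoke a polynomial-time algorithm for maximum weight matching.

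First, I would expand the objective. By definition,
\[
  \sum_{v \in V} u(v, M) \;=\; \sum_{v \in V} \alpha_v\, |M^*(v) \cap M| \;+\; \sum_{v \in V} \beta_v.
\]
The second sum does not depend on $M$ and can be dropped when maximizing. For the first sum, I would swap the order of summation: for each edge $e \in E$, define the weight
\[
  w(e) \;:=\; \sum_{v \in V : e \in M^*(v)} \alpha_v,
\]
so that $\sum_{v \in V} \alpha_v |M^*(v) \cap M| = \sum_{e \in M} w(e)$. Because each $\alpha_v > 0$, all weights $w(e)$ are nonnegative (and computable in polynomial time from the input).

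With this reformulation, maximizing utilitarian welfare over $\mathcal{C}$ becomes exactly the problem of finding a maximum weight matching in $G$ with edge weights $w(e)$. I would then conclude by invoking Edmonds' blossom algorithm, which solves maximum weight matching on general graphs in polynomial time, and comparing the resulting welfare against the threshold $k$ in the \textsc{Utilitarian Welfare} decision problem.

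There is no serious obstacle here; the only point that needs care is the observation that under the affine model the utility cleanly separates into an edge-additive term and a constant, which allows the reduction to a standard combinatorial optimization problem. This linearity will in fact break down for both approval-based models, foreshadowing the hardness results stated for those utility models.
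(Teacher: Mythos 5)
Your proposal is correct and matches the paper's proof essentially verbatim: both decompose the welfare into an edge-additive term plus a voter-dependent constant, define the edge weight $w(e)=\sum_{v\in V: e\in M^*(v)}\alpha_v$, and invoke Edmonds' maximum weight matching algorithm before comparing against the threshold $k$. No gaps.
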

\begin{proof}
  We test whether a matching with utilitarian welfare of at least $k$ exists by finding a matching with maximum utilitarian welfare:
  Let $G = (W, E)$ be a graph, and let $V$ be a set of voters. For an edge $e \in E$ and voter $v \in V$, let the indicator $\mathds{I}(e, v)$ be one if and only if $e \in M^*(v)$. We introduce a weight function on the edges of~$G$:
  $w: E \rightarrow \mathbb{R} \text{ with } w(e) = \sum_{v \in V} \mathds{I}(e, v) \alpha_v.$
  The utilitarian welfare of a matching $M$ is
  \[
    \sum_{v \in V} u(v, M) = \sum_{v \in V} \alpha_v |M^*(v) \cap M| + \sum_{v \in V} \beta_v.
  \]
  Note that the second sum is independent of the matching $M$. Thus, to find a matching with maximum utilitarian welfare, we only need to maximize the first sum. Using the definition of the weight function, we obtain:
  \[
    \sum_{v \in V} \alpha_v |M^*(v) \cap M| = \sum_{v \in V}\sum_{e \in M} \alpha_v \mathds{I}(e,v) = \sum_{e \in M} w(e).
  \]
  Therefore, a matching that maximizes utilitarian welfare corresponds to a maximum-weight matching under $w$. Such a matching can be found using, for example, Edmonds' algorithm~\citep{edmonds1965maximum}, which runs in $\mathcal{O}(|W|^2|E|)$.
\end{proof}

In contrast, the utility an edge contributes to a matching's egalitarian welfare depends on the other selected edges, as the other edges influence whether a voter's utility for a matching exceeds $k$. It turns out that this lack of ``decomposability'' renders the problem NP-hard:
\begin{restatable}[$\bigstar$]{theorem}{affineegalitarian}
  \label{thm:affine_egalitarian}
  \textsc{Egalitarian Welfare} under affine utility functions is NP-complete.
\end{restatable}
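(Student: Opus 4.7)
The plan is to first dispense with membership in NP---a candidate matching is a polynomial-size certificate, and the egalitarian welfare $\min_{v\in V}(\alpha_v|M^*(v)\cap M|+\beta_v)$ can be evaluated and compared with $k$ in polynomial time---and then to prove NP-hardness via a reduction from 3-SAT. To keep the encoding as transparent as possible, I would fix $\alpha_v = 1$ and $\beta_v = 0$ for every voter and set the target to $k = 1$, so that the question reduces to: is there a matching $M$ with $|M^*(v)\cap M|\geq 1$ for every $v\in V$?

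The construction would use a length-$2$ path gadget per variable: for each variable $x_i$ I would introduce fresh vertices $a_i,b_i,c_i$ and edges $e_i^T=\{a_i,b_i\}$ and $e_i^F=\{b_i,c_i\}$. Since these two edges share $b_i$, any matching can contain at most one of them. A variable voter $v_i^{\text{var}}$ with preferred matching $\{e_i^T,e_i^F\}$ then forces the egalitarian bound to select exactly one of the two, naturally interpreted as a truth value for $x_i$. For each clause $C_j$ with literals $\ell_{j,1},\ell_{j,2},\ell_{j,3}$ I would add a clause voter $v_j^{\text{cl}}$ whose preferred matching consists of the three corresponding literal-edges; requiring overlap at least one then forces one of the literals to be set to true. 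Because the variable gadgets live on pairwise disjoint vertex sets, each such triple is automatically a valid matching in the constructed graph.

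Correctness is a direct two-way check. Given a satisfying assignment of $\phi$, selecting $e_i^T$ or $e_i^F$ according to $x_i$ yields a matching where every variable voter has overlap exactly $1$ and every clause voter has overlap at least $1$. Conversely, any matching meeting the egalitarian bound must pick exactly one edge of each variable gadget, and reading off these choices produces a satisfying assignment. As a byproduct, the constructed graph is bipartite (each gadget is a path) and the resulting matching is maximal (the unused endpoint of each gadget has no other incident edge), which aligns the reduction with the robustness conditions discussed in \Cref{sec:maximal}.

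I do not foresee a serious obstacle. The only points requiring care are ensuring that every voter's preferred set is itself a matching---automatic thanks to vertex-disjoint gadgets---and verifying that the at-most-one constraint enforced by the gadget combines with the variable voter's at-least-one overlap requirement to force exactly one selected edge per gadget, so that the extracted truth assignment is well defined.
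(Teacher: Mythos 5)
Your reduction idea is essentially sound, but as written it contains a genuine flaw: the variable voter $v_i^{\text{var}}$ has preferred set $\{e_i^T, e_i^F\}$, and these two edges share the vertex $b_i$, so this set is \emph{not} a matching. The model requires every voter's preferred matching to satisfy $M^*(v) \in \mathcal{C}$, i.e., to be a feasible matching in $G$, so such a voter is simply not allowed. Your own safeguard (``every voter's preferred set is itself a matching --- automatic thanks to vertex-disjoint gadgets'') fails precisely here, because the two edges of a variable voter live \emph{inside} one gadget and collide at $b_i$. Moreover, even if you could express it, the intended effect is unachievable in the affine model with $k=1$: a single voter can only demand overlap at least one, not ``exactly one,'' so no voter can force a truth value to be chosen for every variable. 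The repair is fortunately easy: drop the variable voters entirely. The shared vertex $b_i$ already guarantees that at most one of $e_i^T, e_i^F$ is selected, and in the backward direction you can simply set $x_i$ to \textit{true} if and only if $e_i^T \in M$; if a clause voter is covered via $e_i^F$, then $e_i^T \notin M$ by the matching constraint, so the clause is still satisfied. You should also discard clauses containing both literals of one variable (otherwise that clause voter's preferred set is again not a matching), as the paper does in a footnote.

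Two further remarks on the comparison with the paper. First, your claim that the construction ``aligns with the robustness conditions discussed in \Cref{sec:maximal}'' overstates what the two-edge path gadget delivers: your clause voters' preferred matchings (three edges spread over $2n$ gadget edges) are far from maximal, whereas the paper's considerably heavier construction --- star gadgets with filler edges and $n-2$ voters per clause, each approving exactly one edge in \emph{every} gadget --- is designed exactly so that all preferred matchings are maximal, which is what lets the hardness carry over to the maximal-matchings restriction. Second, once the invalid variable voters are removed, your (corrected) reduction is a legitimately simpler argument for the theorem as stated in the general setting; the paper's extra machinery buys the stronger statement for \Cref{sec:maximal}, not correctness of the basic NP-hardness claim.
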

\begin{proofsketch}
  To establish this result, we reduce from \textsc{$3$-Sat} by constructing a graph composed of gadgets encoding the variables of the \textsc{$3$-Sat} instance, with matchings corresponding to truth assignments. The voters represent the clauses of the instance, where each voter derives positive utility from a matching if and only if the corresponding assignment satisfies the clause. Hence, a candidate with positive egalitarian welfare corresponds directly to a fulfilling assignment of the \textsc{$3$-Sat} instance.
\end{proofsketch}

\subsection{Approval-based Utility Functions}
In the approval-based setting, the interpretation of utilitarian and egalitarian welfare changes slightly, since the only possible utilities a matching can provide for a voter are zero and one. Thus, a candidate's utilitarian welfare is simply the number of voters that approve the candidate, while the egalitarian welfare is one if and only if \emph{all} voters approve the candidate. We show that \textsc{Egalitarian Welfare} can be solved in polynomial time under $0$-missing approval and $1$-missing approval, but remains NP-complete for all other utility models. The hardness of \textsc{Utilitarian Welfare} follows directly when \textsc{Egalitarian Welfare} is hard. We show that \textsc{Utilitarian Welfare} is NP-complete in the other cases as well.

Finding a candidate with positive egalitarian welfare corresponds to finding a candidate approved by all voters. Using a similar idea as in \Cref{thm:affine_egalitarian}, we connect \textsc{Egalitarian welfare} under $\kappa$-missing approval to \textsc{$\kappa+1$-Sat}. Mirroring the complexity jump between \textsc{$2$-Sat} and \textsc{$3$-Sat}, we show that \textsc{Egalitarian welfare} is solvable in polynomial time for $\kappa \in \{0,1\}$, but becomes NP-complete for $\kappa > 1$. We start by establishing the positive result:
\begin{restatable}[$\bigstar$]{theorem}{smallegalitarian}
  \label{thm:small_kappa_egalitarian}
  \textsc{Egalitarian Welfare} under $\kappa$-missing approval for $\kappa \leq 1$ is solvable in polynomial time.
\end{restatable}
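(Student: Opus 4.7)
The plan is to reduce both subcases ($\kappa = 0$ and $\kappa = 1$) to \textsc{$2$-Sat}, which is solvable in linear time. First, I would observe that because the approval utilities lie in $\{0,1\}$, the egalitarian welfare of any candidate is either $0$ or $1$, so the only non-trivial threshold is $k = 1$ (for $k \leq 0$ the empty matching suffices, for $k > 1$ no matching works). Hence the problem reduces to deciding whether there exists a matching $M \in \mathcal{C}$ with $|M^*(v) \setminus M| \leq \kappa$ for \emph{every} voter $v \in V$.

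Next I would build a 2-CNF formula encoding this question. Introduce a Boolean variable $x_e$ for every edge $e$ of $G$, interpreted as $x_e = \text{true}$ iff $e \in M$. The matching property is captured by adding, for every pair of edges $f, f' \in E$ sharing an endpoint, the clause $\neg x_f \lor \neg x_{f'}$. For the approval constraints I split by $\kappa$: when $\kappa = 0$, voter $v$ is approving iff $M^*(v) \subseteq M$, which I model by the unit clauses $x_e$ for every $e \in M^*(v)$; when $\kappa = 1$, the condition $|M^*(v) \setminus M| \leq 1$ is equivalent to requiring that for every unordered pair $\{e_i, e_j\} \subseteq M^*(v)$ at least one of $e_i, e_j$ lies in $M$, which I encode by the clause $x_{e_i} \lor x_{e_j}$. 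Both formulas have size polynomial in $|V|$ and $|E|$.

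The main obstacle, and the only step that needs careful justification, is the pairwise equivalence underlying the $\kappa = 1$ encoding: if two distinct edges of $M^*(v)$ were both absent from $M$, then $|M^*(v) \setminus M| \geq 2$, violating approval; conversely, if at most one edge of $M^*(v)$ is missing, every unordered pair inside $M^*(v)$ trivially contains an edge of $M$, so all clauses hold. Once this equivalence is nailed down, I would invoke a standard polynomial-time \textsc{$2$-Sat} algorithm (e.g., the implication-graph / strongly-connected-components procedure) on the constructed formula. A satisfying assignment yields the desired matching $M = \{e : x_e = \text{true}\}$, whose feasibility as a matching and approval by every voter are guaranteed by construction; unsatisfiability certifies that no candidate attains egalitarian welfare~$1$.
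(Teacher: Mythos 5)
Your proof is correct and follows essentially the same approach as the paper: for $\kappa=1$ the paper uses exactly your \textsc{$2$-Sat} encoding (negative clauses for adjacent edge pairs, positive clauses $x_e \vee x_{\hat e}$ for every pair of edges in a voter's preferred matching), with the same pairwise-equivalence argument. The only cosmetic difference is that for $\kappa=0$ the paper directly checks whether $\bigcup_{v \in V} M^*(v)$ is a matching, whereas you fold this case into the \textsc{$2$-Sat} formula via unit clauses --- both are valid.
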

\begin{proof}[Proof ($\kappa = 1$)]
  Let $G = (W,E)$ be the input graph and $V$ the set of voters. We construct a corresponding \textsc{$2$-Sat} instance with $|E|$ variables and $\mathcal{O}(|V| \cdot |E|^2)$ clauses.
  For each edge $e \in E$, we introduce a variable $x_e$. We add the following types of clauses:
  \begin{itemize}
    \item For every edge $e \in E$ and each edge $\hat e$ adjacent to $e$, we add the clause $(\overline{x_e} \vee \overline{x_{\hat e}})$, ensuring the matching constraint. Formally, let $C_M := \{(\overline{x_e} \vee \overline{x_{\hat e}}) \mid e \in E, \hat e \in N(e)\}$, where $N(e)$ are the edges that share an endpoint with $e$.
    \item For each voter $v \in V$, we add a clause for every pair of distinct edges in their preferred matching. That is, for $e \neq\hat e \in M^*(v)$, we add clause $(x_e \vee x_{\hat e})$. Formally, let $C_v := \{(x_e \vee x_{\hat e}) \mid e\neq\hat e \in M^*(v)\}$ and $C_{V} := \bigcup_{v \in V} C_v$.
  \end{itemize}
  The complete set of clauses for the \textsc{$2$-Sat} formula is $C := C_M \cup C_{V}$.
  We solve this \textsc{$2$-Sat} instance using a linear-time algorithm, such as the one by \citet{linear-time-algorithm-boolean-aspvall}.
  If the \textsc{$2$-Sat} instance is satisfiable, we return Yes; otherwise, No.

  We now show that a candidate with egalitarian welfare one corresponds to a satisfying assignment, and vice versa.

  $[\Rightarrow]$
  Assume there exists a candidate $M$ with an egalitarian welfare of one. That is, $M$ satisfies the matching constraint, and for every voter $v \in V$, at most one edge of $M^*(v)$ is missing from $M$. Let $x_e = \textit{true}$ if and only if $e \in M$. Then:
  \begin{itemize}
    \item Each clause $(\overline{x_e} \vee \overline{x_{\hat e}}) \in C_M$ is satisfied because $M$ contains at most one of any pair of adjacent edges.
    \item For each voter $v$, every clause $(x_e \vee x_{\hat e}) \in C_v$ is satisfied because at most one edge from $M^*(v)$ is excluded from $M$.
  \end{itemize}

  $[\Leftarrow]$
  Assume the \textsc{$2$-Sat} instance is satisfiable. Let $M := \{e \mid x_e \text{ is assigned \textit{true}}\}$. Then:
  \begin{itemize}
    \item $M$ satisfies the matching constraint, since every pair of adjacent edges $e$ and $\hat e$ is not simultaneously included, as otherwise $(\overline{x_e} \vee \overline{x_{\hat e}})$ would not be satisfied.
    \item For each voter $v \in V$, the clauses $C_v$ ensure that at most one edge from $M^*(v)$ is excluded from $M$: If there were two edges $e\neq\hat{e}\in M^*(v)\setminus M$, the clause $(x_e\vee x_{\hat e})$ would not be satisfied. Therefore, $v$ approves $M$, and the candidate has egalitarian welfare of one.
  \end{itemize} \vspace{-0.5cm}
\end{proof}

In contrast, for $\kappa$-missing approval with $\kappa > 1$ and for one-edge approval, we can establish hardness using a similar idea as for \textsc{Egalitarian Welfare} under affine utilities.
\begin{restatable}[$\bigstar$]{theorem}{oneedgebigegalitarian}
  \label{thm:1_edge_big_kappa_egalitarian}
  \textsc{Egalitarian Welfare} under one-edge approval and under $\kappa$-missing approval with $\kappa > 1$ is NP-complete.
\end{restatable}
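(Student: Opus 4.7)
The plan is to establish NP-completeness through reductions from $(\kappa+1)$-\textsc{Sat} in the $\kappa$-missing case (which is NP-hard for every fixed $\kappa \geq 2$) and from \textsc{3-Sat} in the one-edge approval case. Both reductions share the same blueprint, and membership in NP is immediate, since a matching serves as a polynomial-size certificate whose egalitarian welfare can be checked in polynomial time.

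Given a SAT instance over variables $x_1, \ldots, x_n$ and clauses $C_1, \ldots, C_m$, I would construct a graph whose connected components are small variable gadgets: for each variable $x_i$, introduce two edges $t_i$ and $f_i$ sharing exactly one endpoint, so that any matching contains at most one of them. This naturally encodes $x_i$'s truth value. For each clause $C_j$, I would add a single voter $v_j$ whose preferred matching consists of the literal edges of $C_j$ (using $t_i$ for positive and $f_i$ for negative occurrences); since distinct variable gadgets use disjoint vertex sets and clauses are WLOG over distinct variables, this preferred matching is itself a valid matching of the constructed graph.

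In the one-edge approval setting with clauses of size $3$, $v_j$ approves a matching $M$ iff at least one of the three literal edges lies in $M$, which directly mirrors clause satisfaction. For $\kappa$-missing approval, clauses of size $\kappa+1$ yield the same behavior: the approval condition $|M^*(v_j) \setminus M| \leq \kappa$ reduces to requiring that at least one literal edge of $C_j$ is in $M$. The two reductions are therefore essentially identical up to the arity of the input formula.

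For correctness, a satisfying assignment $\tau$ maps to $M_\tau := \{t_i : \tau(x_i) = T\} \cup \{f_i : \tau(x_i) = F\}$, which makes every clause voter approve and thus achieves egalitarian welfare $1$. Conversely, any matching $M$ with egalitarian welfare $1$ induces a (possibly partial) assignment by setting $x_i = T$ whenever $t_i \in M$ and $x_i = F$ whenever $f_i \in M$; the approval of each $v_j$ guarantees that at least one of its literals is satisfied, and any completion of unset variables remains satisfying. The main obstacle I anticipate is largely a bookkeeping one: ensuring that the preferred matchings are indeed valid matchings of $G$ and verifying that the reduction works uniformly for every fixed $\kappa > 1$, since the arity $\kappa+1$ of the source SAT problem scales with the parameter.
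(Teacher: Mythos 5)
Your proposal is correct, and for the $\kappa$-missing case it is essentially verbatim the paper's reduction: a two-edge gadget per variable, one voter per clause whose preferred matching consists of the $\kappa+1$ literal edges, so that approval under $\kappa$-missing utilities is exactly ``at least one literal edge selected.'' (Like the paper, you implicitly need every clause to contain exactly $\kappa+1$ literals over distinct variables, since a shorter preferred matching would be approved by every matching; you flag this and it is handled by standard padding, so no gap.) For one-edge approval the papers takes a slightly heavier route: it reuses the construction from the affine egalitarian hardness proof, with large star gadgets, $(n-2)$ voters per clause, and filler edges, precisely so that every voter's preferred matching is a \emph{maximal} matching. That extra machinery is not needed for the theorem as stated---your lightweight two-edge gadget with one voter per clause proves NP-completeness of \textsc{Egalitarian Welfare} under one-edge approval just as well---but it is what lets the paper later claim (in the maximal-matchings section and Table~2) that hardness persists when the candidate space and preferred matchings are restricted to maximal matchings. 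So your argument establishes the stated theorem correctly but would have to be augmented (e.g., by the paper's star construction) to yield that stronger, maximality-robust version.
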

\begin{proofsketch}
  The reduction idea is similar to \Cref{thm:affine_egalitarian}. For $\kappa$-missing approval, we reduce from \textsc{$\kappa+1$-Sat}, and for one-edge approval, from \textsc{$3$-Sat}. We construct gadgets corresponding to the variables of the \textsc{Sat} instance and voters corresponding to the clauses. A voter approves a candidate if at least one of the edges of the respective clause is selected. Therefore, a matching with egalitarian welfare of one directly translates to a fulfilling assignment of the \textsc{Sat} instance.
\end{proofsketch}

For \textsc{Utilitarian Welfare}, NP-hardness for one-edge and $\kappa$-missing approval with $\kappa > 1$ follows directly from the NP-hardness of \textsc{Egalitarian Welfare} established in \Cref{thm:1_edge_big_kappa_egalitarian}: In the approval-based setting, a candidate has an egalitarian welfare of one if and only if it has a utilitarian welfare of $|V|$. In contrast, for $0$-missing approval and $1$-missing approval, we cannot extend the polynomial-time solvability of \textsc{Egalitarian Welfare} from \Cref{thm:small_kappa_egalitarian} to \textsc{Utilitarian Welfare}.\footnote{The \textsc{$2$-Sat} algorithm from before would become a variant of \textsc{MAX $2$-Sat} which is NP-complete \citep{garey1974some}.} We show NP-hardness via a reduction from \textsc{Independent Set}.
\begin{restatable}[$\bigstar$]{theorem}{approvalutilitarian}
  \label{thm:approval_utilitarian}
  For any $\kappa \in \mathbb{N}_0$, \textsc{Utilitarian Welfare} is NP-complete under $\kappa$-missing approval and under one-edge approval.
\end{restatable}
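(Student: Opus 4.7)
The plan is to establish NP-completeness. Membership in NP is immediate, since given a candidate matching $M$ one can verify it and sum the voters' utilities in polynomial time. For NP-hardness, I split according to the utility model. When the model is one-edge approval, or $\kappa$-missing approval with $\kappa > 1$, hardness follows directly from \Cref{thm:1_edge_big_kappa_egalitarian}: in any approval-based model, a candidate has utilitarian welfare $|V|$ if and only if it has egalitarian welfare $1$, so a Yes/No oracle for \textsc{Utilitarian Welfare} at threshold $|V|$ decides \textsc{Egalitarian Welfare} at threshold $1$. Thus I focus the new reduction on the cases $\kappa \in \{0,1\}$ of $\kappa$-missing approval, for which \Cref{thm:small_kappa_egalitarian} precludes this direct argument.

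For these, the plan is to reduce from \textsc{Independent Set}. Given an instance $(H=(V_H,E_H),k)$ with vertex-neighborhoods $N_H(u)$, I build a graph $G$ as follows: for each edge $\{u,u'\}\in E_H$ I place $2\kappa+1$ parallel copies of a two-edge ``conflict star'' in $G$, each consisting of a fresh center $c^{i}_{uu'}$ and two fresh leaves $\ell^{u,i}_{uu'},\ell^{u',i}_{uu'}$ joined by edges $\{c^{i}_{uu'},\ell^{u,i}_{uu'}\}$ and $\{c^{i}_{uu'},\ell^{u',i}_{uu'}\}$, for $i\in[2\kappa+1]$. At most one of these two center-incident edges can appear in any matching. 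For each $u\in V_H$ I introduce a voter $v_u$ whose preferred matching aggregates the $u$-side edge from every star in which $u$ participates:
\[
M^*(v_u)=\bigl\{\{c^{i}_{uu'},\ell^{u,i}_{uu'}\}:u'\in N_H(u),\ i\in[2\kappa+1]\bigr\}.
\]
Since the leaves are all distinct fresh vertices, this set is itself a valid matching.

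I then intend to show that the maximum utilitarian welfare equals the size of a maximum independent set in $H$ (with the trivial additive contribution from isolated vertices, whose voters have empty preferred matchings and so are always approved). For the easy direction, given an independent set $S\subseteq V_H$, the matching $M=\bigcup_{u\in S}M^*(v_u)$ is well-defined: two edges of $M$ sharing a center $c^{i}_{uu'}$ would force both $u,u'\in S$ with $\{u,u'\}\in E_H$, contradicting independence; and each $v_u$ with $u\in S$ then has zero missing preferred edges, so is approved. For the converse, suppose a matching $M$ approves both $v_u$ and $v_{u'}$ for some $\{u,u'\}\in E_H$. In each of the $2\kappa+1$ copies of the $\{u,u'\}$-gadget, only one of the two center-incident edges lies in $M$, so at least one preferred edge (of $v_u$ or $v_{u'}$) is missing per copy; summing yields at least $2\kappa+1$ missing preferred edges across the two voters, and by pigeonhole one of them misses at least $\kappa+1$, violating the $\kappa$-missing threshold. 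Consequently the set of approved non-isolated voters is an independent set in $H$, matching the upper bound and closing the reduction.

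The only bookkeeping to check carefully is that the pigeonhole counts remain tight at the endpoints $\kappa=0$ (three copies collapse to one, and ``missing at least $\kappa+1=1$ edge'' is exactly what the single-copy argument gives) and $\kappa=1$, and that $M^*(v_u)$ is genuinely a matching---both follow immediately from the leaves being fresh and the stars being vertex-disjoint across distinct edges of $H$. I do not anticipate a deeper technical obstacle beyond this verification.
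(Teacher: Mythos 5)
Your proposal is correct. The overall decomposition is the same as the paper's: NP membership is immediate; for one-edge approval and $\kappa$-missing approval with $\kappa>1$ you reduce from \textsc{Egalitarian Welfare} (\Cref{thm:1_edge_big_kappa_egalitarian}) by setting the threshold to $|V|$, exactly as the paper does; and for the remaining cases you reduce from \textsc{Independent Set}. For $\kappa=0$ your construction coincides with the paper's (one two-edge conflict gadget per edge of $H$, one voter per vertex). The genuine difference is in how you handle $\kappa\geq 1$: the paper keeps a single copy of each conflict gadget and adds a separate control gadget (a three-edge path) together with $|V|+1$ control voters and the blocking edge $e_c^2$ in every original voter's preferred matching, so that any welfare-maximizing matching must drop $e_c^2$, which ``uses up'' each original voter's one allowed missing edge and forces the $\kappa=0$ behaviour, at the cost of shifting the threshold to $|V|+1+\mu$. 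You instead replicate each conflict gadget $2\kappa+1$ times and argue by pigeonhole that two voters corresponding to adjacent vertices of $H$ cannot both stay within the $\kappa$-missing budget; this needs no extra voters, keeps the threshold at $\mu$ (modulo the isolated-vertex bookkeeping you note), and is uniform in $\kappa$, so it would in fact establish hardness for every $\kappa\in\mathbb{N}_0$ with a single construction, making the Egalitarian-based argument for $\kappa>1$ optional. The trade-off is a factor-$(2\kappa+1)$ blow-up of the graph versus the paper's constant-size control gadget, which is immaterial for the fixed values $\kappa\in\{0,1\}$ you target. Your verification points (freshness of leaves making $M^*(v_u)$ a matching, vertex-disjointness of gadgets, the pigeonhole count $\lceil(2\kappa+1)/2\rceil=\kappa+1$) are exactly the right ones and all check out.
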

\begin{proofsketch}
  For $0$-missing approval, we reduce from \textsc{Independent Set}. We construct an instance such that a candidate approved by $k$ voters corresponds to an independent set of size $k$. Specifically, we add a voter for each node in the input graph $\tilde G$. For each edge in $\tilde{G}$, we add a gadget consisting of two connected edges, each corresponding to one endpoint of the original edge in $\tilde{G}$. The gadgets are not connected to each other and pairwise disjoint. A voter corresponding to node $i$ approves all edges that correspond to node $i$. Under $0$-missing approval, a voter approves a matching if and only if all of their corresponding edges are included. For the correctness, observe that two voters corresponding to adjacent nodes in $\tilde{G}$ cannot both approve the same matching, since two edges from their preferred matchings share an endpoint.

  For $1$-missing approval, we add a separate control gadget and additional voters such that any candidate including a specific ``blocking'' edge cannot reach the required utility threshold. This enforces behavior equivalent to the $\kappa = 0$ case.
\end{proofsketch}

\section{Pareto Optimality}
\label{sec:pareto}
We now turn to  Pareto optimality. Here, only voters' ordinal preferences over candidates matter. We show that for affine utility functions, the ordinal preference relation is independent of the concrete coefficients of the utility function.
\begin{restatable}[$\bigstar$]{lemma}{lemmascoreequivalence}
  \label{lem:affine_score_equivalence}
  All affine utility functions agree on whether a voter $v \in V$ prefers candidate $M$ over $M'$.
\end{restatable}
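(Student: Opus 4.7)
The plan is to show that for any voter $v$, strict preference and indifference among candidates depend only on the integer quantities $|M^*(v)\cap M|$ and $|M^*(v)\cap M'|$, not on the specific coefficients $\alpha_v>0$ and $\beta_v$ appearing in an affine utility function. Since these overlap sizes are determined purely by $M^*(v)$, $M$, and $M'$, any two affine utility functions must induce the same ordinal comparison between $M$ and $M'$.

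The key computation is simply to write out the difference of utilities: for an arbitrary affine utility function with parameters $\alpha_v>0$ and $\beta_v\in\mathbb{R}$,
\[
u(v,M)-u(v,M') \;=\; \alpha_v\bigl(|M^*(v)\cap M|-|M^*(v)\cap M'|\bigr),
\]
because the additive constants $\beta_v$ cancel. Since $\alpha_v$ is strictly positive, the sign of the right-hand side is determined entirely by the sign of the integer $|M^*(v)\cap M|-|M^*(v)\cap M'|$. From this I conclude in three cases: (i) $v$ strictly prefers $M$ to $M'$ iff $|M^*(v)\cap M|>|M^*(v)\cap M'|$; (ii) strictly prefers $M'$ to $M$ iff the reverse inequality holds; and (iii) is indifferent iff the overlaps are equal. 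None of these characterizations mentions $\alpha_v$ or $\beta_v$, so any two affine utility functions agree on the ordinal comparison.

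There is no real obstacle here: the result is essentially the observation that an increasing affine transformation preserves the ordering of real numbers. The only thing to be careful about is to cover both the strict-preference and the indifference cases explicitly, so that the conclusion is genuinely about the entire ordinal preference relation and not just about which of two candidates is weakly preferred. I would keep the proof to a few lines, stating the cancellation identity and then reading off the three cases.
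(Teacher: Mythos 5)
Your proposal is correct and follows essentially the same argument as the paper: both reduce the comparison to the sign of $|M^*(v)\cap M|-|M^*(v)\cap M'|$, using that $\beta_v$ is irrelevant and $\alpha_v>0$ makes utility strictly increasing in the overlap size, covering both strict preference and indifference. The only cosmetic difference is that you phrase it via the cancellation identity for the utility difference, while the paper does a short case analysis on the overlap sizes; the content is identical.
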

This property holds as voters prefer matchings with larger overlap with their preferred matching under all affine utility functions, and are indifferent between two matchings with the same overlap size.
\Cref{lem:affine_score_equivalence} allows us to disregard the specific coefficients of the utility function in the affine setting. Throughout this section, we assume for each voter $v \in V$ that $\alpha_v = 1$ and $\beta_v = 0$.

\subsection{Verification}
Recall that our goal is to determine whether a given candidate is Pareto optimal. It turns out that verifying weak and strong Pareto optimality is solvable in polynomial time for $0$-missing  and $1$-missing approval, but coNP-complete for all other utility models.

To show hardness, for affine utilities, we can reduce from verifying the Pareto optimality of a given committee in an approval-based multiwinner election, which was shown to be NP-hard by  \citet{aziz2020computing}. In the approval-based settings, the question of whether a candidate is Pareto optimal is closely related to the question of whether there is a matching that is approved by every voter, i.e., that has positive egalitarian welfare. Utilizing this connection, we show coNP-completeness of \textsc{w/sPO-Verification} for one-edge approval and $\kappa$-missing approval with $\kappa > 1$. Due to technical reasons, we reduce from \textsc{$\kappa+1$-Sat} for \textsc{sPO-Verification} under $\kappa$-missing approval and from the corresponding \textsc{Egalitarian Welfare} problem in the other cases:
\begin{restatable}[$\bigstar$]{theorem}{verificationhardness}
  \label{thm:check_wsPO}
  \textsc{wPO-Verification} and \textsc{sPO-Verification} under affine utility functions, one-edge approval and $\kappa$-missing approval with $\kappa > 1$ are coNP-complete.
\end{restatable}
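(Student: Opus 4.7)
\begin{proofsketch}[Proof plan]
Membership in coNP is immediate: a candidate $c' \in \mathcal{C}$ certifies non-optimality, and verifying whether $c'$ is a strong (respectively weak) Pareto improvement of $c$ reduces to computing $u(v,c)$ and $u(v,c')$ for every $v \in V$ and comparing them voter by voter. For coNP-hardness, the three utility models require distinct reductions.

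For affine utilities, the plan is to reduce from the coNP-hard problem of verifying Pareto optimality of a size-$k$ committee in an approval-based multiwinner election, shown hard by \citet{aziz2020computing}. Given candidate set $C$, approval sets $\{A_v\}_v$, committee size $k$, and a target committee $W$, I would build a bipartite graph with $k$ ``slot'' vertices fully connected to $|C|$ ``candidate'' vertices, so that size-$k$ matchings biject with size-$k$ committees, and set each voter's preferred matching to be the slot-edges incident to their approved candidates. Under affine utilities, $|M^*(v) \cap M|$ then equals the number of voter $v$'s approved candidates selected by $M$, mirroring the multiwinner semantics. The main technical challenge is ensuring that only matchings of size $k$ can act as plausible Pareto improvements of $M_W$, which I would enforce by adding padding voters whose preferred matchings make under-full matchings uniformly dominated while keeping all size-$k$ matchings incomparable on the padding.

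For \textsc{wPO-Verification} in the approval-based models, observe that under $\{0,1\}$-valued utilities, a strong Pareto improvement of $c$ forces every voter to jump from utility $0$ to utility $1$. Hence, if $c$ has utility $0$ for every voter, then $c$ is weakly Pareto optimal precisely when no matching attains egalitarian welfare $1$. Taking $c = \emptyset$ accomplishes this under one-edge approval, and also under $\kappa$-missing approval provided I first pad each $M^*(v)$ with $\kappa+1$ disjoint private edges so that $|M^*(v)| > \kappa$. The reduction from \textsc{Egalitarian Welfare} (NP-hard by \Cref{thm:1_edge_big_kappa_egalitarian}) is then immediate and yields coNP-hardness.

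For \textsc{sPO-Verification} in the approval-based models, the empty-matching trick fails, since any matching sharing an edge with some $M^*(v)$ already witnesses a weak improvement of $\emptyset$. I would instead introduce ``guard'' voters and pick $c$ so that the guards already approve $c$; any weak improvement $c'$ then must preserve guard approval while strictly raising at least one non-guard voter. The guards are designed so that preserving their approval effectively restricts $c'$ to matchings solving the source NP-hard problem: \textsc{Egalitarian Welfare} in the one-edge case, and $(\kappa+1)$-\textsc{Sat} (following the template of the reduction in \Cref{thm:1_edge_big_kappa_egalitarian}) in the $\kappa$-missing case with $\kappa > 1$. The main obstacle will be calibrating the guards so that maintaining their approval is flexible enough to admit every solution of the source problem, yet restrictive enough to exclude trivial weak improvements that bypass the source.
\end{proofsketch}
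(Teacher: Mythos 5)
Your coNP-membership argument and your treatment of \textsc{wPO-Verification} in the approval models are fine (taking $c=\emptyset$ and reducing from \textsc{Egalitarian Welfare} is exactly the paper's route, and your padding of each preferred matching with $\kappa+1$ private disjoint edges is a legitimate way to guarantee that no voter approves $\emptyset$ under $\kappa$-missing approval). The affine reduction, however, does not work as you describe it. With $k$ slot vertices fully connected to the candidate vertices, the set of ``slot-edges incident to a voter's approved candidates'' is not a matching (two approved candidates already give edges sharing every slot vertex), so it cannot serve as a preferred matching in this model; and even if you repair this by fixing one specific slot per approved candidate, the overlap $|M^*(v)\cap M|$ only counts approved candidates that $M$ happens to place on the very slot the voter chose, so it does not equal the number of $v$'s approved candidates in the committee---the same committee is realized by many different slot assignments with different overlaps, which breaks the claimed equivalence of utilities at its core. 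The paper avoids slots entirely: each alternative $a$ gets its own two-edge path $e_a,\overline{e_a}$, committee membership is encoded by selecting $e_a$, and two control voters (one approving all $e_a$, one approving all $\overline{e_a}$, both together with one extra free edge) combined with the strictness of the improvement force exactly $k$ of the edges $e_a$ into any dominating matching. Note also that \citet{aziz2020computing} prove hardness only for verifying \emph{strong} Pareto optimality of a committee; for \textsc{wPO-Verification} you additionally need hardness of the weak variant of the multiwinner problem, which the paper establishes separately via a reduction from \textsc{Vertex Cover}---a step your plan omits.

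For \textsc{sPO-Verification} in the approval models, your ``guard voters'' paragraph states the right goal but stops exactly where the difficulty begins. The paper's proofs hinge on concrete calibrated constructions: for one-edge approval, an input matching approved by all but one voter, built from the star structure of the hard \textsc{Egalitarian Welfare} instances together with a two-edge control path and $m+1$ guard voters attached to the star centers, where a counting argument forces any weak improvement to release the blocking edge and hence to be approved by every voter; for $\kappa$-missing approval with $\kappa>1$, a direct reduction from $(\kappa{+}1)$-\textsc{Sat} using six-edge variable cycles, a start gadget and $2\kappa$ control gadgets, so that preserving all existing approvals simultaneously admits every satisfying assignment and excludes all spurious improvements. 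Designing guards that are simultaneously permissive enough and restrictive enough is the substance of the argument, and without at least one such worked-out construction the coNP-hardness of \textsc{sPO-Verification} under one-edge and $\kappa$-missing approval remains unproven in your proposal.
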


For $\kappa$-missing approval with $\kappa \leq 1$, we use the polynomial-time algorithms from \Cref{thm:small_kappa_egalitarian} for \textsc{Egalitarian Welfare} to search for Pareto improvements by checking if there are candidates that are approved by all currently approving and at least one additional voter. This tractability stands in contrast with the NP-hardness under the other utility models, further highlighting the jump in complexity between $1$-missing approval and $2$-missing~approval.
\begin{restatable}{theorem}{smallverification}
  \label{thm:small_kappa_check_wsPO}
  \textsc{wPO-Verification} and \textsc{sPO-Verification} under $\kappa$-missing approval with $\kappa \leq 1$ are solvable in polynomial time.
\end{restatable}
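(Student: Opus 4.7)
The plan is to reduce both verification problems to a polynomial number of \textsc{Egalitarian Welfare} queries on subsets of the voter set; since \Cref{thm:small_kappa_egalitarian} provides a polynomial-time algorithm (trivial for $\kappa=0$, the 2-SAT construction for $\kappa=1$), and that algorithm manifestly works on any subset of voters given as input, each such query is answerable in polynomial time. The key enabling observation is that, because utilities take values in $\{0,1\}$, voter $v$ strictly prefers $M'$ to $M$ if and only if $v$ disapproves $M$ and approves $M'$. Letting $A := \{v \in V : v \text{ approves } M\}$ and $D := V \setminus A$, a strong Pareto improvement of $M$ must be a matching approved by \emph{every} voter of $V$, and can only exist if $A = \emptyset$ (otherwise voters in $A$ are already at maximum utility). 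Analogously, a weak Pareto improvement of $M$ is precisely a matching approved by every voter of $A$ together with at least one voter of $D$.

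Given these characterizations, \textsc{wPO-Verification} is immediate: output ``yes'' whenever $A \neq \emptyset$; otherwise, invoke \textsc{Egalitarian Welfare} on the full voter set $V$ and output ``yes'' exactly when no matching is approved by all voters. For \textsc{sPO-Verification}, iterate over the voters $v^* \in D$ and, for each, invoke \textsc{Egalitarian Welfare} on the voter set $A \cup \{v^*\}$: a positive answer witnesses a weak Pareto improvement (namely any matching approved by all of $A \cup \{v^*\}$), while negative answers across all $v^* \in D$ certify that $M$ is strongly Pareto optimal. This uses at most $|V|$ calls to the subroutine, yielding an overall polynomial-time algorithm.

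The only point requiring care is that the subroutine of \Cref{thm:small_kappa_egalitarian} is genuinely oblivious to which voters constitute the input: the matching-constraint clauses in its 2-SAT construction depend only on the graph, and the voter-dependent clauses are generated per voter from $M^*(v)$, so restricting to a subset $S \subseteq V$ simply restricts which voter-dependent clauses are added. Hence no machinery has to be rebuilt from scratch. There is no serious obstacle beyond identifying the right reduction; the conceptual content lies in the $\{0,1\}$-characterization of Pareto improvements, which turns both verification tasks into searches for matchings approved by specific subsets of voters, and these searches are exactly what \Cref{thm:small_kappa_egalitarian} solves efficiently.
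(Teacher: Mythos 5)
Your proposal is correct and follows essentially the same route as the paper's proof: reduce \textsc{sPO-Verification} to at most $|V|$ \textsc{Egalitarian Welfare} queries on the sets $A \cup \{v^*\}$ for $v^* \in D$, and handle \textsc{wPO-Verification} by observing that $A \neq \emptyset$ makes $M$ trivially weakly Pareto optimal, else querying the full voter set. The $\{0,1\}$-utility characterization of Pareto improvements and the observation that the subroutine of \Cref{thm:small_kappa_egalitarian} applies to arbitrary voter subsets match the paper's argument.
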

\begin{proof}
  For \textsc{sPO-Verification}, given a graph $G$, a set of voters $V$, and a candidate $M$, let $V_M$ be the set of voters that approve $M$. For each voter $v \in V \setminus V_M$ we can test if there is a matching in graph $G$ that is approved by all voters in $V_M \cup \{v\}$ by checking if there is a matching with positive egalitarian welfare for graph $G$ and voters $V_M \cup \{v\}$ using the polynomial-time algorithms from  \Cref{thm:small_kappa_egalitarian}.
  If such a matching exists for some voter $v \in V \setminus V_M$, we return No; otherwise, we return Yes.

  For \textsc{wPO-Verification}, let $G$ be a graph, $V$ a set of voters, and $M$ a candidate.
  If $M$ is approved by at least one voter, then $M$ is trivially weakly Pareto optimal, as that voter cannot improve. Otherwise, finding an improving matching is equivalent to finding a matching that is approved by all voters, i.e., a matching with an egalitarian welfare of one, for which we can again use the algorithm described in \Cref{thm:small_kappa_egalitarian} to test if such a matching exists.
\end{proof}

\subsection{Construction}
We now turn to the construction of Pareto optimal candidates.
We show that finding Pareto optimal candidates is often easier than verifying whether a given candidate is Pareto optimal. This phenomenon is not uncommon, as Pareto optimal candidates are generally not unique, and in some settings, they can be identified efficiently by exploiting simple structural properties.
It is trivial to find a weakly Pareto optimal candidate: We can simply select a candidate that provides maximum utility to some voter. Since this voter cannot strictly prefer any other candidate, the selected candidate is weakly Pareto optimal.
\begin{observation}
  \label{thm:construct_wPO}
  \textsc{wPO-Construction} is solvable in linear time for all considered utility functions.
\end{observation}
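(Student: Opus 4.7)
The plan is to follow the approach already sketched in the paragraph preceding the statement: for any fixed voter, output a candidate that achieves that voter's maximum possible utility. Concretely, I would pick an arbitrary voter $v \in V$ and return the matching $M^*(v)$ itself. Since $M^*(v)$ is already a valid matching in $G$, it is a valid candidate, and it can be produced by a single pass over $v$'s representation in the input, giving linear runtime.

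The key step is to verify, utility model by utility model, that $M^*(v)$ is indeed a utility-maximizing candidate for voter $v$. Under affine utilities, the utility $\alpha_v|M^*(v)\cap M|+\beta_v$ is maximized when $|M^*(v)\cap M|$ is as large as possible, and this maximum value $|M^*(v)|$ is attained by $M=M^*(v)$. Under one-edge approval, $M^*(v)$ has nonempty overlap with itself (assuming $M^*(v)$ is nonempty; if $M^*(v)=\emptyset$, then $v$ is indifferent among all candidates and the claim is trivial), so $u(v,M^*(v))=1$, which is the maximum possible. Under $\kappa$-missing approval, $|M^*(v)\setminus M^*(v)|=0\leq \kappa$, so again $u(v,M^*(v))=1$ is maximal.

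It remains to argue that $M^*(v)$ is weakly Pareto optimal. Suppose for contradiction that some candidate $M'$ is a strong Pareto improvement over $M^*(v)$, meaning every voter strictly prefers $M'$. In particular, $v$ would have to derive strictly higher utility from $M'$ than from $M^*(v)$, contradicting the maximality established in the previous step. Hence no strong Pareto improvement exists, so $M^*(v)$ is weakly Pareto optimal.

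There is no serious obstacle here; the observation is essentially a direct consequence of the fact that each voter's reported preferred matching trivially maximizes their own utility under each of the three utility models we consider. The only care needed is to write out the three cases cleanly and to handle the degenerate situation $M^*(v)=\emptyset$ under one-edge approval (which can be sidestepped by choosing any voter with a nonempty preferred matching, and if no such voter exists, all voters are indifferent and every candidate is weakly Pareto optimal).
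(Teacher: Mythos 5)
Your proposal is correct and follows essentially the same argument as the paper: select a candidate (namely $M^*(v)$) that gives some voter maximum utility, so no strong Pareto improvement can exist, and this takes linear time. Your case analysis per utility model and the handling of the empty-preferred-matching case are fine (indeed, even if $M^*(v)=\emptyset$ under one-edge approval, $M^*(v)$ still attains $v$'s maximum utility of zero, so the degenerate case needs no special treatment).
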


Finding strongly Pareto optimal candidates is less straightforward, as a dominating candidate only requires one voter to strictly improve. Depending on the utility model, the problem remains solvable in polynomial time or becomes NP-hard. We first discuss the polynomial cases: affine utility functions and $\kappa$-missing approval with $\kappa \leq 1$.

\begin{proposition}
  \label{thm:affine_construct_sPO}
  \textsc{sPO-Construction} under affine utility functions is solvable in polynomial time.
\end{proposition}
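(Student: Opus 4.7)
The plan is to show that any matching maximizing utilitarian welfare is automatically strongly Pareto optimal, and then invoke \Cref{thm:affine_utilitarian} to compute such a matching in polynomial time.

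First I would compute, using the algorithm of \Cref{thm:affine_utilitarian} (i.e., Edmonds' maximum-weight matching algorithm applied to the weights $w(e)=\sum_{v\in V}\mathds{I}(e,v)\alpha_v$), a matching $M^\star$ that maximizes $\sum_{v\in V} u(v,M)$ over all $M\in\mathcal{C}$. This step runs in time $\mathcal{O}(|W|^2|E|)$.

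Next I would argue that $M^\star$ is strongly Pareto optimal. Suppose, for contradiction, that some $M'\in\mathcal{C}$ is a weak Pareto improvement over $M^\star$. By definition, no voter strictly prefers $M^\star$ to $M'$, and at least one voter $v_0$ strictly prefers $M'$ to $M^\star$. Because preferences under affine utilities are determined by utility values (and, by \Cref{lem:affine_score_equivalence}, by overlap size with $M^*(v)$), this means $u(v,M')\geq u(v,M^\star)$ for every $v\in V$ and $u(v_0,M')>u(v_0,M^\star)$. Summing over all voters yields $\sum_{v\in V}u(v,M')>\sum_{v\in V}u(v,M^\star)$, contradicting the fact that $M^\star$ maximizes utilitarian welfare. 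Hence no weak Pareto improvement of $M^\star$ exists, so $M^\star$ is strongly Pareto optimal.

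Combining these two steps gives a polynomial-time algorithm that outputs a strongly Pareto optimal candidate. I do not expect a real obstacle: the only subtlety is making sure the implication ``weakly better for everyone and strictly better for someone $\Rightarrow$ strictly larger sum of utilities'' is legitimate in the affine model, which is immediate because each voter's ordinal preference coincides with the ordering induced by their affine utility function.
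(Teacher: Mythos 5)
Your proposal is correct and follows exactly the paper's argument: compute a utilitarian-welfare-maximizing matching via \Cref{thm:affine_utilitarian}, then observe that any weak Pareto improvement would strictly increase the sum of utilities, contradicting maximality. You simply spell out the summation argument in more detail than the paper does.
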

\begin{proof}
  Under affine utility functions, any candidate with maximum utilitarian welfare is strongly Pareto optimal, since a dominating candidate would have strictly higher utilitarian welfare.
  Therefore, a candidate with maximum utilitarian welfare, which can be found in polynomial time (see \Cref{thm:affine_utilitarian}), is strongly Pareto optimal.
\end{proof}

In the approval-based setting, a Pareto improvement requires at least one additional voter to approve the candidate, while preserving all existing approvals.
The construction of strongly Pareto optimal candidates for $0$-missing and $1$-missing approval utilizes the polynomial-time solvability of \textsc{sPO-Verification} (\Cref{thm:small_kappa_check_wsPO}) by iteratively finding Pareto improvements until the current candidate is Pareto optimal.
\begin{proposition}
  \label{cor:small_kappa_construct_sPO}
  \textsc{sPO-Construction} under $\kappa$-missing approval with $\kappa \leq 1$ is solvable in polynomial time.
\end{proposition}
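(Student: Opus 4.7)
The plan is to combine the polynomial-time verification algorithm from \Cref{thm:small_kappa_check_wsPO} with an iterative improvement scheme. Start from an arbitrary feasible candidate $M_0\in\mathcal{C}$ (for instance, the empty matching, or the weakly Pareto optimal candidate produced by \Cref{thm:construct_wPO}), and repeatedly search for a weak Pareto improvement; if none exists, the current candidate is strongly Pareto optimal and we return it.

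The key observation, which makes the verification algorithm from \Cref{thm:small_kappa_check_wsPO} directly constructive, is that its negative answer is witnessed by a concrete matching. Recall that for the current candidate $M$ with approving voter set $V_M$, the algorithm iterates over every voter $v\in V\setminus V_M$ and invokes the polynomial-time \textsc{Egalitarian Welfare} procedure of \Cref{thm:small_kappa_egalitarian} on voter set $V_M\cup\{v\}$. Whenever this procedure returns Yes, it also outputs a matching $M'$ approved by every voter in $V_M\cup\{v\}$. Such an $M'$ is a weak Pareto improvement over $M$: every voter in $V_M$ still receives utility one, and voter $v$ strictly improves from utility zero to utility one. We then set $M\leftarrow M'$ and iterate.

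For termination, note that in an approval-based model utilities lie in $\{0,1\}$, so any weak Pareto improvement $M'$ of $M$ must satisfy $V_{M'}\supseteq V_M$ and $V_{M'}\neq V_M$, hence $|V_{M'}|>|V_M|$. Since $|V_M|$ is an integer bounded by $|V|$, the loop performs at most $|V|$ iterations. Each iteration makes $\mathcal{O}(|V|)$ calls to the polynomial-time algorithm of \Cref{thm:small_kappa_egalitarian}, so the total running time is polynomial.

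The main conceptual step is the monotonicity argument that drives termination; this is immediate because utilities are binary and Pareto improvements cannot remove existing approvals. Consequently the only nontrivial algorithmic work is done inside the verification subroutine, whose correctness and polynomial runtime are already established by \Cref{thm:small_kappa_check_wsPO,thm:small_kappa_egalitarian}, so no further technical machinery is required.
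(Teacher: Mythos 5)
Your proposal is correct and follows essentially the same route as the paper: start from the empty matching, invoke the \textsc{sPO-Verification} routine of \Cref{thm:small_kappa_check_wsPO} (whose underlying \textsc{Egalitarian Welfare} call from \Cref{thm:small_kappa_egalitarian} yields a witnessing matching), replace the current candidate by that weak Pareto improvement, and terminate after at most $|V|$ iterations since the set of approving voters strictly grows. Your explicit remark that improvements can never remove approvals in a binary-utility model is exactly the monotonicity the paper relies on, so nothing is missing.
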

\begin{proof}
  Given a graph $G$ and a set of voters $V$, we describe an iterative algorithm for finding a strongly Pareto optimal candidate. We start with the empty matching. Using \Cref{thm:small_kappa_check_wsPO}, we can check in polynomial time whether a given matching is strongly Pareto optimal. Note that for a matching $M$ that is not strongly Pareto optimal, the algorithm described in \Cref{thm:small_kappa_check_wsPO} returns a matching $M'$ that is a weak Pareto improvement over $M$. We iteratively repeat this process, using $M'$ as the new candidate, until we reach a candidate that is strongly Pareto optimal. Since the number of voters approving the candidate increases by at least one in each step, this process requires at most $|V|$ steps.
\end{proof}

Similar to the hardness of \textsc{sPO-Verification} for one-edge approval and $\kappa$-missing approval with $\kappa > 1$, we use the corresponding hardness results for \textsc{Egalitarian Welfare} (\Cref{thm:1_edge_big_kappa_egalitarian}) to show the hardness of \textsc{sPO-Construction}.
\begin{restatable}[$\bigstar$]{proposition}{approvalconstructspo}
  \label{thm:1-edge_big_kappa_construct_sPO}
  \textsc{sPO-Construction} under one-edge approval and under $\kappa$-missing approval with $\kappa > 1$ is NP-hard.
\end{restatable}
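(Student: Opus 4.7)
The plan is to reduce from \textsc{Egalitarian Welfare} under the respective utility model, whose NP-hardness is established in \Cref{thm:1_edge_big_kappa_egalitarian}. The key leverage point is that under approval-based utilities, every matching approved by all voters attains the maximum possible utility of one for each voter and is therefore strongly Pareto optimal; conversely, the \textsc{Egalitarian Welfare} question is precisely asking whether such an ``everyone-approving'' matching exists. Hence I intend to engineer a construction in which \emph{which kind} of strongly Pareto optimal matching the oracle returns reveals the answer to the \textsc{Egalitarian Welfare} instance.

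Given an instance $(G, V)$ of \textsc{Egalitarian Welfare}, I would construct a polynomially larger instance $(G', V')$ of \textsc{sPO-Construction} by augmenting $G$ with a disjoint control gadget together with a small number of auxiliary voters whose preferred matchings interleave gadget edges with edges drawn from $G$. The gadget is designed so that an easily checkable feature $P$ on candidates (for instance, the inclusion of a distinguished ``signal'' edge) plays the role of a YES/NO indicator: if $(G, V)$ is a YES-instance, I would show that combining an all-approving matching in $G$ with the signal configuration yields a candidate that Pareto-dominates every $P$-free alternative, forcing every sPO candidate to exhibit $P$; if $(G, V)$ is a NO-instance, I would show that the auxiliary voters always provide a Pareto improvement out of any $P$-configuration via the gadget, so every sPO candidate must omit the signal. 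Deciding \textsc{Egalitarian Welfare} then amounts to invoking the \textsc{sPO-Construction} oracle once and inspecting in polynomial time whether the returned matching satisfies $P$.

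The main obstacle is designing the control gadget so that this dichotomy is exact. Strong Pareto optimality is a global condition, so even a single unintended Pareto improvement inside the gadget or across the $G$-boundary can create spurious sPO candidates that violate the intended correspondence in either direction. The auxiliary voters must therefore be tuned so that the only Pareto improvements available are the intended ones, and the tuning has to be adapted separately to the one-edge approval and the $\kappa$-missing approval setting with $\kappa > 1$. In the latter case, a natural choice is to embed a $(\kappa+1)$-out-of-many gadget analogous to the one used in the \textsc{Egalitarian Welfare} reduction of \Cref{thm:1_edge_big_kappa_egalitarian}, so that the $\kappa$ units of approval slack are correctly accounted for. The remainder of the proof then consists of verifying the dichotomy by a case analysis on which edges of the gadget the sPO candidate contains, and of confirming that the whole construction runs in polynomial time.
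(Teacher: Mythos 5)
You have the right skeleton---a Turing reduction from \textsc{Egalitarian Welfare} (NP-complete by \Cref{thm:1_edge_big_kappa_egalitarian}) that calls the \textsc{sPO-Construction} oracle once and inspects the returned matching---but as written the proposal is not a proof: the entire weight rests on a control gadget with a ``signal'' edge whose design you explicitly leave open (``the auxiliary voters must therefore be tuned so that the only Pareto improvements available are the intended ones''). Since strong Pareto optimality is a global condition, that tuning is exactly where such constructions tend to break, and you give no candidate gadget, no case analysis, and no argument ruling out spurious sPO matchings that cross the boundary between the gadget and $G$. So there is a genuine gap between your plan and a complete argument.

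The gap is easy to close, however, because no gadget is needed at all, and this is what the paper does. Run the oracle on the \emph{unmodified} instance $(G,V)$ and let the checkable feature $P$ simply be ``the returned matching is approved by every voter in $V$.'' The dichotomy you were trying to engineer holds automatically: if some matching $M^*$ is approved by all voters, then $M^*$ weakly Pareto dominates every matching $M$ that is not approved by all voters (every voter's utility weakly rises to $1$, and any voter not approving $M$ strictly improves), so \emph{every} strongly Pareto optimal matching must be approved by all voters; conversely, if no all-approving matching exists, the oracle obviously cannot return one. Hence answering the \textsc{Egalitarian Welfare} instance amounts to one oracle call plus a polynomial-time approval check, uniformly for one-edge approval and for $\kappa$-missing approval with $\kappa>1$. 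Your opening observation (all-approving matchings are sPO, and \textsc{Egalitarian Welfare} asks for their existence) already contains this argument---you just need to push it one step further to see that the augmented graph, auxiliary voters, and signal edge are superfluous.
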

\begin{proofsketch}
  The idea is to use \textsc{sPO-Construction} as a subroutine to solve \textsc{Egalitarian Welfare}. In our reduction, if the candidate returned by \textsc{sPO-Construction} is approved by all voters, it has an egalitarian welfare of one. Otherwise, no candidate with positive egalitarian welfare exists, as such a candidate would Pareto dominate the returned candidate.
\end{proofsketch}

\section{Condorcet Winners}
\label{sec:condorcet}
We now investigate the problem of identifying Condorcet winners, i.e., candidates that win (or tie) the pairwise comparison against every other candidate. Similar to Pareto optimality, only the ordinal preferences of voters over candidates matter when determining Condorcet winners. Thus, by \Cref{lem:affine_score_equivalence}, for affine utilities we can disregard the specific coefficients of the utility function and again assume $\alpha_v = 1$ and $\beta_v = 0$ for each voter $v \in V$.

In the approval-based setting, we observe the following useful relationship between Condorcet winners and candidates with maximum utilitarian welfare:
\begin{lemma}[folklore]
  \label{lem:condorcet_social_score}
  In the approval-based setting, a candidate $M$ is a weak Condorcet winner if and only if it has maximum utilitarian welfare. If no other candidate achieves the same utilitarian welfare, then $M$ is also a strong Condorcet winner.
\end{lemma}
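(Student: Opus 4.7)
The plan is to reduce the pairwise majority comparison to a direct comparison of approval counts, from which both claims follow immediately.

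First, I would set up notation: for any candidate $M$, let $V_M := \{v \in V \mid u(v,M) = 1\}$ denote the set of voters approving $M$. Since utilities are in $\{0,1\}$, the utilitarian welfare of $M$ equals $|V_M|$. Next, I would note that in the approval-based setting a voter $v$ strictly prefers $M$ over $M'$ if and only if $v$ approves $M$ but not $M'$, i.e.\ $v \in V_M \setminus V_{M'}$; voters in $V_M \cap V_{M'}$ or in $V \setminus (V_M \cup V_{M'})$ are indifferent.

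Second, I would compute the pairwise comparison between $M$ and any other candidate $M'$. The number of voters strictly preferring $M$ to $M'$ is $|V_M \setminus V_{M'}| = |V_M| - |V_M \cap V_{M'}|$, and symmetrically for $M'$ against $M$. Subtracting, the ``cross'' term cancels, yielding
\[
  |V_M \setminus V_{M'}| - |V_{M'} \setminus V_M| \;=\; |V_M| - |V_{M'}|.
\]
Hence $M$ weakly wins (resp.\ strictly wins) the pairwise comparison against $M'$ if and only if $|V_M| \geq |V_{M'}|$ (resp.\ $|V_M| > |V_{M'}|$).

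Finally, I would conclude: $M$ is a weak Condorcet winner iff $|V_M| \geq |V_{M'}|$ for every $M' \in \mathcal{C}$, which is exactly the condition that $M$ has maximum utilitarian welfare. Similarly, if $|V_M| > |V_{M'}|$ for every $M' \neq M$, then $M$ strictly wins every pairwise comparison and is a strong Condorcet winner. There is no real obstacle here; the statement is folklore precisely because the cancellation of the $|V_M \cap V_{M'}|$ term is immediate, and the whole argument is a few lines once the approval-count reformulation of pairwise preferences is made explicit.
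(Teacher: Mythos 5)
Your proof is correct and follows essentially the same route as the paper's: both reduce pairwise comparisons to comparing approval counts, with your explicit cancellation of the $|V_M \cap V_{M'}|$ term just making precise the step the paper states verbally. Nothing is missing.
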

\begin{proof}
  In the approval-based voting setting, each voter assigns either utility zero or one to a candidate. Hence, the utilitarian welfare of a matching equals the number of voters who approve it.

  Assume candidate $M$ has maximum utilitarian welfare. Let $M'$ be any other candidate. If the utilitarian welfare of $M'$ is lower than that of $M$, then fewer voters approve $M'$ than $M$, implying that $M'$ loses the pairwise comparison against $M$. If the welfare values are equal, the candidates tie.

  Conversely, suppose $M$ is a weak Condorcet winner, and let $M'$ be any other candidate. Since $M$ wins or ties against $M'$ in a pairwise comparison, the number of voters who prefer $M'$ over $M$ is less than or equal to the number who prefer $M$ over $M'$. Therefore, the number of voters who approve $M$ is greater than or equal to those who approve $M'$, implying that $M$ has maximum utilitarian welfare. In the strong case, the inequality is strict for all $M' \neq M$ and $M$ is the unique candidate with maximum utilitarian welfare.
\end{proof}

\subsection{Verification}
Verifying whether a given candidate is a Condorcet winner is coNP-complete under all considered utility models. The proofs use reductions from \textsc{$3$-Sat} and the respective \textsc{Utilitarian Welfare} problems in the approval-based settings, and from \textsc{wPO-Verification} in the affine setting.
\begin{restatable}[$\bigstar$]{theorem}{cwverification}
  \label{thm:check_wsCW}
  \textsc{wCW-Verification} and \textsc{sCW-Verification} are coNP-complete under affine utilities, one-edge approval and $\kappa$-missing approval for any $\kappa\in \mathbb{N}_0$.
\end{restatable}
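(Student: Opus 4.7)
The plan is to separate membership from hardness. Membership in coNP is uniform across all utility models: a NO-certificate for \textsc{wCW-Verification} is a matching $M'$ that strictly beats $M$ in pairwise majority, and for \textsc{sCW-Verification} a matching $M'\ne M$ that beats or ties $M$; in either case the certificate has polynomial size and is verified by computing each voter's utility for $M$ and $M'$ and counting.

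For coNP-hardness in the approval-based cases I would invoke \Cref{lem:condorcet_social_score}: a matching is a weak Condorcet winner iff it attains the maximum utilitarian welfare, and a strong one iff it is the unique such matching. Hence \textsc{wCW-Verification} is the complement of asking whether some matching has utilitarian welfare strictly above that of $M$, giving a direct reduction from \textsc{Utilitarian Welfare}, which is NP-complete in every approval setting by \Cref{thm:approval_utilitarian}. For \textsc{sCW-Verification} I would start from the same reduction but append a small auxiliary gadget (for instance a fresh disjoint edge together with a voter whose preferred matching contains it) so that the designated candidate $M$ sits one unit of welfare below the target threshold; any matching hitting the target is then automatically distinct from $M$ and witnesses that $M$ is not the unique welfare maximizer.

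For coNP-hardness under affine utilities \Cref{lem:condorcet_social_score} no longer applies, so I would reduce from \textsc{wPO-Verification} under affine utilities, which is coNP-complete by \Cref{thm:check_wsPO}; by \Cref{lem:affine_score_equivalence} I may assume $\alpha_v=1$ and $\beta_v=0$ throughout. Given an instance $(G,V,M)$, the task is to augment it with fresh voters and edges so that in the new instance any matching beating $M$ in pairwise majority must in fact be strictly preferred to $M$ by every original voter, i.e.\ must be a strong Pareto improvement over $M$. This would give that $M$ is a weak Condorcet winner in the augmented instance iff $M$ is weakly Pareto optimal in the original one; a small additional component then converts the construction to handle \textsc{sCW-Verification} analogously to the approval case.

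The main obstacle will be designing this affine padding. Naively adding voters whose preferred matching is $M$ fails, because they are indifferent whenever the competing candidate is a superset of $M$, leaving potential Condorcet improvements that are not Pareto improvements. A more careful construction---for example, duplicating each original voter together with a collection of pendant dummy edges that any competitive matching is forced to include---is needed so that the majority threshold in the enlarged electorate effectively coincides with unanimity among the original voters for every relevant candidate. I expect this combinatorial design, and the parallel argument for strong Condorcet winners, to be the technically delicate part.
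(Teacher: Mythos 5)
Your skeleton matches the paper's proof at a high level---coNP membership via a counterexample certificate, the approval cases via \Cref{lem:condorcet_social_score}, and the affine case via a reduction from \textsc{wPO-Verification} with padding voters---but the constructions that actually carry the proof are missing or broken. In the affine case you dismiss the ``naive'' padding with extra voters whose preferred matching is the input candidate $M$; this is exactly what the paper does ($|V|-1$ clones for the weak case, $|V|$ for the strong case), and it is sound because the hardness of \textsc{wPO-Verification} in \Cref{thm:check_wsPO} is established for instances in which the input candidate and all preferred matchings are \emph{maximal} matchings. Maximality of $M$ removes precisely the superset-indifference you worry about: no matching other than $M$ contains $M$, so every clone strictly prefers $M$ to every other candidate, and a candidate can then beat $M$ only if all original voters strictly prefer it, i.e., only if it is a strong Pareto improvement. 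Your alternative padding with ``pendant dummy edges'' is left unspecified and you yourself flag it as the main obstacle, so the affine hardness is not actually proved in your proposal.

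In the approval cases the reduction from \textsc{Utilitarian Welfare} via \Cref{lem:condorcet_social_score} is the right idea for $\kappa$-missing approval (and is what the paper does), but the gadgetry that makes it sound is the entire difficulty and is absent: you must build a designated candidate whose welfare is pinned to the threshold, prevent the original voters from approving it, and prevent competing matchings from harvesting approvals from both the fresh voters and the original voters; the paper needs pendant edges blocking the whole original graph, $3\kappa$ control gadgets and $4(|V|+1)$ control voters to enforce this. Under one-edge approval the naive padding fails outright, because approvals are monotone under adding edges: a competing matching can include a single fresh approved edge together with good original edges and collect both sets of approvals, so the designated candidate is essentially never a weak Condorcet winner and the reduction does not distinguish Yes- from No-instances---this is why the paper instead reduces from \textsc{$3$-Sat} with star gadgets for this model. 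Finally, your calibration for \textsc{sCW-Verification} (designated candidate one unit \emph{below} the target) gives only the forward direction: if no matching reaches the target, other matchings may still tie with the designated candidate at welfare $k-1$, so it need not be the unique welfare maximizer and the backward direction fails. The correct calibration, as in the paper, places the designated candidate exactly at the target (e.g., $k$ fresh voters instead of $k-1$), so that any tie-or-win forces a distinct matching with welfare at least $k$.
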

\begin{proofsketch}
  For the affine setting, we reduce from \textsc{wPO-Verification}. The idea is to add extra voters to the given instance who strictly prefer the given candidate $M$ over any other candidate. These voters receive strictly lower utility from any other candidate, ensuring that another candidate $M'$ can only win or tie the pairwise comparison against $M$ if all original voters strictly prefer $M'$ over $M$, i.e., if $M'$ constitutes a strong Pareto improvement for $M$.

  In the approval-based setting, we leverage \Cref{lem:condorcet_social_score} to construct reductions from \textsc{Utilitarian Welfare} and \textsc{$3$-Sat}. For $\kappa$-missing approval we reduce from \textsc{Utilitarian Welfare}. We modify the instance by inserting a candidate $M$ approved by additional voters who approve no other candidate and ask whether $M$ is a Condorcet winner.
  The instance is constructed such that if $M$ is not a Condorcet winner, then there exists a candidate with utilitarian welfare of $k$ in the original instance. In the one-edge approval setting we reduce from \textsc{$3$-Sat}, where we use a similar idea as before: We construct gadgets for the variables and voters for the clauses such that a voter approves a candidate if and only if the corresponding clause is satisfied. Using additional control voters we construct a candidate that is a Condorcet winner unless there exists a fulfilling assignment for the \textsc{$3$-Sat} instance.
\end{proofsketch}

\subsection{Existence}
In contrast to Pareto optimal candidates, Condorcet winners are not guaranteed to exist. We therefore study the computational complexity of deciding whether a Condorcet winner exists in a given instance. We first show that \textsc{wCW-Existence} and \textsc{sCW-Existence} are coNP-hard in the affine setting. Then, we turn to the approval-based setting, where only \textsc{sCW-Existence} remains coNP-hard, while \textsc{wCW-Existence} becomes trivial.

To show that \textsc{wCW-Existence} and \textsc{sCW-Existence} under affine utilities are coNP-hard, we can use an idea very similar to the reduction in \Cref{thm:check_wsCW}. We reduce from \textsc{wPO-Verification} and transform the instance such that the input candidate $M$ is the Condorcet winner or no Condorcet winner exists.
\begin{restatable}[$\bigstar$]{theorem}{affinecondorcetexist}
  \label{thm:affine_exists_wsCW}
  \textsc{wCW-Existence} and \textsc{sCW-Existence} under affine utility functions are coNP-hard.
\end{restatable}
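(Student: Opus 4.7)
My plan is to reduce from \textsc{wPO-Verification}, which is coNP-complete by \Cref{thm:check_wsPO}, adapting the construction from the proof of \Cref{thm:check_wsCW} for the affine setting. Given a \textsc{wPO-Verification} instance $(G, V, M)$, where I may assume that $M$ is maximal in $G$ (as \textsc{wPO-Verification} remains hard on such instances), I augment the voter set with a group of \emph{stalwart} voters whose preferred matching is $M$. The aim is that in the resulting instance, $M$ is the unique candidate that could possibly be a Condorcet winner, and that $M$ achieves this status exactly when $M$ is weakly Pareto optimal in the original instance.

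For \textsc{sCW-Existence}, I add $|V|$ stalwarts. By maximality of $M$, each stalwart strictly prefers $M$ over every $M' \neq M$. In the pairwise comparison $M$ vs.\ $M'$, the tally is $|V| + x$ for $M$ against $y$ for $M'$, where $x$ and $y$ count the original voters strictly preferring $M$ and $M'$, respectively. Since $y \leq |V|$ and $x \geq 0$, $M$ always at least ties, with a strict win iff $M'$ is not a strong Pareto improvement of $M$. Moreover, no $M'' \neq M$ can strictly beat $M$, since that would require $y > |V| + x \geq |V|$. Hence $M$ is the unique candidate that could possibly be a strong Condorcet winner, and $M$ is one iff no strong Pareto improvement exists iff $M$ is weakly Pareto optimal. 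Therefore, a strong Condorcet winner exists in the constructed instance iff $M$ is wPO, which yields the claimed hardness.

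For \textsc{wCW-Existence}, the above construction trivializes since $M$ is always a weak Condorcet winner under $|V|$ stalwarts (ties are permitted). I therefore add only $|V|-1$ stalwarts, so that a strong Pareto improvement $M'$ of $M$ strictly beats $M$ (with $|V|$ votes against $|V|-1$), removing $M$ from being a wCW exactly in the non-wPO case while keeping $M$ a wCW in the wPO case. The hard part will be to show that in the non-wPO case no \emph{other} candidate is a wCW either: any candidate $M'$ that ties or strictly beats $M$ must be a strong Pareto improvement or a near-strong Pareto improvement (with exactly one indifferent original voter). To handle such candidates I plan to enrich the construction with additional auxiliary voters---for instance, one voter with preferred matching $\{e\}$ for each edge $e \notin M$---that systematically favor candidates containing more non-$M$ edges. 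This yields a beating chain on the set of problematic candidates so that none is maximal in the pairwise tournament, implying that no wCW exists whenever $M$ is not wPO. Formalizing this cycle-breaking argument so that it covers every potentially tying or winning candidate is the technical heart of the proof and is deferred to the appendix.
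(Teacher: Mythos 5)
Your reduction for \textsc{sCW-Existence} is correct and is essentially the paper's construction: the paper likewise adds $|V|$ voters whose preferred matching is the (maximal) input matching $M$, and your uniqueness argument---no candidate other than $M$ can ever strictly beat $M$, so $M$ is the only possible strong Condorcet winner, and it is one iff no strong Pareto improvement exists---is a clean and valid way to finish that half (the paper instead also runs an explicit edge-swap argument to rule out other candidates, which your tally argument makes unnecessary).

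The \textsc{wCW-Existence} half, however, has a genuine gap, and it sits exactly at the step you defer. First, your auxiliary voters (one voter with preferred matching $\{e\}$ for each edge $e \notin M$) endanger the forward direction: there can be up to $|E|-|M|$ of them, far more than the $|V|-1$ stalwarts, so a candidate consisting of many non-$M$ edges can strictly beat $M$ even when $M$ is weakly Pareto optimal, and nothing in your construction then guarantees that \emph{any} weak Condorcet winner exists in the Yes-case---so the reduction may map Yes-instances to No-instances. Second, in the No-case the set of candidates that tie or beat $M$ is no longer the small family you characterized (strong or ``near-strong'' Pareto improvements), because the auxiliary voters themselves let additional candidates beat $M$; and the claim that ``favoring more non-$M$ edges'' yields a beating relation with no maximal element is only asserted, not proved---a candidate maximizing the number of non-$M$ edges among the problematic ones could well be unbeaten. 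The paper resolves precisely this tension with a different mechanism: it attaches two fresh two-edge paths, uses $|V|+1$ stalwarts whose preferred matching is $M \cup \{e_1^a, e_1^b\}$, and adds just two single-edge voters $v_a, v_b$ approving $e_2^a$ and $e_2^b$. Then in the No-case the candidate $M' \cup \{e_2^a, e_2^b\}$ beats the designated matching $|V|+2$ to $|V|+1$, while every other candidate is beaten by swapping in a missing edge of the designated \emph{maximal} matching, since this gains all $|V|+1$ stalwarts against at most $|V|$ (or at most one) dissenting voters; in the Yes-case at most $|V|-1+2$ voters can ever strictly prefer a rival, so the designated matching remains a weak Condorcet winner. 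You would need an analogous bounded tie-breaking device (a constant number of swing voters on fresh gadget edges, with the stalwart count strictly exceeding every possible objecting coalition) rather than one auxiliary voter per edge; as written, your wCW argument does not go through.
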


In the approval-based setting, a weak Condorcet winner is guaranteed to exist: As shown in \Cref{lem:condorcet_social_score}, each candidate with maximum utilitarian welfare is a weak Condorcet winner, and such candidates are naturally guaranteed to exist.
\begin{observation}
  \label{thm:approval_exists_wCW}
  \textsc{wCW-Existence} is solvable in constant time for both one-edge approval and $\kappa$-missing approval for any $\kappa\in \mathbb{N}_0$.
\end{observation}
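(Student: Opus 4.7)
The plan is to observe that the answer to \textsc{wCW-Existence} in the approval-based settings is always Yes, so the algorithm is trivial: output Yes without even inspecting the input. Hence the task reduces to justifying that a weak Condorcet winner is guaranteed to exist.

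First, I would invoke \Cref{lem:condorcet_social_score}, which states that in the approval-based setting any candidate achieving maximum utilitarian welfare is a weak Condorcet winner. Next, I would argue that such a maximizer always exists: the candidate space $\mathcal{C}$ is non-empty, since the empty matching is always a valid matching, and the utilitarian welfare takes only integer values in $\{0, 1, \dots, |V|\}$, so a candidate attaining the maximum value exists. Combining both facts, a weak Condorcet winner is guaranteed to exist in every instance under both one-edge approval and $\kappa$-missing approval for any $\kappa \in \mathbb{N}_0$.

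Since the answer is always Yes, no computation depending on the input is required, and the algorithm that always outputs Yes solves \textsc{wCW-Existence} in constant time. There is no real obstacle here; the only subtlety is noting that the argument uses nothing beyond \Cref{lem:condorcet_social_score} and the non-emptiness of $\mathcal{C}$, and that the claim about being independent of the particular value of $\kappa$ comes for free because \Cref{lem:condorcet_social_score} is stated for the entire approval-based setting.
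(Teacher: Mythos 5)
Your proposal is correct and follows exactly the paper's argument: the paper also derives the observation from \Cref{lem:condorcet_social_score} by noting that a candidate with maximum utilitarian welfare always exists and is a weak Condorcet winner, so the answer is always Yes and the algorithm is trivially constant time. Your additional remarks (non-emptiness of $\mathcal{C}$ via the empty matching, finitely many welfare values) merely spell out details the paper leaves implicit.
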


In contrast, the existence of strong Condorcet winners remains coNP-hard; as established in \Cref{lem:condorcet_social_score}, they require the candidate with maximum utilitarian welfare to be unique.
To show hardness, we can use a construction very similar to the reductions in \Cref{thm:check_wsCW}. We construct a graph in which either the candidate with maximum utilitarian welfare or the solution of the \textsc{Sat}-instance is the Condorcet winner or no Condorcet winner exists.
\begin{restatable}[$\bigstar$]{theorem}{approvalsCWexist}
  \label{thm:approval_exists_sCW}
  \textsc{sCW-Existence} under one-edge approval and $\kappa$-missing approval for any $\kappa\in \mathbb{N}_0$ is coNP-hard.
\end{restatable}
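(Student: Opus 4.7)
The plan rests on \Cref{lem:condorcet_social_score}: under any approval-based utility model a strong Condorcet winner exists if and only if the maximum utilitarian welfare is attained by a unique candidate. Accordingly, coNP-hardness of \textsc{sCW-Existence} will follow from a polynomial-time reduction that turns yes-instances of an NP-hard source problem into instances where the welfare-maximizer is not unique, and no-instances into instances where it is unique.

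For one-edge approval, I would mirror the verification reduction of \Cref{thm:check_wsCW} from \textsc{3-Sat}: variable gadgets with two edges $e_i^T, e_i^F$ sharing a common vertex encode a partial truth assignment, and a clause voter per clause with preferred matching equal to the three literal edges ensures that voter approval coincides with clause satisfaction. On top of this I would attach a control gadget disjoint from the variable gadgets, supporting a designated candidate $M^*$ with enough dedicated approval voters that the welfare of $M^*$ equals the clause count $m$. A careful calibration (multiple copies of single-edge-preferred voters attached to the control edges only) forces the overall maximum welfare to be exactly $m$, attained uniquely by $M^*$ whenever the formula is unsatisfiable and, in addition, by every satisfying-assignment candidate whenever it is satisfiable.

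For $\kappa$-missing approval with any $\kappa \in \mathbb{N}_0$, I would reduce from the \textsc{Utilitarian Welfare} problem, whose NP-completeness was established in \Cref{thm:approval_utilitarian}. Given an instance $(G, V, k)$, I build a new graph $G' = G \cup G^*$ where $G^*$ is a disjoint control component admitting a distinguished matching $M^*$ of size tailored to $\kappa$, together with $k$ control voters whose preferred matchings and approval thresholds are set up so that approval is equivalent to $M^*$ being (essentially) included in the candidate. A small ``blocker'' gadget linking $G$ and $G^*$ prevents the $G$-voter contribution from being stacked on top of the control contribution, which forces the overall maximum welfare on $G'$ to equal $\max\bigl(k,\mathrm{UW}(G,V)\bigr)$---attained by $M^*$ on the control side and, whenever $\mathrm{UW}(G,V) \geq k$, also by a welfare-$k$ matching on the $G$-side. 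The welfare-maximizer is thus unique exactly when $\mathrm{UW}(G,V) < k$, yielding the reduction.

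The main technical obstacle is engineering the blocker so that $M^*$ and any high-welfare $G$-matching remain mutually exclusive and so that no ``mixed'' matching can accidentally tie the maximum welfare. I expect to resolve this by adding bridge voters whose preferred matching straddles the two components and whose approval can only be collected from one of the two sides, thereby funnelling every maximum-welfare candidate into one of the two canonical shapes. The uniform adaptation across all $\kappa \in \mathbb{N}_0$, and the parallel calibration for one-edge approval, then mainly amounts to choosing the sizes of the preferred matchings and the numbers of voter copies so that the $\kappa$-missing (or one-edge) approval threshold lines up with the intended combinatorial constraint, which I anticipate to be a routine if tedious adjustment.
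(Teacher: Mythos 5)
Your overall strategy---invoking \Cref{lem:condorcet_social_score} to reduce strong-Condorcet existence to uniqueness of the welfare maximizer, reducing from \textsc{3-Sat} for one-edge approval and from \textsc{Utilitarian Welfare} for $\kappa$-missing approval, and mapping yes-instances to ``non-unique maximum'' and no-instances to ``unique maximum''---is exactly the paper's plan. However, both of your concrete constructions have gaps that are not mere calibration. For one-edge approval, a control gadget that is \emph{vertex-disjoint} from the variable gadgets cannot enforce ``maximum welfare exactly $m$'': since under one-edge approval adding edges never hurts any voter, a single matching can contain the control edges \emph{and} literal edges, collecting the dedicated control voters plus every clause voter whose clause it satisfies. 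The maximum welfare then becomes $m$ plus the MAX-SAT value of the formula (which is always positive), and uniqueness of the maximizer no longer tracks satisfiability in either direction. The paper avoids this by interleaving the control voters' approved edges \emph{inside} the variable stars (so choosing a literal edge kills their star option) and by having the clause voters additionally approve a control edge, which yields one canonical matching $M$ approved by \emph{all} voters; uniqueness of the maximum then reduces cleanly to ``no other matching is approved by everyone,'' i.e., to unsatisfiability.

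For $\kappa$-missing approval, your claimed equivalence ``the maximizer is unique iff $\mathrm{UW}(G,V)<k$'' fails when $\mathrm{UW}(G,V)>k$: then the maximum on $G'$ is attained only on the $G$-side (your $M^*$ has welfare $k$ and does not attain it), and if the original welfare maximizer happens to be unique, your instance has a strong Condorcet winner even though the \textsc{Utilitarian Welfare} instance is a yes-instance---the reduction outputs the wrong answer. The paper needs an explicit duplication device (the two-edge gadget $e_C^1,e_C^2$ with voters $v_1^*,v_2^*$) so that every high-welfare $G$-side matching comes in two tied copies, forcing non-uniqueness in \emph{all} yes-cases, while the $V^*$-side matching in the no-case is not duplicated because the $V^*$ voters' approval pins down $e_C^1$. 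Relatedly, in the no-case with $\kappa>0$ you must show the welfare maximizer is \emph{unique}, yet many matchings are approved by all control voters unless the $\kappa$-budget is exhausted and every remaining edge is forced; the paper achieves this with pendant edges on all original nodes plus $3\kappa$ control gadgets and the voters $V_1^*,\dots,V_4^*$. These are the core difficulties of the proof, so the proposal as it stands does not go through without essentially re-inventing this machinery.
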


\section{Maximal Matchings}
\label{sec:maximal}
In this section, we analyze a variant of voting on matchings in which we restrict the candidate space to maximal matchings. A matching is \emph{maximal} if no additional edge can be added without violating the matching property. We ask whether this restriction suffices to regain tractability for some of our solution concepts. Our investigation is motivated by the fact that this restriction prunes the candidate space by eliminating ``inefficient'', and thus suboptimal, candidates.
While we believe the maximal matching assumption is generally plausible, we also note that restricting to maximal matchings is not always natural or intuitive. For instance, assume we are in a resource allocation setting with three agents $\{A, B, C\}$ and three resources $\{1,2,3\}$, and a voter $v$ wants agent $A$ to get resource $1$ and agent $B$ to get resource $2$. If the voter does not care at all about agent $C$, forcing them to include edge $(C,3)$ in their preferred matching feels counterintuitive, as our model would treat the satisfaction $v$ derives from $(C,3)$ the same as for $(A,1)$ or $(B,2)$. Especially in the approval-based setting, where the inclusion of a single additional edge can decide between approval and disapproval, this restriction might be too strong.

As this setting is a special case of our general setting, polynomial-time algorithms apply directly. For each question, we either show that it becomes feasible, which we will detail in the following, or our NP-hardness reductions in the appendix hold in the case of maximal matchings. An overview of results for the restricted case can be found in \Cref{tab:maximal_overview}.

We present the two cases in which tractability is regained. We first show that under $\kappa$-missing approval, restricting to maximal matchings makes the number of \emph{relevant} candidates, i.e., candidates that are approved by at least one voter, polynomial for fixed $\kappa$.\footnote{In the general setting, no such bound is possible, since a voter's preferred matching may contain few edges; in that case there can even be exponentially many matchings that contain a preferred matching.} Consequently, standard algorithms that iterate over all relevant candidates yield XP algorithms parameterized by $\kappa$ for all studied~problems.
\begin{theorem}
  \label{thm:kappa_maximal_poly}
  For any $\kappa \in \mathbb{N}_0$, under $\kappa$-missing approval on maximal matchings,  the number of maximal matchings that are approved by at least one voter is in $\mathcal{O}\left(|V| |E|^{3\kappa}\right)$.
\end{theorem}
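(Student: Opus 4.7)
The plan is to bound, per voter, the number of maximal matchings approved by that voter, and then sum over the $|V|$ voters. I will exploit the fact that, in the restricted setting, each voter's preferred matching $M^*(v)$ is itself a maximal matching, which implies that $V(M^*(v))$ is a vertex cover of $G$: any edge both of whose endpoints lay outside $V(M^*(v))$ could be added to $M^*(v)$, contradicting maximality. This structural property is what will make the per-voter count polynomial.

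Fix a voter $v$. I would first enumerate the possible \emph{missing sets} $T := M^*(v) \setminus M \subseteq M^*(v)$ with $|T| \leq \kappa$; the number of such subsets is $\sum_{i=0}^{\kappa} \binom{|M^*(v)|}{i} = \mathcal{O}(|E|^{\kappa})$. Given such a $T$, any approved maximal matching $M$ decomposes as $M = (M^*(v) \setminus T) \cup M''$, where $M''$ is a matching in the induced subgraph $G[U]$ on the set $U := (W \setminus V(M^*(v))) \cup V(T)$ of vertices left uncovered by $M^*(v) \setminus T$.

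The key structural step, which I expect to be the main obstacle, is showing that every edge in $G[U]$ has at least one endpoint in $V(T)$: if both endpoints of some edge lay in $W \setminus V(M^*(v))$, that edge would witness a violation of the vertex cover property of $V(M^*(v))$. Since $|V(T)| \leq 2|T| \leq 2\kappa$ and each edge of $M''$ consumes at least one vertex of $V(T)$, any matching $M''$ in $G[U]$ has at most $2\kappa$ edges, so the number of choices for $M''$ is at most $\sum_{i=0}^{2\kappa} \binom{|E|}{i} = \mathcal{O}(|E|^{2\kappa})$.

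Combining the two enumerations yields $\mathcal{O}(|E|^{\kappa} \cdot |E|^{2\kappa}) = \mathcal{O}(|E|^{3\kappa})$ approved maximal matchings per voter, and summing over the $|V|$ voters gives the claimed $\mathcal{O}(|V| \cdot |E|^{3\kappa})$ bound. Any overcounting of maximal matchings approved by several voters only tightens the inequality, so no further care is needed there. Without the vertex cover observation, a voter with a small preferred matching could in principle approve exponentially many maximal matchings, so the maximality restriction on $M^*(v)$ is the crucial ingredient that enables the bound.
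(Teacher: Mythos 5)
Your proposal is correct and takes essentially the same route as the paper's proof: both bound the per-voter count by enumerating the missing set $T = M^*(v)\setminus M$ (at most $\mathcal{O}(|E|^{\kappa})$ choices) and then the at most $2\kappa$ additional edges, multiplying to $\mathcal{O}(|E|^{3\kappa})$ per voter and summing over $|V|$. Your vertex-cover formulation is just a rephrasing of the paper's key observation that, by maximality of $M^*(v)$, every edge of $M\setminus M^*(v)$ must be adjacent to one of the edges of $T$, so no further comparison is needed.
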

\begin{proof}
  Consider a graph $G$, a set of voters $V$, and a fixed $\kappa \in \mathbb{N}_0$. For an arbitrary voter $v \in V$, we want to bound the number of matchings that are approved by $v$.
  A matching $M$ is approved by $v$ if at most $\kappa$ edges from $M^*(v)$ are not in $M$.
  Let $k \leq \kappa$ and let $M$ be a maximal matching with $|M^*(v) \setminus M| = k$. We aim to bound $|M \setminus M^*(v)|$. Since $M^*(v)$ is maximal, any edge in $M \setminus M^*(v)$ must be adjacent to one of the $k$ edges from $M^*(v) \setminus M$. Each such edge has two endpoints, so $|M \setminus M^*(v)|\leq 2k$.

  There are $\binom{|M^*(v)|}{k}$ ways to choose $M^*(v) \setminus M$. For each of the $2k$ endpoints of the edges in $M^*(v) \setminus M$, we can choose from at most $|E|$ edges, yielding an upper bound of $|E|^{2k}$ possibilities for the edges in $M \setminus M^*(v)$. Thus, the total number of matchings that are approved by $v$ can be upper bounded by $\sum_{i=0}^\kappa \binom{|M^*(v)|}{i} |E|^{2i}$,
  which is in $\mathcal{O}(|E|^{3\kappa})$. Since there are $|V|$ voters, the total number of matchings approved by any voter is in $\mathcal{O}(|V| |E|^{3\kappa})$.
\end{proof}

Further, in contrast to the general case, \textsc{wPO-Verification} under one-edge approval utilities becomes trivial when restricting the candidate space to maximal matchings.\footnote{This result only holds since we assume that each edge in $G$ is approved by at least one voter. Otherwise, the problem would remain coNP-hard as we could reduce from \textsc{Egalitarian Welfare} similar to the unrestricted case.} As (apart from the trivial case where $G$ is empty) every maximal matching contains at least one edge, it is approved by at least one voter and therefore is weakly Pareto optimal.
\begin{observation}
  \label{thm:1-edge_check_wPO_maximal}
  \textsc{wPO-Verification} on maximal matchings under one-edge approval is solvable in constant time.
\end{observation}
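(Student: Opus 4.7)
The plan is to combine three facts: the standing assumption from \Cref{sec:prelim} that every edge of $G$ appears in the preferred matching of at least one voter, the definition of the one-edge approval model, and the basic observation that a maximal matching in a non-empty graph contains at least one edge. Together these will force the input matching to already be weakly Pareto optimal, so the verification algorithm can simply always answer yes.

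Concretely, I would first dispose of the degenerate case in which $G$ has no edges: the only matching is the empty one, and weak Pareto optimality holds vacuously since there is no other candidate. For the main case, I would fix the maximal matching $M$ given as input, pick any edge $e \in M$ (which exists by maximality of $M$ together with non-emptiness of $G$), and invoke the standing assumption to obtain a voter $v \in V$ with $e \in M^*(v)$. Then $M^*(v) \cap M \neq \emptyset$, so under one-edge approval $u(v, M) = 1$, which is already the maximum value the utility function can take. Because this voter cannot strictly prefer any other candidate to $M$, no strong Pareto improvement over $M$ can exist, and hence $M$ is weakly Pareto optimal.

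The verification algorithm therefore reduces to distinguishing the empty-graph case from all others, which is a constant-time check once $G$ is given. There is no substantial obstacle in the argument itself, but it is worth emphasizing where the standing assumption is doing the work: without the guarantee that every edge of $G$ is preferred by some voter, one could place voters whose preferred matchings are disjoint from $M$, and (as noted in the footnote accompanying the observation) the problem would revert to coNP-hardness via a reduction from \textsc{Egalitarian Welfare}.
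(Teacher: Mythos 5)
Your argument is correct and matches the paper's reasoning exactly: a maximal matching in a non-empty graph contains an edge, which by the standing assumption lies in some voter's preferred matching, so that voter approves the matching and attains maximum utility, ruling out any strong Pareto improvement. The handling of the empty-graph case and the remark on the necessity of the standing assumption likewise mirror the paper.
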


\begin{table}[t]
  \scriptsize
  \centering
  \renewcommand{\arraystretch}{1.5}
  \setlength{\tabcolsep}{3pt}
  \setlength{\arrayrulewidth}{0.6pt}
  \setlength{\doublerulesep}{0.5pt}

  \begin{tabularx}{\linewidth}{
    >{\centering\arraybackslash}m{0.22\linewidth}||
    Y|Y|Y
    }
     & \textbf{Affine}                                                                                              & \textbf{One-Edge} & \textbf{$\kappa$-missing} \\
    \hhline{=||===}

    \textsc{Utilitarian Welfare}
     & P [\Cref{thm:affine_utilitarian}]
     & NP-complete [\Cref{thm:approval_utilitarian}]
     & \cellcolor[HTML]{FFFFCC}{}                                                                                                                                   \\
    \hhline{-||--}

    \textsc{Egalitarian Welfare}
     & \multicolumn{2}{c|}{\centering NP-complete [\Cref{thm:affine_egalitarian,thm:1_edge_big_kappa_egalitarian}]}
     & \cellcolor[HTML]{FFFFCC}{}                                                                                                                                   \\
    \hhline{-||--}

    \textsc{wPO-Construction}
     & \multicolumn{2}{c|}{\centering P [\Cref{thm:construct_wPO}]}
     & \cellcolor[HTML]{FFFFCC}{}                                                                                                                                   \\
    \hhline{-||--}

    \textsc{sPO-Construction}
     & P [\Cref{thm:affine_construct_sPO}]
     & NP-hard [\Cref{thm:1-edge_big_kappa_construct_sPO}]
     & \cellcolor[HTML]{FFFFCC}{}                                                                                                                                   \\
    \hhline{-||--}

    \textsc{wPO-Verification}
     & \multirow{2}{*}{\makecell[c]{coNP-complete                                                                                                                   \\[\smallskipamount][\Cref{thm:check_wsPO}]}}
     & \cellcolor[HTML]{CCFFCC}{P [\Cref{thm:1-edge_check_wPO_maximal}]}
     & \cellcolor[HTML]{FFFFCC}{}                                                                                                                                   \\
    \hhline{-||~-}
    \textsc{sPO-Verification}
     &
     & coNP-complete [\Cref{thm:check_wsPO}]
     & \cellcolor[HTML]{FFFFCC}{}                                                                                                                                   \\
    \hhline{-||--}

    \textsc{wCW-Existence}
     & \multirow{2}{*}{\makecell[c]{NP-hard                                                                                                                         \\[\smallskipamount][\Cref{thm:affine_exists_wsCW}]}}
     & P [\Cref{thm:approval_exists_wCW}]
     & \cellcolor[HTML]{FFFFCC}{}                                                                                                                                   \\
    \hhline{-||~-}
    \textsc{sCW-Existence}
     &
     & NP-hard [\Cref{thm:approval_exists_sCW}]
     & \cellcolor[HTML]{FFFFCC}{}                                                                                                                                   \\
    \hhline{-||--}

    \textsc{wCW-Verification}
     & \multicolumn{2}{c|}{\multirow{2}{*}{\makecell[c]{coNP-complete                                                                                               \\[\smallskipamount][\Cref{thm:check_wsCW}]}}}
     & \cellcolor[HTML]{FFFFCC}{}                                                                                                                                   \\
    \hhline{-||~~}
    \textsc{sCW-Verification}
     & \multicolumn{2}{c|}{}
     & \multirow[t]{-10}{*}{\cellcolor[HTML]{FFFFCC}{\makecell[c]{P (for constant $\kappa$)                                                                         \\[\smallskipamount][\Cref{thm:kappa_maximal_poly}]}}} \\
  \end{tabularx}
  \caption{Overview of the results for maximal matchings, with differences to the general case marked in green and yellow.}\label{tab:maximal_overview}
\end{table}

\section{Conclusion}
\label{sec:conclusion}
We proposed and analyzed a model for single-winner voting on matchings. Unlike prior work, our model captures a broad class of collective decision-making problems in which multiple stakeholders hold global preferences over entire matchings, rather than local preferences concerning their own assignments.
Our comprehensive complexity taxonomy reveals sharp boundaries between tractable and intractable cases. While some variants, such as \textsc{Utilitarian Welfare} under affine utilities or \textsc{Egalitarian Welfare} under $0$-missing approval and $1$-missing approval, remain computationally feasible, most others quickly become intractable. This demonstrates that the structure imposed by the matching constraint does not compensate for the exponential size of the outcome space, highlighting a fundamental limitation of applying classical single-winner aggregation principles in domains with complex candidate constraints.  Further restricting the candidate space, for instance, to maximal matchings, does not generally restore tractability. Designing tractable yet meaningful solution concepts for single-winner voting with complex, exponentially-sized candidate spaces remains an important direction for future research.

Further, it would be interesting to identify natural restrictions on the candidate space or on voter preferences that yield additional tractable cases. In particular, the parameterized complexity with respect to the number of voters remains an open question. For instance, can \textsc{Egalitarian Welfare} under affine utilities be solved in polynomial time for a constant number of voters?

While our analysis focuses on notions of social desirability, it would also be promising to study the complexity of winner determination for specific voting rules in the context of voting on matchings. Assuming that voters provide weak rankings of matchings by their overlap with their preferred matching, our results already answer several such questions:
For Borda, polynomial-time solvability follows  from the polynomial-time solvability of \textsc{Utilitarian Welfare} under affine utilities (see Footnote~\ref{fn:borda_plurality}).
For Plurality, winner determination corresponds to \textsc{Utilitarian Welfare} under $0$-missing approval, implying  NP-hardness in the general case and polynomial-time solvability under maximal matchings.
For Condorcet-consistent voting rules, the exact complexity remains open, but our hardness results for Condorcet existence and verification under affine utilities suggest computational intractability.

\clearpage

\clearpage
\appendix

\section{Relationship to Approval-Based Multiwinner Voting} \label{app:RW}
We now discuss in more detail the relationship between single-winner voting on matchings and approval-based multiwinner voting, where given a set of candidates, a committee size $k$, and voters' approval preferences over these candidates, a size-$k$ subset of these candidates needs to be selected.
Under affine utilities (with $a_v=1$ and $b_v=0$) for our setting, both settings are intuitively closely related: in the matching setting, the utility a voter has for a matching is the size of the overlap between the selected matching and the voter's preferred matching; in approval-based multiwinner voting, the utility a voter has for a committee is the size of the overlap between the committee and the voter's approval ballot.
The main distinction arises from the differing feasibility: cardinality constraints in multiwinner voting and matching constraints in our setting.

Comparing the complexity landscape between approval-based multiwinner voting and voting on matchings with affine utilities, we obtain a very similar picture. In both settings, maximizing utilitarian welfare is solvable in polynomial time, whereas maximizing egalitarian welfare, as well as verifying Pareto optimality and Condorcet winners, becomes (co)NP-complete \citep{aziz2020computing,darmann2013hard}.\footnote{In approval-based multiwinner voting, maximizing utilitarian welfare is achieved by standard approval voting and the hardness of maximizing egalitarian welfare can be shown by a simple reduction from \textsc{Vertex Cover}.} In fact, as we show in \Cref{thm:check_wsPO}, there is a simple reduction from verifying Pareto optimality in the multiwinner setting to the corresponding problem for voting on matchings under affine utilities.
However, this and similar straightforward reduction ideas stop working when restricting the candidate space to maximal matchings, as approval sets can no longer be easily translated into preferred matchings.
More generally speaking, a generic reduction between the two settings, independent of the considered computational problem, remains elusive.

\section{Full Proofs}
In the following, we provide full proofs for all statements from the main part of the paper. For hardness results under affine and one-edge approval utilities, we always restrict the preferred matchings and input matchings to maximal matchings, effectively showing that hardness even holds if the candidate space is restricted to maximal matchings.

\subsection{Social Welfare}

\affineegalitarian*
\begin{proof}
  The problem is in NP, as a candidate that provides a utility of at least $k$ for every voter can be verified in polynomial time. For NP-hardness, we reduce from \textsc{$3$-Sat}, which is famously NP-complete (see, for example, \citep{computers-intractability-garey}).

  \begin{definition}[\textsc{$3$-Sat}]
    \label{def:$3$-Sat}
    Let $X = \{x_1, \dots, x_n\}$ be a set of $n$ variables. For a variable $x_i$, we call $x_i$ and $\overline{x_i}$ literals. Given a truth assignment for the variable, the literal $x_i$ is \textit{true} if and only if the variable is \textit{true}, and $\overline{x_i}$ is \textit{true} otherwise. Let $C = \{c_1, \dots, c_m\}$ be a set of clauses, each containing three literals over $X$. A clause is satisfied if at least one of its literals is \textit{true}. The goal is to determine whether there exists an assignment of truth values to the variables in $X$ that simultaneously satisfies all clauses in $C$.
  \end{definition}

  Construction:
  For a variable $x_i$, let $C_i = \{c \in C \mid x_i \in c \vee \overline{x_i} \in c\}$ be the set of clauses that contain literals from $x_i$, and let $\overline{C_i} = C \setminus C_i$.
  We construct a graph $G$ consisting of gadgets for the variables. For a variable $x_i \in X$, we construct a gadget forming a star with $(n-2) \cdot |\overline{C_i}| + 2$ edges. We call the first two edges $e_i$ and $\overline{e_i}$. The remaining edges are labeled $(e_i)^a_b$ for $a \in [n-2]$ and $b \in [|\overline{C_i}|]$.
  For each clause $c \in C$, we construct $(n-2)$ voters, so $V$ contains $(n-2) \cdot m$ voters in total. For all voters $v \in V$, we set $\alpha_v = 1$ and $\beta_v = 0$ as coefficients in their utility functions. If a clause contains the literal $x_i$, all voters for that clause approve edge $e_i$; if it contains $\overline{x_i}$, they approve $\overline{e_i}$.
  As each clause contains exactly three literals from three distinct variables\footnote{A clause that contains both literals of a single variable is trivially satisfied and can be excluded from the instance.}, there are $n-3$ variables not involved in the clause. For each of these, each voter approves exactly one of the remaining edges. Formally, assume clause $c$ is the $j$th clause not containing variable $x_i$. Then the $(n-2)$ voters for $c$ approve the edges $(e_i)^1_j$ to $(e_i)^{n-2}_j$.

  The input for the \textsc{Egalitarian Welfare} instance consists of graph $G$, voters $V$, and $k = 1$.
  Note that all preferred matchings are maximal matchings, as each voter approves exactly one edge in each gadget.

  \begin{figure}[t]
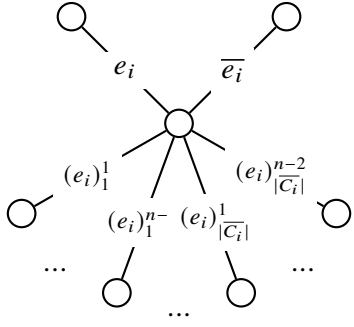

    \centering
    \includestandalone{figures/affine_sw_egalitarian_gadget}
    \caption{Gadget for variable $x_i$.}
    \label{fig:affine_egalitarian_gadget}
  \end{figure}

  Correctness:
  $[\Rightarrow]$
  Let $(X, C)$ be a Yes-instance for \textsc{$3$-Sat}. Then there is an assignment to $X$ that satisfies all clauses in $C$. Let $X^t$ be the variables assigned \textit{true}, and $X^f$ those assigned \textit{false}. Define the matching
  \(
  M := \{e_i \mid x_i \in X^t\} \cup \{\overline{e_i} \mid x_i \in X^f\}.
  \)
  This matching satisfies the matching property and is maximal, as exactly one edge from each gadget is selected. It provides utility of at least one to every voter, since each clause is satisfied, and each clause has at least one corresponding edge in $M$ that gives positive utility to all voters for that clause.

  $[\Leftarrow]$
  Assume there is a matching $M$ that provides utility of at least one to every voter. For a voter to have positive utility in $M$, at least one of their approved edges must be included. Consider an arbitrary clause $c$.
  It is not possible to satisfy all $n-2$ voters corresponding to clause $c$ using only edges not associated with literals in $c$: only $n-3$ such gadgets are available, and only one edge per gadget can be in the matching. Thus, at most $n-3$ voters can be satisfied this way. Since all $n-2$ voters have positive utility, at least one edge corresponding to a literal in $c$ must be present in $M$.
  Assigning \textit{true} to all variables $x_i$ with $e_i \in M$ and \textit{false} to those with $\overline{e_i} \in M$ yields a satisfying assignment for the \textsc{$3$-Sat} instance.
\end{proof}

\smallegalitarian*
\begin{proof}
  The proof for $\kappa = 1$ can be found in the main part of the paper. Here we show the case for $\kappa = 0$.

  Algorithm:
  Let $G = (W,E)$ be a graph and $V$ be the set of voters.
  We consider the set $\bigcup_{v \in V} M^*(v)$ and check whether it satisfies the matching constraint. This can be verified in $\mathcal{O}(|E|)$. If the constraint is satisfied, we return Yes; otherwise, No.

  Correctness:
  A voter approves a matching if and only if it includes their entire preferred matching. Therefore, a matching that is approved by all voters must be a superset of $\bigcup_{v \in V} M^*(v)$. If this union already violates the matching constraint, then no such matching can exist. Otherwise, $\bigcup_{v \in V} M^*(v)$ itself is a feasible matching with egalitarian welfare equal to one.
\end{proof}

\oneedgebigegalitarian*
\begin{proof}
  The problem is in NP, as a matching that is approved by every voter serves as a witness for a Yes-instance and can be verified in polynomial time.

  \paragraph{One-edge approval} We use exactly the same reduction as in \Cref{thm:affine_egalitarian}. There each voter was required to approve at least one edge, which is exactly the condition we have for approval under one-edge utilities.

  \paragraph{$\kappa$-missing approval} We reduce from \textsc{$\kappa+1$-Sat}. The idea is to use a similar construction as in \Cref{thm:affine_egalitarian}. Note that we do not require the preferred matchings to be maximal.

  Construction:
  Given a \textsc{$\kappa+1$-Sat} instance with variables $X$ and clauses $C$, we construct gadgets for each variable and one voter per clause. For each variable $x_i$, we construct a gadget consisting of a path with two edges, denoted $e_i$ and $\overline{e_i}$ (see \Cref{fig:affine_egalitarian_gadget_big_kappa}). For each clause $c_j$, we construct a voter $v_j$ who approves $e_i$ if $x_i \in c_j$, and approves $\overline{e_i}$ if $\overline{x_i} \in c_j$.

  \begin{figure}[tbp]
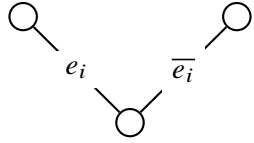

    \centering
    \includestandalone[width=0.2\textwidth]{figures/approval_sw_egalitarian_bigk_gadget}
    \caption{Gadget for variable $x_i$.}
    \label{fig:affine_egalitarian_gadget_big_kappa}
  \end{figure}

  Correctness:
  $[\Rightarrow]$
  Let $(X, C)$ be a Yes-instance for \textsc{$\kappa+1$-Sat}. Then there exists an assignment to the variables such that all clauses are satisfied. Let $X^t$ be the variables assigned \textit{true}, and $X^f$ the variables assigned \textit{false}. The edge set
  \(M := \{e_i \mid x_i \in X^t\} \cup \{\overline{e_i} \mid x_i \in X^f\}\)
  satisfies the matching property, since exactly one edge from each variable gadget is included. Each voter approves at least one edge in $M$, because they correspond to a clause that contains at least one \textit{true} literal. Since each voter's preferred matching has size $\kappa+1$, the inclusion of just one approved edge suffices for the voter to approve the matching. Hence, the matching has egalitarian welfare equal to one.

  $[\Leftarrow]$
  Assume there exists a matching $M$ with an egalitarian welfare of one. Then each voter approves at least one edge in $M$. Assigning \textit{true} to every variable $x_i$ for which $e_i \in M$, and \textit{false} otherwise, yields a satisfying assignment for the \textsc{$\kappa+1$-Sat} instance.
\end{proof}

\approvalutilitarian*
\begin{proof}
  The problem is in NP, as a matching that provides a utilitarian welfare of at least $k$ serves as a witness for a Yes-instance and can be verified in polynomial time.

  \paragraph{One-edge approval and $\kappa$-missing approval with $\kappa > 1$}
  We reduce from \textsc{Egalitarian Welfare}, which is NP-complete as shown in \Cref{thm:1_edge_big_kappa_egalitarian}.

  Construction:
  Given an instance of \textsc{Egalitarian Welfare}, defined by a graph $G$ and a set of voters $V$, we define the instance for \textsc{Utilitarian Welfare} using the same graph $G$, the same voters $V$, and set the target value $k = |V|$.

  Correctness:
  Since each voter contributes exactly one unit of utility if and only if they approve the matching, a utilitarian welfare of $|V|$ implies that all voters approve the matching. Hence, such a matching also has egalitarian welfare equal to one.

  \paragraph{$\kappa$-missing approval with $\kappa \leq 1$}
  We reduce from \textsc{Independent Set} (see, for example, \citep{computers-intractability-garey}).

  \begin{definition}[\textsc{Independent Set}]~\\
    Given a graph $G = (W,E)$ and a number $\mu \leq |W|$, the task is to decide whether there exists a set $I \subseteq W$ with $|I| \geq \mu$ such that for all edges $\{i,j\} \in E$, it holds that $i \notin I$ or $j \notin I$.
  \end{definition}
  Construction for $\kappa = 0$:
  Given an \textsc{Independent Set} instance with graph $\tilde G = (\tilde W, \tilde E)$ and $\mu \leq |\tilde W|$, construct a graph $G$ where for each edge $\epsilon = \{i,j\} \in \tilde E$ we construct a gadget consisting of three nodes and two edges, $e_\epsilon^i$ and $e_\epsilon^j$ (see \Cref{fig:Plurality_gadget}).
  \begin{figure}[tbp]
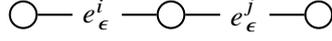

    \centering
    \includestandalone[width=0.2667\textwidth]{figures/approval_sw_utilitarian_smallk_gadget}
    \caption{Gadget for edge $\epsilon = \{i,j\}$.}
    \label{fig:Plurality_gadget}
  \end{figure}
  We construct $|V|$ voters $V$, one for each node in $\tilde W$. For a node $i \in \tilde W$, voter $v_i$ approves all edges $e_\epsilon^i$ for every $\epsilon \in E$.
  The \textsc{Utilitarian Welfare} instance is defined by graph $G$, voters $V$ and $k = \mu$.

  Construction for $\kappa = 1$:
  We begin with the same base construction as in the $\kappa = 0$ case and additionally, add a control gadget to $G$ consisting of a path of three edges $e_c^1, e_c^2$, and $e_c^3$ (see \Cref{fig:Plurality_gadget_control}). We then add $|V| + 1$ additional control voters $V^*$ who approve $e_c^1$ and $e_c^3$. For each voter in $V$ we add $e_c^2$ to the preferred matching.
  \begin{figure}[tbp]
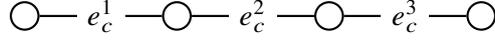

    \centering
    \includestandalone[width=0.4\textwidth]{figures/approval_sw_utilitarian_smallk_control}
    \caption{Control gadget.}
    \label{fig:Plurality_gadget_control}
  \end{figure}
  The \textsc{Utilitarian Welfare} instance is defined by graph $G$, voters $V \cup V^*$ and $k = |V|+1+\mu$.

  Correctness:
  For $\kappa = 1$, note that any matching including edge $e_c^2$ blocks both edges approved by the control voters. Thus, to exceed utility $|V|$, a matching must exclude $e_c^2$ and gain utility from all control voters. Consequently, each of the original voters must have all remaining edges of their preferred matchings included, which is equivalent to the $\kappa = 0$ case.

  $[\Rightarrow]$
  Suppose $(\tilde G, \mu)$ is a Yes-instance of \textsc{Independent Set}. Then there exists an independent set $I$ of at least $\mu$ mutually non-adjacent nodes. Define
  \(M := \{ e_\epsilon^i \mid \epsilon \in \tilde E,~ i \in I \text{ and } i \text{ is an endpoint of } \epsilon \}.\)
  This is a valid matching and is approved by each voter corresponding to a node in $I$. Hence, it achieves utilitarian welfare at least $\mu = k$.

  $[\Leftarrow]$
  Assume there is a matching $M$ that is approved by at least $k = \mu$ voters. Let $I := \{i \mid v_i \text{ approves } M \}$. Since the matching constraint prevents two edges from the same gadget (corresponding to edge $\epsilon = \{i,j\}$) from being included, no two nodes in $I$ are adjacent. Thus, $I$ is an independent set of size at least $\mu$.
\end{proof}

\subsection{Pareto Optimality}
\lemmascoreequivalence*
\begin{proof}
  Let $v \in V$ be a voter and let $u$ and $u'$ be two affine utility functions. Let $M, M'$ be two matchings. We aim to show that $u(v, M) > u(v, M')$ if and only if $u'(v, M) > u'(v, M')$, and $u(v, M)=u(v, M')$ if and only if $u'(v, M)=u'(v, M')$.
  Assume $|M \cap M^*(v)|=|M' \cap M^*(v)|$. By the definition of affine utility functions, it follows that $u(v, M) = u(v, M')$ and $u'(v, M) = u'(v, M')$.
  Now assume that $|M \cap M^*(v)| > |M' \cap M^*(v)|$. Since $\alpha_v > 0$ for all affine utility functions, it follows that $u(v, M) > u(v, M')$ and $u'(v, M) > u'(v, M')$.

  Therefore, the relative ranking of the matchings for a given voter depends only on the size of the overlap, not on the specific coefficients in the utility functions.
\end{proof}
\verificationhardness*
\begin{proof}
  Both problems are in coNP, as a matching that weakly or strongly Pareto dominates $M$ serves as a witness for a No-instance and can be verified in polynomial time.

  \paragraph{Affine Utilities} We show coNP-hardness by reducing from the verification of Pareto optimality in the multiwinner setting which was shown to be coNP-complete by \citet{aziz2020computing}. \citet{aziz2020computing} show coNP-completeness for verifying strong Pareto optimality in a preference model where agents only approve or disapprove candidates and strictly prefer a committee which contains more approved candidates.

  First, we extend their result to verifying weak Pareto optimality and then we provide reductions to \textsc{wPO-Verification} and \textsc{sPO-Verification} under affine utilities.

  For the coNP-hardness of verifying weak Pareto optimality in the multiwinner setting we reduce from \textsc{Vertex Cover}. Let $(\tilde G = (\tilde W, \tilde E), k)$ be an instance for \textsc{Vertex Cover}. We build the following instance for verifying Pareto optimality:
  \begin{itemize}
    \item The candidates are $A = \tilde W \cup D$, with $D = \{d_1, \dots d_k\}$
    \item The voters are the edges $N = \tilde E$, such that for an edge $e = \{u,w\}$ the corresponding voter approves candidates $\{u,w\}$.
    \item As input committee we choose $D$.
  \end{itemize}
  It is easy to verify that $D$ is not weakly Pareto optimal if and only if there exists a vertex cover for $G$ of size at most $k$, as the committee consisting of the corresponding candidates is approved by all voters, which is a Pareto improvement over $D$.

  \medskip
  We now reduce the verification of weak Pareto optimality in the multiwinner setting to \textsc{wPO-Verification} under affine utilities.

  Construction:
  Let $(N,A,D)$ be an instance for weak Pareto optimality in the multiwinner setting. We construct a \textsc{wPO-Verification} instance $(G, V, M)$ as follows:
  \begin{itemize}
    \item For each alternative $a \in A$ we construct a gadget consisting of two edges forming a path. We call the edges $e_a$ and $\overline e_a$.
    \item We add one additional gadget consisting of a single edge $e^*$.
    \item For each agent $i \in N$ we construct a voter $v_i$ that has $e_a$ in their preferred matching if and only if $a$ approved by agent $i$.
    \item We add two additional voter $v_1^*$ and $v_2^*$. Voter $v_1^*$ approves edges $\overline e_a$ for all $a \in A$ while $v_2^*$ all edges $e_a$. Both have and edge $e^*$ in their preferred matching.
    \item For all voters $v$ we choose $\alpha_v = 1$ and $\beta_v = 0$.
  \end{itemize}
  We choose matching $M = \{e_a ~|~ a \in D\} \cup \{\overline e_a ~|~ a \in A \setminus D\}$ as input matching.

  Correctness:
  $[\Rightarrow]$ Let $(N,A,D)$ be a No-instance for weak Pareto optimality in the multiwinner setting. Then there is a committee $D'$ of size $k$ such that every agent in $N$ prefers $D'$ over $D$. The matching $M' = \{e_a ~|~ a \in D'\} \cup \{\overline e_a ~|~ a \in A \setminus D'\} \cup \{e^*\}$ is strictly preferred by all voters in $V$ and $M$ is not weakly Pareto optimal.

  $[\Leftarrow]$ Let $(G, V, M)$ be a no instance for \textsc{wPO-Verification}. That is, there is a matching $M'$ that is strictly preferred by all voters. For a matching to be strictly preferred by all voters, voter $v_1^*$ must have a utility of at least $|A|-k+1$ and $v_2^*$ must have a utility of at least $k+1$. Thus, for exactly $k$ gadgets the edge $e_a$ must be selected. When choosing the set $\{a ~|~ e_a \in M'\}$ we get a committee of size $k$ which is strictly preferred over $D$ by all agents in $N$.

  \medskip
  For the coNP-completeness of \textsc{sPO-Verification} we reduce from verifying strong Pareto optimality in the multiwinner setting. The reduction is almost identical to the reduction from verifying weak Pareto optimality to \textsc{wPO-Verification}. The only difference is that we do not add edge $e^*$. The correctness holds analogously.

  \medskip
  Note that both reductions do not restrict the candidate space on maximal matchings. We proof the hardness for maximal matchings separately by reducing from the respective cases with arbitrary matchings.

  For \textsc{wPO-Verification} with maximal matchings we reduce from \textsc{wPO-Verification} with arbitrary matchings and create new edges that are approved by the voters in $V$ to create maximal preferred matchings. We duplicate the voters to prevent them from using the newly added edges to improve their utility. To make the input matching $M$ a maximal matching, we add additional edges to the graph that are approved by new voters corresponding to the nodes in the original graph.

  Construction:
  Given a \textsc{wPO-Verification} instance with arbitrary matchings, defined by $(G = (W, E), V, M)$, we construct the following \textsc{wPO-Verification} instance with maximal matchings.
  We construct a graph $\tilde G$ by adding edges to $G$.
  \begin{itemize}
    \item For each node $u \in W$, let $V_u$ be the set of voters that approve any edge incident to $u$. We add $|V \setminus V_u| \cdot (|W|+4) + |W| + 3$ new edges incident only to node $u$:
          \begin{itemize}
            \item Let the first three edges be $(e^*_u)_1$, $(e^*_u)_2$, and $(e^*_u)_3$.
            \item Let the next $|W|$ edges be $(\tilde e_u)_1, \dots, (\tilde e_u)_{|W|}$.
            \item Let the remaining edges be $(e_u)^a_b$ for $a \in [|W|+4]$ and $b \in [|V \setminus V_u|]$.
          \end{itemize}
    \item We add three additional star gadgets, each with $|V| \cdot (|W|+4) + 2$ edges and disjoint from the rest of the graph. For $k \in [3]$:
          \begin{itemize}
            \item Let the first two edges in the $k$th star be $(e_{S_k}^*)_1$ and $(e_{S_k}^*)_2$.
            \item Let the remaining edges be $(e_{S_k})^a_b$ for $a \in [|W|+4]$ and $b \in [|V|]$.
          \end{itemize}
  \end{itemize}
  We now construct the voter set $\tilde{V}$.
  For each voter $v \in V$, we create $|W|+4$ voters named $v^1, \dots, v^{|W|+4}$.
  Additionally, we create $|W|$ voters~$\tilde v_1, \dots, \tilde v_{|W|}$ and three voters $v_s^1, v_s^2, v_s^3$.
  Their preferences are defined as follows:
  \begin{itemize}
    \item For a voter $v \in V$ and $a \in [|W|+4]$ the voter $v^a$ approves $M^*(v)$. Let $u \in W$ such that $v \notin V_u$. Voter $v^a$ approves one of the edges that were added to $u$. Formally, let $v$ be the $b$th voter not in which does not approve any edge incident to $u$. Then voter $v^a$ approves edge $(e_u)^a_b$.
    \item For a voter $v \in V$, $a \in [|W|+4]$ and $k \in [3]$ a voter $v^a$ corresponding to a voter $v$ in the original graph approves one edge in the $k$th star gadget. Formally let $v$ be the $k$th original voter, then $v^a$ approves edge $(e_{S_k})^a_b$.
    \item For $i \in [|W|]$ and $u \in W$ the voter $\tilde{v_i}$ approves edges $(\tilde e_u)_i$. In the star gadgets the voter approves $(e_{S_k}^*)_1$ for $k \in [3]$.
    \item For each node $u$ and $k \in [3]$ the voter $v_s^k$ approves edge $(e^*_u)_k$. In the star gadget $v_s^k$ approves $(e_{S_k}^*)_2$ and $(e_{S_l}^*)_1$ for $l \in [3] \setminus \{k\}$.
  \end{itemize}
  The transformed input matching $\tilde M$ is defined as follows:
  \begin{itemize}
    \item It includes all edges in $M$.
    \item For every node $i \in W$ not incident to an edge in $M$, include edge $(\tilde e_i)_i$.
    \item For each star gadget $k \in [3]$, include edge $(e_{S_k}^*)_2$.
  \end{itemize}
  Note that all preferred matchings and the input matching $\tilde M$ are maximal matchings.
  The transformed instance is $(\tilde G, \tilde{V}, \tilde M)$.

  Correctness:
  $[\Rightarrow]$
  Suppose $(G, V, M)$ is a No-instance for \textsc{wPO-Verification}. Then, there exists a matching $M'$ that strongly Pareto dominates $M$. Define
  \(
  \tilde M' := M' \cup \{(e_{S_k}^*)_1 \mid k \in [3]\}.
  \)
  The utilities behave as follows:
  \begin{itemize}
    \item Each $v^a \in \tilde{V}$ has the same utility as the corresponding $v \in V$. Since $v$ is strictly better off in $M'$, so is $v^a$ in $\tilde M'$.
    \item Each $\tilde v_i$ has utility at most one in $\tilde M$, but utility $3$ in $\tilde M'$ due to the three edges in the star gadgets.
    \item Each $v_s^k$ has utility one in $\tilde M$ and utility $2$ in $\tilde M'$.
  \end{itemize}
  Hence, $\tilde M'$ strictly improves the utility of all voters in $\tilde{V}$, and $(\tilde G, \tilde{V}, \tilde M)$ is a No-instance for \textsc{wPO-Verification} with maximal matchings.

  $[\Leftarrow]$
  Suppose $(\tilde G, \tilde{V}, \tilde M)$ is a No-instance for \textsc{wPO-Verification}. Then there exists a matching $\tilde M'$ where every voter in $\tilde{V}$ is strictly better off than in $\tilde M$.
  Since all edges in $\tilde G$ are incident to one of the $|W|$ original nodes from $G$ or one of the three central nodes of the star gadgets, any matching can contain at most $|W| + 3$ edges.

  Let $v \in V$, with corresponding voters $v^1, \dots, v^{|W|+4} \in \tilde{V}$. Since all $|W|+4$ voters strictly improve, and each additional edge (outside of $G$) is approved by at most one of them, it is impossible to satisfy all using only added edges. Therefore, the increased utility must come from the original subgraph $G$.

  As the preferences of $v^a$ in $G$ equal those of $v$, and they are all strictly better off in $\tilde M'$, it follows that $M'$ strictly improves the utility of all voters in $V$ compared to $M$. Hence, $(G, V, M)$ is a No-instance for \textsc{wPO-Verification}.

  In \Cref{fig:affine-wPO-max-example1,fig:affine-wPO-max-example2,fig:affine-wPO-max-example3}, we show an example of the reduction for a simple No-instance of \textsc{wPO-Verification}.

  \begin{figure}[tbp]
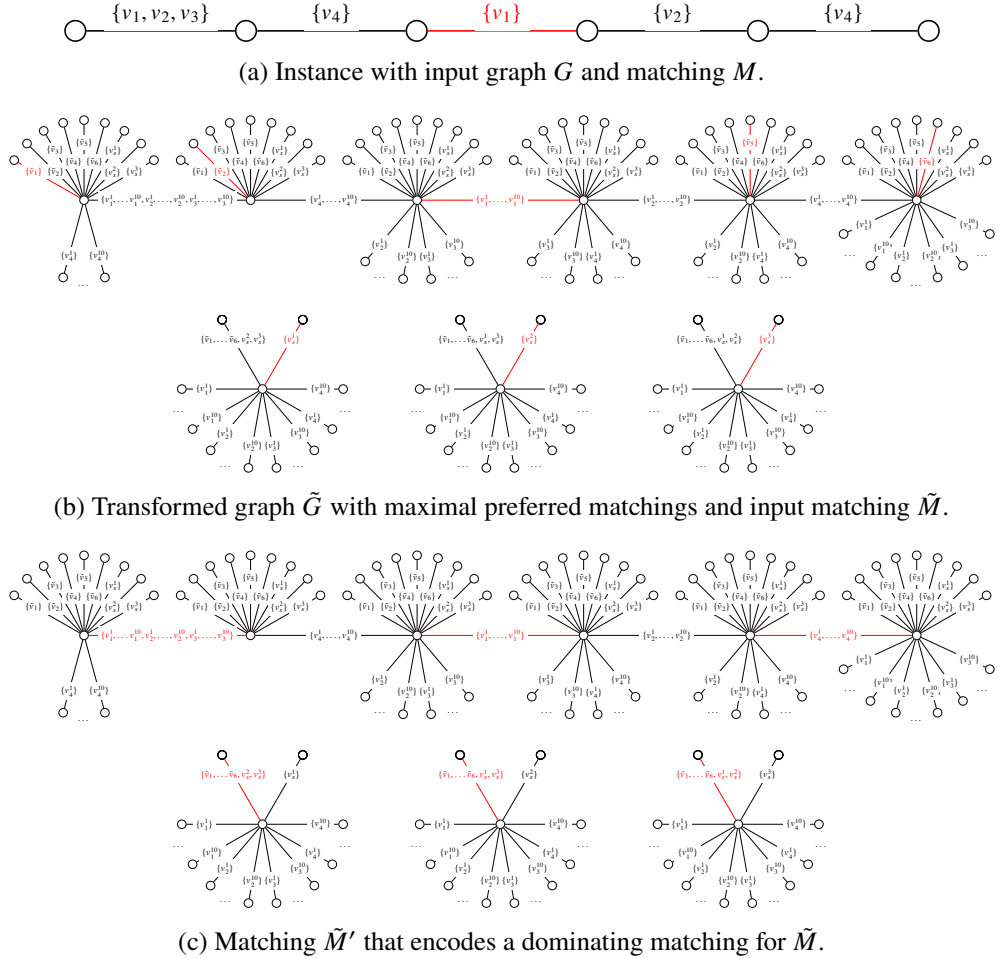

    \centering

    \begin{subfigure}{0.9\linewidth}
      \centering
      \includestandalone[width=0.8\linewidth]{figures/affine_pareto_strong_input}
      \caption{Instance with input graph $G$ and matching $M$.}
      \label{fig:affine-wPO-max-example1}
    \end{subfigure}

    \vspace{0.8em}

    \begin{subfigure}{0.9\linewidth}
      \centering
      \includestandalone[width=0.9\linewidth]{figures/affine_pareto_strong_transformed}
      \caption{Transformed graph $\tilde G$ with maximal preferred matchings and input matching $\tilde M$.}
      \label{fig:affine-wPO-max-example2}
    \end{subfigure}

    \vspace{0.8em}

    \begin{subfigure}{0.9\linewidth}
      \centering
      \includestandalone[width=0.9\linewidth]{figures/affine_pareto_strong_transformed1}
      \caption{Matching $\tilde M'$ that encodes a dominating matching for $\tilde M$.}
      \label{fig:affine-wPO-max-example3}
    \end{subfigure}

    \caption{Example for \textsc{wPO-Verification}.}
    \label{fig:affine-wPO-max-example}
  \end{figure}

  \medskip
  For the coNP-hardness of \textsc{sPO-Verification} with maximal matchings we reduce from \textsc{sPO-Verification} with arbitrary matchings.
  The idea is to define additional edges that are approved by all voters in the original instance to create maximal preferred matchings. We then use new voters to block these edges from being in a matching that creates a Pareto improvement.

  First, we show that a matching $M$ that is not maximal can never be a strongly Pareto optimal matching. Assume $M$ is not maximal. Then we can find a strict superset $M'$ of $M$ that is still a matching. By definition, every edge is approved by at least one voter and utilities are strictly increasing which higher overlap. Therefore, there is at least one voter who assigns a higher utility to $M'$, and $M'$ weakly Pareto dominates $M$.

  Construction:
  Given an instance $(G, V, M)$ for \textsc{sPO-Verification} with arbitrary matchings, we construct an instance with maximal matchings. If the input matching $M$ is \textit{not} a maximal matching, then we return a trivial No-instance for \textsc{sPO-Verification} with maximal matchings.

  Let $M$ be a maximal matching. We create a graph $\tilde G$, by adding nodes and edges to $G$. For every node $u \in W$ in the original graph we add six nodes and seven edges which form a gadget as described in \Cref{fig:affine_sPO_max_gadget}. We also define the following edge sets:
  \begin{itemize}
    \item $E^{*1} =  \left \{e_u^{*1}   \mid u \in W \right \}$
    \item $E^{*2} =  \left \{e_u^{*2}   \mid u \in W \right \}$
    \item $E^{**1} = \left \{e_u^{**1}  \mid u \in W \right \}$
    \item $E^{**2} = \left \{e_u^{**2}  \mid u \in W \right \}$
  \end{itemize}
  The set of voters $\tilde{V}$ consists of the original voters $V$ and two additional voters $v_1^*$ and $v_2^*$. The preferred matchings are defined as follows:
  \begin{itemize}
    \item For a voter $v \in V$ the corresponding $\tilde v \in \tilde{V}$ approves all edges in $M^*(v)$. In addition the voter approves $\{e_u^2, e_u^3 \mid u \in W\}$. If $v$ does not approve any edge incident to node $u$, $\tilde v$ also approves edge $e_u^1$.
    \item Voter $v_1^*$ approves all edges in $E^{**1}$ and in $E^{*1}$ and voter $v_2^*$ approves all edges in $E^{**2}$ and in $E^{*2}$.
  \end{itemize}
  We define the input matching $\tilde M$ by adding $E^{**1}$ and $E^{**2}$ to $M$. An example for the transformation of the graph and input matching $\tilde M$ can be seen in \Cref{fig:affine_check_sPO_maximal_example1,fig:affine_check_sPO_maximal_example3}.
  The \textsc{sPO-Verification} with maximal matching is defined by $(\tilde G, \tilde{V}, \tilde M)$.
  \begin{figure}[tbp]
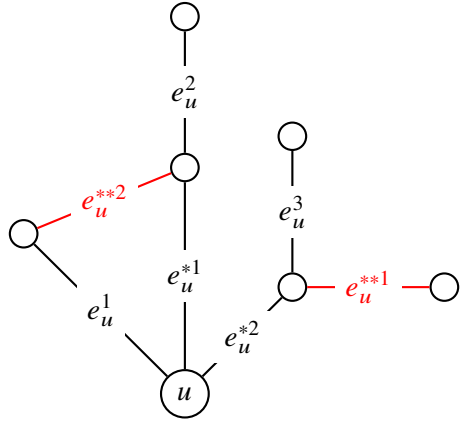

    \centering
    \includestandalone{figures/affine_pareto_weak_gadget}
    \caption{Gadget for node $u$.}
    \label{fig:affine_sPO_max_gadget}
  \end{figure}

  Correctness:
  $[\Rightarrow]$
  Let $(G, V, M)$ be a No-instance for \textsc{sPO-Verification} with arbitrary matchings. We can assume that $M$ is a maximal matching, as otherwise a trivial No-instance is returned. There is a matching $M'$ in $G$ that weakly dominates matching $M$. Consider matching $\tilde M' := M' \cup E^{**1} \cup E^{**2}$ in $G'$. It is a matching, as $E^{**1} \cup E^{**2}$ are not adjacent to any nodes in the original graph. Voters $v_1^*$ and $v_2^*$ have the same utility for $\tilde M$ and $\tilde M'$. For all voters in $V$ it holds by the definition of $M'$ that they have at least the same utility as in $M$ and at least one voter has strictly higher utility. Thus, the corresponding voters in $\tilde{V}$ have increased utility in $\tilde M'$. Therefore, $\tilde M'$ weakly Pareto dominates $\tilde M$ and the transformed instance is a No-instance of \textsc{sPO-Verification} with maximal matchings.

  $[\Leftarrow]$
  Let $(\tilde G, \tilde{V}, \tilde M)$ be a No-instance for \textsc{wPO-Verification} with maximal matchings. Therefore, there is a matching $\tilde M'$ that weakly Pareto dominates $\tilde M$. We can note the following about $\tilde M'$.
  \begin{enumerate}
    \item For every node $u \in W$ at most two of the edges $\{e_u^{*1}, e_u^{*2}, e_u^{**1}, e_u^{**2}\}$ are in $\tilde M'$.
    \item For every node $u \in W$ exactly two of the edges $\{e_u^{*1}, e_u^{*2}, e_u^{**1}, e_u^{**2}\}$ are in $\tilde M'$.
    \item For every node $u \in W$ none of the edges $\{e_u^1, e_u^2, e_u^3\}$ are in $\tilde M'$.
    \item When restricting the transformed graph to the subgraph induced by the original graph $G$, matching $\tilde M'$ on that subgraph weakly Pareto dominates matching $M$ for voters $V$.
  \end{enumerate}
  We now prove these five statements.
  \begin{enumerate}
    \item Considering any subset set of three edges from $\{e_u^{*1}, e_u^{*2}, e_u^{**1}, e_u^{**2}\}$, two of them share at least one node. As this violates the matching constraint, at most two of those edges can be in any matching.
    \item Voters $v_1^*$ and $v_2^*$ each have a utility of $|V|$ in matching $\tilde M$. As matching $\tilde M'$ weakly Pareto dominates $\tilde M$ they both have utility of at least $|V|$ in $\tilde M'$. Using (1) we know that each gadget can contribute at most $2$ utility to the sum of their utilities. Since there are $|V|$ gadgets and the utilities must sum to at least $2|V|$, each gadget must contribute a utility of exactly $2$.
    \item Considering every set of two edges from $\{e_u^{*1}, e_u^{*2}, e_u^{**1}, e_u^{**2}\}$ those edges form a maximal matching in the gadget corresponding to $u$. Therefore no additional edge in that gadget can be added.
    \item From (1) and (2) we know that the utility that is contributed by the gadgets is $2|V|$. As voters $v_1^*$ and $v_2^*$ only get utility through the gadgets and both have utility if $|V|$ in $\tilde M$, they both have utility $|V|$ in $\tilde M'$ and do not improve. Using (3) we know that no utility for the voters correspinding to original voters can come from the gadgets. In $\tilde M'$ all original voters have at least the same utilities as in $\tilde M$ and at least one voter has strictly increased utility. That also holds if we restrict $\tilde M'$ to the subgraph induced by original graph $G$. Therefore, $\tilde M'$ on that subgraph weakly Pareto dominates matching $M$.
  \end{enumerate}
  From (4) it follows directly that $(G, V, M)$ is a No-instance for \textsc{sPO-Verification}. An example can be seen in \Cref{fig:affine_check_sPO_maximal_example1,fig:affine_check_sPO_maximal_example2,fig:affine_check_sPO_maximal_example3,fig:affine_check_sPO_maximal_example4}.

  \begin{figure}[tbp]
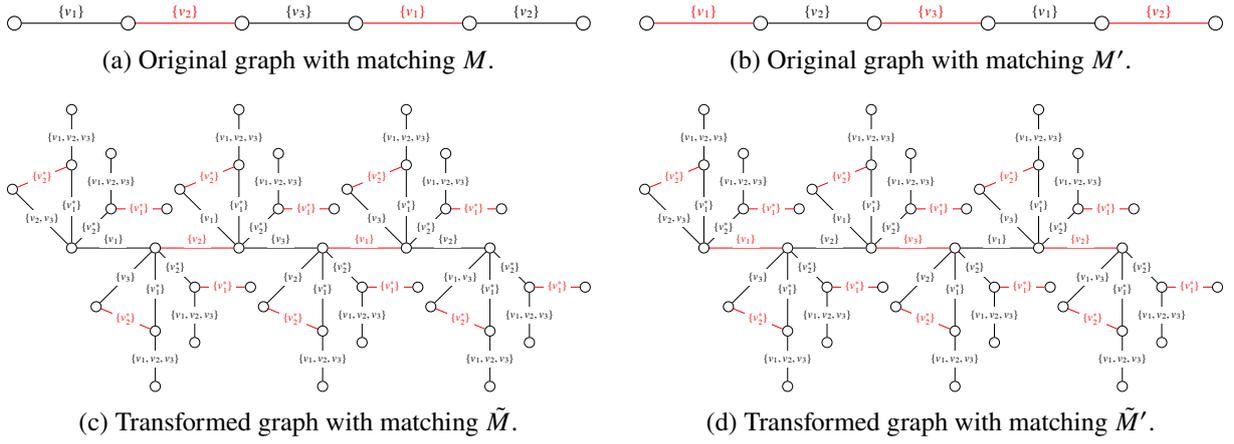

    \centering

    \begin{subfigure}{0.48\linewidth}
      \centering
      \includestandalone[width=\linewidth]{figures/affine_pareto_weak_input}
      \caption{Original graph with matching $M$.}
      \label{fig:affine_check_sPO_maximal_example1}
    \end{subfigure}\hfill
    \begin{subfigure}{0.48\linewidth}
      \centering
      \includestandalone[width=\linewidth]{figures/affine_pareto_weak_input1}
      \caption{Original graph with matching $M'$.}
      \label{fig:affine_check_sPO_maximal_example2}
    \end{subfigure}

    \vspace{0.8em}

    \begin{subfigure}{0.48\linewidth}
      \centering
      \includestandalone[width=\linewidth]{figures/affine_pareto_weak_transformed}
      \caption{Transformed graph with matching $\tilde M$.}
      \label{fig:affine_check_sPO_maximal_example3}
    \end{subfigure}\hfill
    \begin{subfigure}{0.48\linewidth}
      \centering
      \includestandalone[width=\linewidth]{figures/affine_pareto_weak_transformed1}
      \caption{Transformed graph with matching $\tilde M'$.}
      \label{fig:affine_check_sPO_maximal_example4}
    \end{subfigure}

    \caption{Example for \textsc{sPO-Verification}.}
    \label{fig:affine_check_sPO_maximal_example}
  \end{figure}

  \paragraph{Approval-based Utilities}
  When considering \textsc{wPO-Verification} under one-edge approval and $\kappa$-missing approval with $\kappa > 1$ we reduce from the corresponding \textsc{Egalitarian Welfare} problem, which was shown to be NP-complete in \Cref{thm:1_edge_big_kappa_egalitarian}.

  Construction:
  Let $G$ be a graph and $V$ a set of voters. The instance for \textsc{wPO-Verification} is defined by the graph $G$, voters $V$, and $M = \emptyset$.\footnote{Note that in this theorem we do not require the matchings to be maximal. In \Cref{thm:1-edge_check_wPO_maximal} we show that \textsc{wPO-Verification} under one-edge approval is trivial when restricting to maximal matchings.}

  Correctness:
  $[\Rightarrow]$
  Assume $(G, V)$ is a Yes-instance for \textsc{Egalitarian Welfare}. Then there is a matching $M'$ that is approved by all voters in $V$. This matching strongly Pareto dominates $M = \emptyset$.

  $[\Leftarrow]$ Assume there is a matching $M'$ that strongly Pareto dominates $M = \emptyset$. In this matching, all voters must have strictly higher utility than in $M$. As all voters have utility zero in $M$, they all have utility one in $M'$, so $(G, V)$ is a Yes-instance for \textsc{Egalitarian Welfare}.

  \medskip
  For \textsc{sPO-Verification} under one-edge approval we also reduce from the respective \textsc{Egalitarian Welfare} problems. The idea is to create an additional edge and additional voters such that all voters except one approve a matching $M$ in the new graph. The one remaining voter blocks the edge responsible for the approval from the original voters. Only when the instance is a Yes-instance for \textsc{Egalitarian Welfare} does there exist a matching that weakly Pareto dominates $M$. We use that we showed that \textsc{Egalitarian Welfare} is NP-complete even on graphs that are collections of stars.

  Construction:
  Let $G$ be a graph that is a collection of stars, and let $V$ be a set of voters. We divide $G$ into $m$ connected components $S_1, \dots, S_m$ such that each component is a star.

  We construct a graph $G'$ that includes $G$ as a subgraph and adds the following new nodes and edges:
  \begin{itemize}
    \item A control gadget consisting of two edges forming a path, denoted $e_1$ and $e_2$.
    \item For each star component $S_i$, we add $m + 1$ new edges connected to the center of the component, denoted $e^i_1$ through $e^i_{m+1}$.
  \end{itemize}

  Next, we construct $m + 1$ new voters $V^* := \{v^*_1, \dots, v^*_{m+1}\}$. For each $j \in [m+1]$, the preferred matching of voter $v^*_j$ is $M^*(v^*_j) := \{e^i_j \mid i \in [m]\} \cup \{e_2\}$.
  For each original voter $v \in V$, we add edge $e_1$ to their preferred matching.

  Note that all preferred matchings are still maximal matchings if they were maximal before. Let $M := \{e_1\} \cup \{e^i_i \mid i \in [m]\}$. This matching is approved by all voters except voter $v^*_{m+1}$.
  The \textsc{sPO-Verification} instance is defined by graph $G'$, voters $V \cup V^*$, and matching $M$.

  Correctness:
  $[\Rightarrow]$ Assume $(G, V)$ is a Yes-instance for \textsc{Egalitarian Welfare}. That is, there is a matching $M'$ in $G$ that is approved by all voters in $V$. Consider matching $M' \cup \{e_2\}$ in $G'$. This matching is approved by all voters in $V$ and all voters in $V^*$, and therefore weakly Pareto dominates $M$.

  $[\Leftarrow]$ Now assume there is a matching $M'$ in $G'$ that weakly Pareto dominates $M$. This matching must be approved by all voters, since all voters except $v^*_{m+1}$ already approve $M$.

  Note that for each of the $m$ star components, at most one edge can be included in $M'$. Since no two voters in $V^*$ approve the same edge in the star components, at most one of them can be satisfied using an edge from one star. Thus, the only way for a matching to be approved by all voters in $V^*$ is to include edge $e_2$. Consequently, edge $e_1$ is not in $M'$. All other edges approved by voters from $V$ are in the original subgraph $G$. Hence, matching $M'$ restricted to $G$ is approved by all voters in $V$ and is therefore a matching with egalitarian welfare one.

  \medskip
  For \textsc{sPO-Verification} under $\kappa$-missing approval with $\kappa > 1$ we  reduce from \textsc{$\kappa+1$-Sat}. The idea is to create voters corresponding to the clauses of the \textsc{$\kappa+1$-Sat} instance. Using enough control gadgets, we can ensure that each clause-voter requires only one edge from the variable gadgets to approve the matching. Additional control voters prevent a variable from being simultaneously assigned \textit{true} and \textit{false}. Our proof differs from \Cref{thm:1_edge_big_kappa_egalitarian} in that we need to construct a matching approved by all but one voter.

  Construction:
  Consider an \textsc{$\kappa+1$-Sat} instance with variables $X = \{x_1, \dots, x_n\}$ and clauses $C = \{c_1, \dots c_m\}$.
  We now construct an instance for \textsc{sPO-Verification}.

  The graph $G$ consists of the following components:
  \begin{itemize}
    \item We construct one start gadget with two edges forming a path. We call the two edges $e_s^1$ and $e_s^2$ (see \Cref{fig:subfigure:wPO_approaval_start}).
    \item For each variable $x_i \in X$ we construct a gadget consisting of a circle with six edges. We call these edges $e_i^1$ to $e_i^6$ (see \Cref{fig:subfigure:wPO_approaval_variable}).
    \item We then construct $2 \kappa$ control gadgets. The first $\kappa$ of them are stars consisting of three edges while the last $\kappa$ of them are paths of length two. We call the edges $\hat e_{j}^1$ to $\hat  e_{j}^3$ for $j \in [\kappa]$ and $\hat e_{j}^1$ to $\hat e_{j}^2$ for $j \in [\kappa+1, 2\kappa]$ (see \Cref{fig:subfigure:wPO_approaval_control}).
  \end{itemize}

  \begin{figure}[tbp]
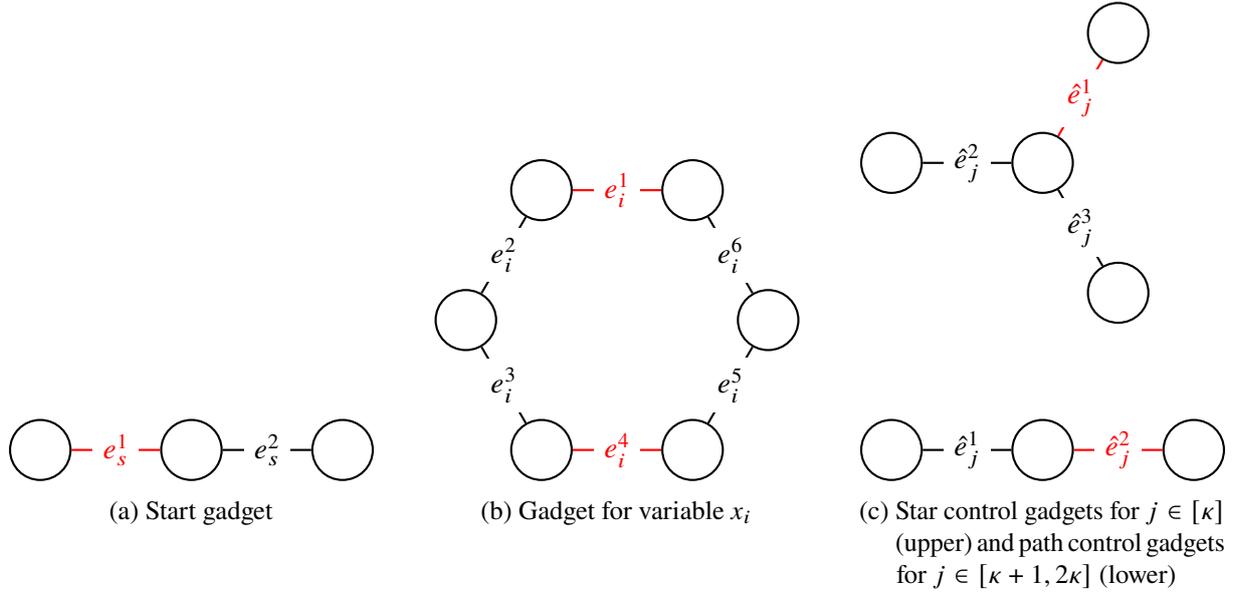

    \centering

    \begin{subfigure}[t]{0.3\linewidth}
      \centering
      \includestandalone[width=\linewidth]{figures/approval_pareto_weak_start}
      \caption{Start gadget}
      \label{fig:subfigure:wPO_approaval_start}
    \end{subfigure}\hfill
    \begin{subfigure}[t]{0.3\linewidth}
      \centering
      \includestandalone[width=\linewidth]{figures/approval_pareto_weak_variable}
      \caption{Gadget for variable $x_i$}
      \label{fig:subfigure:wPO_approaval_variable}
    \end{subfigure}\hfill
    \begin{subfigure}[t]{0.3\linewidth}
      \centering
      \includestandalone[width=\linewidth]{figures/approval_pareto_weak_gadget}
      \caption{Star control gadgets for $j\in[\kappa]$ (upper) and path control gadgets for $j\in[\kappa+1,2\kappa]$ (lower)}
      \label{fig:subfigure:wPO_approaval_control}
    \end{subfigure}

    \caption{Gadgets used in the reduction for \textsc{wPO-Verification} under approval utilities.}
    \label{fig:wPO_approval_gadgets}
  \end{figure}

  We construct $|C| + 2|X| + 3$ voters. Their approvals are defined as follows:
  \begin{itemize}
    \item For each clause $c_j \in C$ we have a voter $v_j$. Voter $v_j$ approves $e_i^1$ if $x_i \in c_j$ and approves $e_i^4$ if $\overline{x_i} \in c_j$.
    \item For each variable $x_i$ we have two voters $v_i^a$ and $v_i^b$. Voter $v_i^a$ approves $e_i^2$ and $e_i^5$ while $v_i^b$ approves $e_i^3$ and $e_i^6$. In addition both voters approve edge $e_s^1$ and edges $\hat e_{j}^3$ for $j \in [\kappa - 2]$.
    \item We have an additional voter $v^*$. This voter approves $e_s^2$ and $\hat e_{j}^3$ for $j \in [\kappa]$.
    \item We then have two voters $v_1^*$ and $v_2^*$. Voter $v_1^*$ approves $\hat e_{j}^1$ for $j \in [2\kappa]$ and $v_2^*$ approves $\hat e_{j}^2$ for $j \in [2\kappa]$.
  \end{itemize}
  We define our input matching as
  \(M := \left \{e_i^1, e_i^4 \mid x_i \in X \right \} \cup \left \{e_s^1 \right \} \cup \left \{\hat e_j^1 \mid j \in [\kappa] \right \} \cup \left \{\hat e_{j}^2 \mid j \in [\kappa+1, 2\kappa] \right \}.\)

  Note that this input matching is approved by all voters except voter $v^*$.

  The idea of this construction is that edges $e_i^1$ and $e_i^4$ represent the literals for variable $x_i$. A voter corresponding to a clause is satisfied if at least one edge corresponding to a literal in their clause is in the matching. In the input matching all voters corresponding to clauses start with utility one. The only voter that can improve is the start voter $v^*$ which initiates a chain reaction that removes utility from voters in the variables. To regain utility these voters have to block one of the edges corresponding to the literals, which makes sure that a variable cannot simultaneously be assigned \textit{true} and \textit{false}.

  Correctness:
  $[\Rightarrow]$
  Assume $(X, C)$ is a Yes-Instance for \textsc{$\kappa+1$-Sat}. Therefore there exists a satisfying assignment for the variables. Consider the following matching:
  \begin{align*}
    M' := ~ & \left \{e_i^1, e_i^3, e_i^5 \mid x_i \in X \text{ is \textit{true}} \right \} \cup \left \{e_i^2, e_i^4, e_i^6 \mid x_i \in X \text{ is \textit{false}} \right \} \\
            & \cup \left \{e_s^2 \right \} \cup \left \{\hat e_j^1 \mid j \in [\kappa] \right \} \cup \left \{\hat e_{j}^2 \mid j \in [\kappa+1, 2\kappa] \right \}.
  \end{align*}
  We can show the following statements about matching $M'$:
  \begin{enumerate}
    \item For all $c_j \in C$ matching $M'$ is approved by voter $v_j$.
    \item For all $x_i \in X$ matching $M'$ is approved by $v_i^a$ and $v_i^b$.
    \item Matching $M'$ is approved by $v^*, v_1^*$ and $v_2^*$.
  \end{enumerate}
  We now prove these statements:
  \begin{enumerate}
    \item Let $c_j \in C$ be a clause. This clause has $\kappa + 1$ literals and therefore voter $v_j$ approves $\kappa + 1$ edges. We know that in the satisfying assignment at least one of the literals in $c_j$ is \textit{true}.  Therefore the intersection of $M'$ and $M^*(v_j)$ must be at least one and at most $\kappa$ edges in $M^*(v_j)$ are not in $M$.
    \item Let $x_i \in X$ be a variable. Voters $v_i^a$ and $v_i^b$ approve $\kappa + 1$ edges each. Thus, at least one of the edges in their preferred matching must be in $M'$. If $x_i$ is \textit{true},  this are edges $e_i^3$ for $v_i^b$ and $e_i^5$ for $v_i^a$ and If $x_i$ is \textit{false}, this are edges $e_i^2$ for $v_i^a$ and $e_i^6$ for $v_i^b$.
    \item Voter $v^*$ approves $\kappa + 1$ edges and therefore at least one edge from $M^*(v^*)$ must be in $M'$. It holds that $M' \cap M^*(v^*) = \{e_s^2\}$. Voter $v_1^*$ approves $2 \kappa$ edges and the $\kappa$ edges $\{\hat e_j^1 \mid j \in [\kappa]\}$ are in $M'$. Voter $v_2^*$ also approves $2 \kappa$ edges and the $\kappa$ edges $\{\hat e_{j}^2 \mid j \in [\kappa+1, 2\kappa]\}$ are in $M'$.
  \end{enumerate}
  As matching $M'$ is approved by all voters, it weakly Pareto dominates matching $M$.

  $[\Leftarrow]$
  Assume $M$ is not strongly Pareto optimal. That is, there is a matching $M'$ that weakly Pareto dominates $M$. As $M$ is approved by all but one voter, this matching must be approved by all voters.
  We can show the following about $M'$:
  \begin{enumerate}
    \item For each $j \in \kappa$ the edge $\hat e_j^3 \notin M'$
    \item Edge $e_s^2$ is in $M'$.
    \item For each variable $x_i \in X:$ either $\{e_i^2,e_i^6\} \in M'$ or $\{e_i^5, e_i^3\} \in M'$.
    \item For each variable $x_i \in X: e_i^1 \notin M'$ or $e_i^4 \notin M'$.
    \item The assignment $x_i = \text{true}$ if and only if $e_i^1 \in M'$ is a satisfying assignment.
  \end{enumerate}
  We now show these statements.
  \begin{enumerate}
    \item Consider voters $v_1^*$ and $v_2^*$. They both approve $2 \kappa$ edges. For them to approve a matching, this matching must contain at least $\kappa$ of those edges. Note that $v_1^*$ and $v_2^*$ never approve the same an edge. Also note that they only approve edges in the control gadgets and that each control gadget only provides one utility to at most one of them. As they must have a combined utility of $2 \kappa$ and there are only $2 \kappa$ control gadgets, for half of the control gadgets edge $\hat e_{j}^1$ is in $M'$ and fot the other half $\hat e_{j}^2$ is in $M'$. Therefore there is no $j \in \kappa$ such that $\hat e_{j}^3$ is in $M'$.
    \item We know that matching $M'$ is approved by all voters. As voter $v^*$ approves $\kappa + 1$ edges and by (1) we know that the edges $\hat e_{j}^3$ for $j \in [\kappa]$ are not in $M'$, the remaining edge, that is edge $e_s^2$, is in $M'$.
    \item As by (2) edge $e_s^2$ is in $M'$, we know that $e_s^1$ is not in $M'$. For voters $v_i^a$ and $v_i^b$ corresponding to variable $x_i$ to approve $M'$, $M'$ must contain at leat one of the edges from their preferred matching. By (1) we know that the $\kappa - 2$ edges $\hat e_{j}^3$ for $j \in [\kappa - 2]$ are not in $M'$. Therefore either $e_i^2$ and $e_i^6$ or $e_i^5$ and $e_i^3$ are in $M'$.
    \item From statement (3) we know that for each variable $i \in [n]$ either $e_i^2$ and $e_i^6$ or $e_i^5$ and $e_i^3$ are in $M'$. As $e_i^2$ and $e_i^6$ are adjacent to $e_i^1$ and $e_i^5$ and $e_i^3$ are adjacent to $e_i^4$, only one of those edges is in $M'$.
    \item We know that for each clause $c_j \in C$ voter $v_j$ approves matching $M'$. That means that $M'$ shares at least one edge with their preferred matching. From (4) we know that for each variable either $e_i^1 \notin M'$ or $e_i^4 \notin M'$. Therefore each variable has an unambiguous assignment to \textit{true} or \textit{false}.
  \end{enumerate}
  From (5) follows that there is a satisfying assignment for the \textsc{$\kappa+1$-Sat} instance.
\end{proof}

\approvalconstructspo*
\begin{proof}
  In both cases, we reduce from \textsc{Egalitarian Welfare}\footnote{Unlike in the rest of this paper, we need to use a Turing reduction in this proof, as we reduce from a decision problem to a search problem. This allows us to use \textsc{sPO-Construction} as a subroutine for solving \textsc{Egalitarian Welfare}. Note that the definition of NP-hardness under Turing reductions slightly differs from the definition of NP-hardness under Karp reductions. For example, under Turing reductions there is no distinction between NP-hardness and coNP-hardness, while there is a difference under Karp reductions.}, which was shown to be NP-complete in \Cref{thm:1_edge_big_kappa_egalitarian}. The idea is to use the fact that when matchings with egalitarian welfare of one exist, such matchings are the only strongly Pareto optimal ones.

  Construction:
  Let $G$ be a graph an $V$ be a set of voters. Using \textsc{sPO-Construction} with the same graph and the same voters we get a strongly Pareto optimal matching on $G$. If the matching is approved by all voters in $V$, we return that a matching with egalitarian welfare of one exists. Otherwise we return, that there is no matching with positive egalitarian welfare.

  Correctness:
  A matching that is approved by all voters is always strongly Pareto optimal, as there is no possibility for any voter to improve their utility. It weakly dominates any matching that is not approved by all voters. Thus, if there is a matching with positive egalitarian welfare, only matchings with positive egalitarian welfare are strongly Pareto optimal and \textsc{sPO-Construction} outputs such a matching.
\end{proof}

\subsection{Condorcet Winners}

\cwverification*
\begin{proof}
  The problem lies in coNP, since a matching $M'$ that wins or ties in a pairwise comparison against $M$ serves as a witness for a No-instance.

  \paragraph{Affine Utilities}
  For \textsc{wCW-Verification} under affine utilities we reduce from \textsc{wPO-Verification} , which was shown to be coNP-hard in \Cref{thm:check_wsPO}.

  Construction:
  Consider an instance for \textsc{wPO-Verification}, given by graph $G$, matching $M$, and voters $V$.
  To construct an instance for \textsc{wCW-Verification}, we define $|V| - 1$ additional voters $V' = \{v'_1, \dots, v'_{|V|-1}\}$. For each voter $v' \in V'$, we use $M^*(v') = M$ as their preferred matching.
  The resulting \textsc{wCW-Verification} instance consists of graph $G$, matching $M$, and voters $V \cup V'$.
  A complete example of the construction is shown in \Cref{fig:check_sCW_example1,fig:check_sCW_example2}.

  \begin{figure}[tbp]
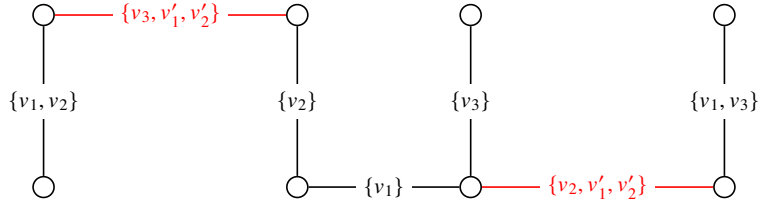
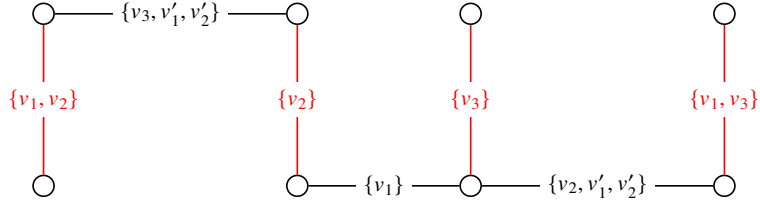

    \centering

    \begin{subfigure}{0.9\linewidth}
      \centering
      \includestandalone[width=0.7\linewidth]{figures/affine_condorcet_weak_input}
      \caption{Input matching $M$. The new voters $v'_1$ and $v'_2$ have $M$ as their preferred matching.}
      \label{fig:check_sCW_example1}
    \end{subfigure}

    \vspace{0.8em}

    \begin{subfigure}{0.9\linewidth}
      \centering
      \includestandalone[width=0.7\linewidth]{figures/affine_condorcet_weak_transformed}
      \caption{Matching $M'$ preferred by all original voters.}
      \label{fig:check_sCW_example2}
    \end{subfigure}

    \caption{Example for \textsc{sCW-Verification}.}
    \label{fig:check_sCW_example}
  \end{figure}

  Correctness:
  $[\Rightarrow]$
  Let $(G, V, M)$ be a No-instance for \textsc{wPO-Verification}. Then, there exists a matching $M'$ s.th.\ all voters in $V$ strictly prefer $M'$ over $M$. In the corresponding \textsc{wCW-Verification} instance, $M'$ wins in a pairwise comparison with $M$: the $|V|$ original voters strictly prefer $M'$, while the $|V| - 1$ added voters strictly prefer $M$. Thus, $M$ is not a weak Condorcet winner.

  $[\Leftarrow]$
  Let $(G, V \cup V', M)$ be a No-instance for \textsc{wCW-Verification}. Then, there exists a matching $M'$ that wins a pairwise comparison against $M$. Since every added voter in $V'$ has maximum utility in $M$ and strictly lower utility in every other matching, they all strictly prefer $M$ over $M'$. There are $2|V| - 1$ voters in total, and at least $|V| - 1$ prefer $M$ over $M'$. Therefore, for $M'$ to win, all $|V|$ original voters must strictly prefer $M'$ over $M$. Hence, $M$ is strongly Pareto dominated by $M'$ and not weakly Pareto optimal.

  For \textsc{sCW-Verification} the proof is completely analogous to \textsc{wCW-Verification} with the difference that the number of additional voters is $|V|$ instead of $|V| - 1$. In every case in which we argue a matching wins against input matching $M$ it instead wins or ties.

  \paragraph{One-edge approval}
  For the hardness \textsc{wCW-Verification} under one-edge approval preferences we reduce from \textsc{$3$-Sat}.
  The idea is to construct gadgets, similar to the proof of \Cref{thm:1_edge_big_kappa_egalitarian}, that correspond to the literals of the variables. We then add voters for the clauses which approve the matching if the corresponding clause is fulfilled. Using control voters we construct a matching that is approved by all but one voter. This matching is a weak Condorcet winner unless there exists a fulfilling assignment for the \textsc{$3$-Sat} instance.

  Construction:
  Let $X = \{x_1, \dots, x_n\}$ be a set of variables and $C = \{c_1, \dots, c_m\}$ be a set of clauses. For a variable $x_i$, let $C_i = \{c \in C \mid x_i \in c \vee \overline{x_i} \in c\}$ be the set of clauses that contain literals from $x_i$ and $\overline{C_i} = C \setminus C_i$.
  We construct a graph $G$ consisting of the following gadgets (see \Cref{fig:check_wCW_one_edge_gadget}):
  \begin{itemize}
    \item For each variable $x_i \in W$ we construct a gadget forming a star with $2 + 2n - 1 + (n-2) \cdot |\overline{C_i}|$ edges. We label these edges as follows:
          \begin{itemize}
            \item Let the first two edges be $e_i$ and $\overline{e_i}$.
            \item Let the next $2n-1$ edges be ${(e_i^*)}_1$ to ${(e_i^*)}_{2n-1}$.
            \item Let the last $(n-2) \cdot |\overline{C_i}|$ edges be $(e_i)^a_b$ for $a \in [n-2]$ and $b \in [|\overline{C_i}|]$.
          \end{itemize}
    \item We construct one control gadget with three edges $e_C^1$, $e_C^2$, and $e_C^3$.
  \end{itemize}
  \begin{figure}[tbp]
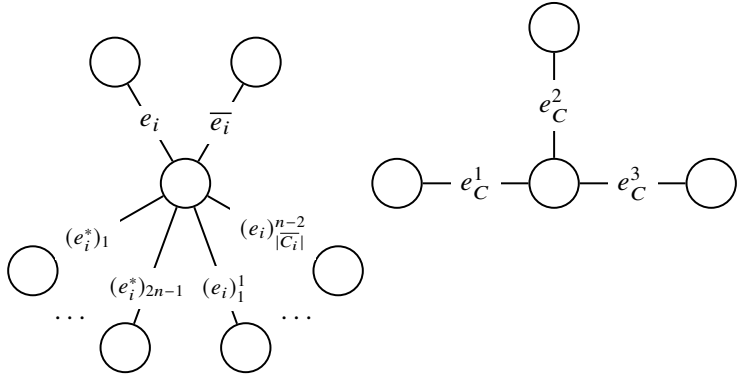

    \centering
    \includestandalone[width=0.6\textwidth]{figures/approval_condorcet_weak_oneedge_gadget}
    \caption{Gadget for variable $x_i$ and control gadget.}
    \label{fig:check_wCW_one_edge_gadget}
  \end{figure}
  We construct the following $(n-2)m + 2n$ voters with their preferred matchings:
  \begin{itemize}
    \item For each clause $c_j \in C$ we construct $(n-2)$ voters $V_j$. Let these voters be labeled $v_j^1$ to $v_j^{n+1}$. If $c_j$ contains literal $x_i$, all voters for that clause approve $e_i$ and if $c_j$ contains $\overline{x_i}$, the voters approve $\overline{e_i}$. Let $x_i$ be a variable with no literals in $c_j$ and let $c_j$ be the $k$th clause with no literals from $x_i$. Then for $l \in [n-2]$ the voter $v_j^l$ approves edge $(e_i)^l_k$. Let $\tilde{v}$ be one arbitrary voter from $\bigcup_{c_j \in C} V_j$. All voters apart from $\tilde{v}$ approve edge $e_C^1$ and $\tilde v$ approves $e_C^3$.
    \item We have $2n$ control voters $V^* = \{v^*_1, \dots v^*_{n}\}$ and $V^{**} = \{v^{**}_1, \dots v^{**}_{n}\}$. For each variable $x_i \in W$ and $l \in [n]$ the voter $v^*_l \in V^*$ approves edge ${(e_i^*)}_l$. The voter $v^{**}_i$ approves ${(e_i^*)}_i$. For $l \in [i-1]$ the voter $v^{**}_l$ approves edge ${(e_i^*)}_{n+l}$ and for $l \in [i+1, 2n]$ the voter $v^{**}_l$ approves edge ${(e_i^*)}_{n+l-1}$. All voters in $V^*$ and $V^{**}$ approve $e_C^2$.
  \end{itemize}
  As input matching we choose:
  \(M := \{{(e_i^*)}_{i} \mid i \in [n]\} \cup \{e_C^1\}.\)
  Note that all preferred matchings and the input matching are maximal and $M$ is approved by \textit{all} voters apart from $\tilde{v}$.
  The \textsc{wCW-Verification} instance is defined by graph $G$, voters $\bigcup_{c_j \in C} V_j \cup V^* \cup V^{**}$ and input matching $M$.

  Correctness:
  $[\Rightarrow]$
  Let $(X, C)$ be a Yes-Instance for \textsc{$3$-Sat}. Then there is an assignment for $X$ such that all clauses in $C$ are satisfied. Let $X^t$ be the variables that are assigned \textit{true} and $X^f$ the variables that are assigned \textit{false}. The matching
  \(M' := \{e_i \mid x_i \in X^t\} \cup \{\overline{e_i} \mid x_i \in X^f\} \cup \{e_C^2\}\)
  is a matching that is approved by all voters. Therefore it wins against $M$ in a plurality vote and $M$ is not a weak Condorcet winner.

  $[\Leftarrow]$
  Assume there is a matching $M' \neq M$ that wins against $M$ in a pairwise comparison against $M$. As $M$ is already approved by all but one voter, $M'$ must be approved by all voters. For a voter to have a positive utility in matching $M$, there must be at least one edge in the matching that is approved by the voter. Let $c_j$ be an arbitrary clause.
  We can note the following about $M'$:
  \begin{enumerate}
    \item The edge $e_C^2$ is in $M'$.
    \item For a clause $c_j$ at least one edge corresponding to a literal in $c_j$ must be in $M'$
  \end{enumerate}
  We now prove these statements.
  \begin{enumerate}
    \item We know that $M' \neq M$ and that all voters approve $M'$. Consider an edge in $M \setminus M'$. If this edge is $e_C^1$, then there must be at least one gadget corresponding to the variables in which another edge than ${(e_i^*)}_{i}$ is chosen. Let $x_i$ be a variable for which ${(e_i^*)}_{i} \notin M$. As ${(e_i^*)}_{i}$ is approved by $v^*_i$ and $v^{**}_i$, there must be another edge in $M'$ that is approved by these voters. Using the edges from the variable gadgets, the only way to create a matching that is approved by all voters in $V^*$ and $V^{**}$ is to choose $\{{(e_i^*)}_{i} \mid i \in [n]\}$. As this is not the case, the voters in $V^*$ and $V^{**}$ must be satisfied using the control gadget and thus $e_C^2 \in M'$.
    \item As $e_C^2 \in M'$, we know that $e_C^1 \notin M'$ and $e_C^3 \notin M'$. As $M'$ is approved by all voters, the voters in $\bigcup_{c_j \in C}V_j$ are satisfied by edges in the variable gadgets. It is not possible to satisfy all $n-2$ voters corresponding to clause $c_j$ using only the edges that do \textit{not} correspond to the literals in $c_j$. There are $n-3$ variables without literals in $c_j$ and each of the $n-2$ voters corresponding to $c_j$ approve a different edge in each of the gadgets. As only one edge per gadget can be in any matching, at most $n-3$ voters can be satisfied. As all $n-2$ voters corresponding to clause $c_j$ have positive utility, there must be at least one edge corresponding to a literal in $c_j$ in matching $M'$.
  \end{enumerate}
  Assigning \textit{true} to all variables $x_i$ with $e_i \in M'$ and \textit{false} to all variables with $\overline{e_i} \in M'$ is a satisfying assignment for the \textsc{$3$-Sat} instance.

  \medskip
  For \textsc{sCW-Verification} we use a similar reduction as for \textsc{wCW-Verification}. The only difference is, that we do not select a voter $\tilde{v}$ that does approve edge $e_C^3$ instead of $e_C^1$. Instead all voters in $\bigcup_{c_j \in C}V_j$ approve $e_C^1$. Thus, the input matching $M$ is approved by all voters and only another matching that is approved by all voters wins or ties against $M$. The construction of the fulfilling variable assignment for the \textsc{$3$-Sat} instance is analogous.

  \paragraph{$\kappa$-missing approval}
  For \textsc{wCW-Verification} under $\kappa$-missing approval we reduce from \textsc{Utilitarian Welfare}. This was shown to be NP-complete in \Cref{thm:approval_utilitarian}. The idea is to construct a matching that is approved by $k-1$ voters so that a matching with utilitarian welfare $k$ corresponds to a matching that wins in a majority vote. We use additional control voters to ensure that the added voters can only approve matchings that are not approved by any voter from the original graph.

  Construction:
  Given a graph $G$, a set of voters $V$ and a $k \in \mathbb{N}_0$, we construct graph $G'$ by extending $G$ as follows (see \Cref{fig:wCW_approval_graph,fig:wCW_approval_gadget} for an example):
  \begin{itemize}
    \item For each node $i \in W$ we add an edge $e_i$ that is only connected to that node.
    \item We add $3 \kappa$ control gadgets. The first $\kappa$ of them consist of four edges, forming a star and the last $2\kappa$ consist of two edges. We call the edges $e_{j}^1$ to $e_{j}^4$ for $j \in [\kappa]$ and $e_{j}^1$ to $e_{j}^2$ for $j \in [\kappa + 1, 3\kappa]$.
  \end{itemize}
  \begin{figure}[tbp]
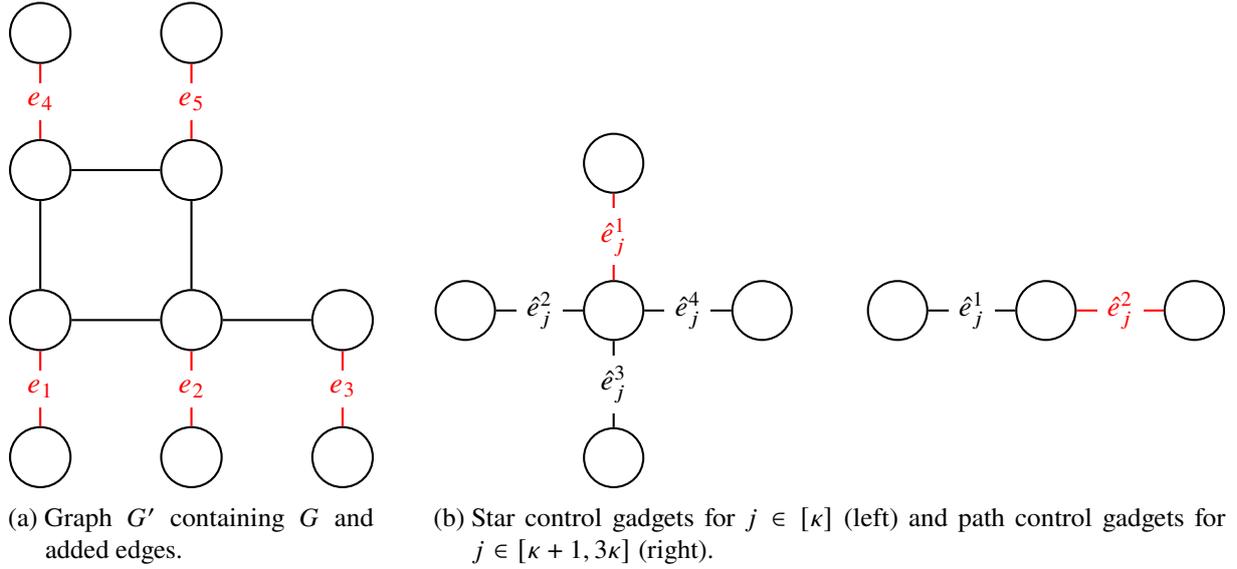

    \centering

    \begin{subfigure}[t]{0.30\linewidth}
      \centering
      \includestandalone[width=\linewidth]{figures/approval_condorcet_weak_kappa_graph}
      \caption{Graph $G'$ containing $G$ and added edges.}
      \label{fig:wCW_approval_graph}
    \end{subfigure}\hfill
    \begin{subfigure}[t]{0.65\linewidth}
      \centering
      \includestandalone[width=\linewidth]{figures/approval_condorcet_weak_kappa_gadgets}
      \caption{Star control gadgets for $j \in [\kappa]$ (left) and path control gadgets for $j \in [\kappa+1, 3\kappa]$ (right).}
      \label{fig:wCW_approval_gadget}
    \end{subfigure}

    \caption{Construction for \textsc{wCW-Existence} under approval utilities.}
    \label{fig:wCW_approval_construction}
  \end{figure}

  The voters in $V$ have the same approvals as in the original instance.
  We add $4(|V|+1)+k - 1$ new voters with the following preferred matchings:
  \begin{itemize}
    \item We construct $k-1$ voters $V^*$. Those voters approve $\{e_i \mid i \in W\} \cup \{e_j^4 \mid j \in [\kappa]\}$
    \item We construct $|V| +1$ voters $V_1^*$ and $|V| +1$ voters $V_2^*$. The voters in $V_1^*$ approve $\{e_j^1 \mid j \in [2\kappa]\}$ and the voters in $V_2^*$ approve $\{e_j^2 \mid j \in [2\kappa]\}$.
    \item We construct $|V| +1$ voters $V_3^*$ and $|V| +1$ voters $V_4^*$. The voters in $V_3^*$ approve $\{e_j^1 \mid j \in [\kappa]\} \cup \{e_j^2 \mid j \in [\kappa + 1, 2\kappa]\} \cup \{e_j^1 \mid j \in [2 \kappa + 1, 3\kappa]\}$. The voters in $V_4^*$ approve $\{e_j^3 \mid j \in [\kappa]\} \cup \{e_j^2 \mid j \in [2\kappa + 1, 3\kappa]\}$.
  \end{itemize}
  We define our input matching as:
  \(M:= \{e_i \mid i \in W\} \cup \{e_j^1 \mid j \in [\kappa]\} \cup \{e_j^2 \mid j \in [\kappa + 1, 3\kappa]\}. \)
  Note that this input matching is approved by the voters in $V^* \cup V_1^* \cup V_2^* \cup V_3^* \cup V_4^*$.
  The \textsc{wCW-Verification} instance is defined by graph $G'$, voters $V \cup V^* \cup V_1^* \cup V_2^* \cup V_3^* \cup V_4^*$ and matching $M$.

  Correctness:
  $[\Rightarrow]$ Assume $(G, V, k)$ is a Yes-instance for \textsc{Utilitarian Welfare}. Then there is a matching $M'$ in $G$ that is approved by at least $k$ voters. Consider matching $M' \cup \{e_j^1 \mid j \in [\kappa]\} \cup \{e_j^2 \mid j \in [\kappa + 1, 3\kappa]\}$ in $G'$. This matching is approved by all voters in  $V_1^* \cup V_2^* \cup V_3^* \cup V_4^*$ and by at least $k$ voters from $V$. Therefore, it wins in a plurality vote against $M$ and $M$ is not a weak Condorcet Winner.

  $[\Leftarrow]$ Assume there is a matching $M'$ in $G'$ that wins against $M$. That is, it is approved by at least $4|V|+4+k$ voters. We can note the following about matching $M'$:
  \begin{enumerate}
    \item This matching is approved by $V_1^* \cup V_2^* \cup V_3^* \cup V_4^*$.
    \item For $j \in [2\kappa]$ either $e_j^1$ or $e_j^2$ is in $M'$.
    \item For $j \in [\kappa]$ $e_j^1 \in M'$ and for $j \in [\kappa + 1, 2\kappa]: j_j^2 \in M'$.
    \item Matching $M'$ is not approved by the voters in $V^*$.
    \item At least $k$ voters from $V$ approve $M'$.
  \end{enumerate}
  We now prove these statements.
  \begin{enumerate}
    \item Each of these sets contains $|V| + 1$ voters with the same preferred matchings. As there are only $|V|$ voters that do not already approve $M$, any matching that wins or ties in a plurality voting against $M$ must be approved by each voter in $V_1^* \cup V_2^* \cup V_3^* \cup V_4^*$.
    \item The voters in $V_1^*$ and $V_2^*$ approve only the edges in the control gadgets up to $2 \kappa$. They never share an edge. As for both sets of voters only $\kappa$ edges in their preferred matchings can be missing and they all must approve $M'$, in each of the control gadgets up to $2 \kappa$ the edge $e_j^1$ and $e_j^2$ must be in $M'$.
    \item As by (2) we know that edge $e_j^3$ is never in $M'$, we know that $\kappa$ edges from the preferred matching of voters in $V_4^*$ are missing. For them to still approve $M'$, it follows that for $j \in [2\kappa +1, 3\kappa]: e_j^2 \in M'$. Thus, $\kappa$ edges from the voters in $V_3^*$ are missing and in the gadgets from $j = 1$ to $j = 2\kappa$ the edge that is approved by voters in $V^*_3$ is in $M'$. That means that for $j \in [\kappa]$ $e_j^1 \in M'$ and for $j \in [\kappa + 1, 2\kappa]: j_j^2 \in M'$.
    \item We know that $M \neq M'$. From (1), (2) and (3) follows that $M'$ and $M$ are the same in the control gadgets. Thus, $M'$ must differ from $M$ in the original graph and the edges that were added to the original graph. As $M$ is a maximal matching, there must be an edge in $\{e_i \mid i \in W\}$ that is not in $M'$. As for the voters in $V^*$ already $\kappa$ edges are missing in the control gadgets, that means that there are at least $\kappa + 1$ edges missing. Thus, they do not approve matching $M'$.
    \item Matching $M'$ wins or ties in a plurality voting against $M$. We know that the $k$ voters in $V^*$ strictly prefer $M$ over $M'$. Therefore there must be at least $k$ voters in $V$ that strictly prefer $M'$ over $M$.
  \end{enumerate}
  From (5) follows that matching $M$, reduced on subgraph $G$ has a utilitarian welfare of $k$.

  \medskip
  For \textsc{sCW-Verification} the reduction is completely analogous to the reduction for \textsc{wCW-Verification} . The only difference is, that we construct $k$ voters $V^*$ instead of $k-1$. Thus, the input matching $M$ is approved by the $4|V+4$ control voters and $k$ additional voters. Using the same argumentation, a matching that wins or ties against $M$ then corresponds to a matching with utilitarian welfare $k$ in the original graph.
\end{proof}

\affinecondorcetexist*
\begin{proof}
  For both problems we reduce from \textsc{wPO-Verification}. This problem was proven to be coNP-complete in \Cref{thm:check_wsPO}.

  Construction for \textsc{wCW-Existence}:
  Consider an instance for \textsc{wPO-Verification}, defined by graph $G$, matching $M$ and voters $V$.
  To construct an instance for \textsc{wCW-Existence} we add two additional components, each consisting of a path of length two to graph $G$, to get graph $G'$. Let $e^a_1$ and $e^a_2$ be the edges of the first path and $e^b_1$ and $e^b_2$ be the edges of the second path.
  We construct $|V| + 3$ additional voters $V' = \{v'_1, \dots, v'_{|V|+1}, v_a, v_b\}$ with the following preferred matchings:
  \begin{itemize}
    \item For $i \in [|V|+1]$ voter $v'_i$ has preferred matching $M \cup \left \{e^a_1, e^b_1 \right \}$.
    \item Voter $v_a$ only approves edge $e^a_2$ and voter $v_b$ only approves edge $v_b^2$.
  \end{itemize}
  The \textsc{wCW-Existence} instance is defined by graph $G'$ and voters $V \cup V'$.
  A complete example for the construction can be seen in \Cref{fig:wCW_existence1,fig:wCW_existence2,fig:wCW_existence3}.

  \begin{figure}[tbp]
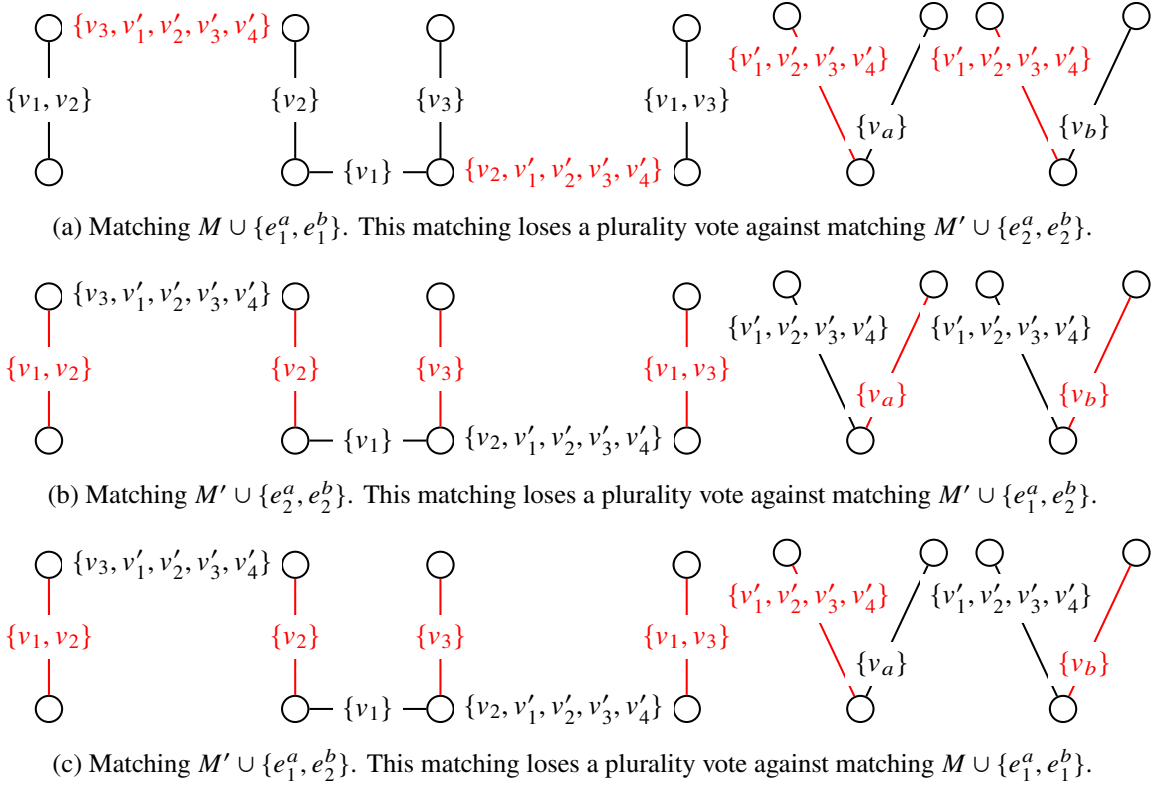

    \centering

    \begin{subfigure}{0.95\linewidth}
      \centering
      \includestandalone[width=\linewidth]{figures/affine_condorcet_weak_existence}
      \caption{Matching $M \cup \{e^a_1, e^b_1\}$. This matching loses a plurality vote against matching $M' \cup \{e^a_2, e^b_2\}$.}
      \label{fig:wCW_existence1}
    \end{subfigure}

    \vspace{0.8em}

    \begin{subfigure}{0.95\linewidth}
      \centering
      \includestandalone[width=\linewidth]{figures/affine_condorcet_weak_existence1}
      \caption{Matching $M' \cup \{e^a_2, e^b_2\}$. This matching loses a plurality vote against matching $M' \cup \{e_1^a, e_2^b\}$.}
      \label{fig:wCW_existence2}
    \end{subfigure}

    \vspace{0.8em}

    \begin{subfigure}{0.95\linewidth}
      \centering
      \includestandalone[width=\linewidth]{figures/affine_condorcet_weak_existence2}
      \caption{Matching $M' \cup \{e_1^a, e_2^b\}$. This matching loses a plurality vote against matching $M \cup \{e^a_1, e^b_1\}$.}
      \label{fig:wCW_existence3}
    \end{subfigure}

    \caption{Example for \textsc{wCW-Existence}.}
    \label{fig:wCW_existence}
  \end{figure}

  Correctness for \textsc{wCW-Existence}:
  $[\Rightarrow]$
  Consider a No-instance for \textsc{wPO-Verification} defined by $(G, V, M)$. Thus there is a matching $M'$ s.th.\ all voters in $V$ have a strictly increased utility compared to $M$. In the \textsc{wCW-Existence} instance the matching $M' \cup \left \{e^a_2, e^b_2 \right \}$ wins in a pairwise comparison against $M \cup \left \{e^a_1, e^b_1 \right \}$, as the $|V|$ voters in $V$ and the two voters $v_a$ and $v_b$ strictly prefer $M' \cup \left \{e^a_2, e^b_2 \right \}$ over $M \cup \left \{e^a_1, e^b_1 \right \}$ while only the $|V|+1$ voters $v'_1$ to $v'_{|V|+1}$ strictly prefer $M \cup \left \{e^a_1, e^b_1 \right \}$ over $M' \cup \left \{e^a_2, e^b_2 \right \}$. Therefore $M \cup \{e^a_1, e^b_1\}$ is not a weak Condorcet winner.

  Consider any matching $\tilde M \neq M \cup \left \{e^a_1, e^b_1 \right \}$. As $M \cup \left \{e^a_1, e^b_1 \right \}$ is a maximal matching and $\tilde M \neq M \cup \left \{e^a_1, e^b_1 \right \}$, there is at least one edge $e \in M \cup \left \{e^a_1, e^b_1 \right \} \setminus \tilde M$. Let $E_e$ be the set of edges that are adjacent to $e$.
  Let $\tilde M' := \tilde M \cup \{e\} \setminus E_e$. $\tilde M'$ fulfills the matching property, as only edge $e$ is added to $\tilde M$ and all edges that are adjacent to $e$ are removed. All voters $v'_1$ to $v'_{|V|+1}$ strictly prefer $\tilde M'$ over $\tilde M$ as they have $M \cup \left \{e^a_1, e^b_1 \right \}$ as their preferred matching and gain utility from edge $e$.
  Assume $e$ is in the original graph $G$. At most $|V|$ voters strictly prefers $\tilde M$ over $\tilde M'$, as voters in $V$ approve edges from $E_e$. Now assume $e$ is not in $G$ which means that $e \in \left \{e^a_1, e^b_1 \right \}$. In this case at most one voter strictly prefers $\tilde M$ over $\tilde M'$, as $E_e$ only consists of one edge that is either approved by $v_a$ or by $v_b$.
  It follows that $\tilde M'$ wins a plurality voting against $\tilde M$. Therefore $\tilde M$ is not a weak Condorcet winner.
  As $M \cup \left \{e^a_1, e^b_1 \right \}$ is not a Condorcet winner and any matching $\tilde M \neq M$ is not a Condorcet winner, there exists no Condorcet winner.

  $[\Leftarrow]$
  Consider now a Yes-instance for \textsc{wPO-Verification}, defined by $(G, V, M)$. Thus, matching $M$ is strictly Pareto optimal. That is, for every $M' \neq M$, at most $|V|-1$ voters from $V$ strictly prefer $M'$ over $M$. It follows that in the \textsc{wCW-Existence} instance at most $|V|+1$ voters can prefer any matching over $M \cup \left \{e^a_1, e^b_1 \right \}$, as voters $v'_1$ to $v'_{|V|+1}$ always have maximum utility in $M \cup \left \{e^a_1, e^b_1 \right \}$.At most $|V|-1$ voters from $V$ and the two voters $v_a$ and $v_b$ can have increased utility in any other matching. As at least $|V|+1$ voters ($v'_1, \dots, v'_{|V|+1}$) strictly prefer $M$, $M$ wins or ties in a pairwise comparison against every other matching and is a weak Condorcet winner.

  Construction for \textsc{sCW-Existence}:
  Consider an instance for \textsc{wPO-Verification}, defined by graph $G$, matching $M$ and voters $V$.
  To construct an instance for \textsc{sCW-Existence} we add $|V|$ additional voters $V' = \{v'_1, \dots, v'_{|V|}\}$. For each voter $v' \in V': M^*(v') = M$ holds.
  The \textsc{sCW-Existence} instance is defined by graph $G$ and voters $V \cup V'$.

  Correctness for \textsc{sCW-Existence}:
  $[\Rightarrow]$
  Consider a No-instance for \textsc{wPO-Verification}. There is a matching $M'$ such that all voters in $V$ have a strictly increased utility compared to input matching $M$. The matching $M'$ ties in a pairwise comparison against $M$, as the $|V|$ voters in $V$ strictly prefer $M'$ over $M$ and the $|V|$ voters in $V'$ strictly prefer $M$ over $M'$. Therefore $M$ is not a strong Condorcet winner.

  Consider any matching $\tilde M \neq M$. As $M$ is a maximal matching and $\tilde M \neq M$, there is at least one edge $e \in M \setminus \tilde M$. Let $E_e$ be the set of edges that are adjacent to $e$.
  Let $\tilde M' := \tilde M \cup \{e\} \setminus E_e$. The set $\tilde M'$ fulfills the matching property, as only edge $e$ is added to matching $M$ and all edges that are adjacent to $e$ are removed. As all voters in $V'$ have $M$ as their preferred matching, at least $|V|$ voters prefer $\tilde M'$ over $\tilde M$ as they get utility from edge $e$ and don't loose utility from edges $E_e$. At most $|V|$ voters strictly prefer $\tilde M$ over $\tilde M'$, as only original voters approve edges from $E_e$. It follows that $\tilde M'$ at least ties against $\tilde M$. Therefore $\tilde M$ is not a strong Condorcet winner.

  As $M$ is not a Condorcet winner and any matching $\tilde M \neq M$ is not a Condorcet winner, there is no Condorcet winner.

  $[\Leftarrow]$
  Consider now a Yes-instance for \textsc{wPO-Verification}. Thus, matching $M$ is weakly Pareto optimal. That is for every $M' \neq M$, at most $|V|-1$ voters from $V$ strictly prefer $M'$ over $M$. As all voters in $V'$ by definition strictly prefer $M$ over every other matching, at least $|V|$ voters prefer $M$ over $M'$. Thus, $M$ wins a plurality vote against every other matching $M'$ and is a strong Condorcet winner.
\end{proof}

\approvalsCWexist*
\begin{proof}~
  \paragraph{One-edge approval}
  For one-edge approval we reduce from \textsc{$3$-Sat}. The idea is to use the same construction for the graph and the voters as in \Cref{thm:check_wsCW}. We show that the matching $M$ that we defined for that construction is a strong Condorcet winner if and only if there is no satisfying assignment for the \textsc{$3$-Sat} instance and that otherwise no strong Condorcet winner exists.

  Construction:
  We use the same construction as in \Cref{thm:check_wsCW} to get a graph $G'$ and voters $\bigcup_{c_j \in C} V_j \cup V^* \cup V^{**}$.

  Correctness:
  $[\Rightarrow]$ Let $(X, C)$ be a Yes-instance for \textsc{$3$-Sat}. That is, there is an assignment for $X$ such that all clauses in $C$ are satisfied. Let $X^t$ be the variables that are assigned \textit{true} and $X^f$ the variables that are assigned \textit{false}. The matchings
  \(M := \{{(e_i^*)}_{i} \mid i \in [n]\} \cup \{e_C^1\}\)
  and
  \(M' := \{e_i \mid x_i \in X^t\} \cup \{\overline{e_i} \mid x_i \in X^f\} \cup \{e_C^2\}\)
  are both approved by all voters. Thus, they tie in a pairwise comparison and there is no matching that wins in a pairwise comparison against them. Therefore there is no strong Condorcet winner.

  $[\Leftarrow]$ Let $(X, C)$ be a No-instance for \textsc{$3$-Sat}. That is, there is \textit{no} assignment for $X$ such that all clauses in $C$ are satisfied.
  In graph $G'$ the matching $M := \{{(e_i^*)}_{i} \mid i \in [n]\} \cup \{e_C^1\}$ is approved by all voters. For that matching to \textit{not} be a strong Condorcet winner there must be another matching that is also approved by all voters. Following the same arguments as in the proof of \Cref{thm:check_wsCW} such a matching would encode a satisfying assignment for the \textsc{$3$-Sat} instance.

  \paragraph{$\kappa$-missing approval}
  For $\kappa$-missing approval we reduce from \textsc{Utilitarian Welfare} which was shown to be NP-complete in \Cref{thm:approval_utilitarian}. The idea is to use a similar construction as in \Cref{thm:check_wsCW} and add a gadget that makes the utilitarian welfare in the original instance non-unique. We show that the matching $M$ that we defined for the construction is a strong Condorcet winner if and only if there is no matching with a utilitarian welfare of at least $k$ and that otherwise no strong Condorcet winner exists, as the matching with highest utilitarian welfare in the original graph is not unique, even if it was unique in the original graph.

  Construction:
  Given a graph $G = (W, E)$, a set of voters $V$ and a $k \in \mathbb{N}_0$, we construct a graph $G'$ by extending $G$ as follows.
  \begin{itemize}
    \item For each node $i \in W$ we add an edge $e_i$ that is only connected to that node.
    \item We add $3 \kappa$ control gadgets. The first $\kappa$ of them consist of four edges, forming a star and the last $2\kappa$ consist of two edges. We call the edges $e_{j}^1$ to $e_{j}^4$ for $j \in [\kappa]$ and $e_{j}^1$ to $e_{j}^2$ for $j \in [\kappa+1, 3\kappa]$.
    \item We add a gadget consisting of two edges $e_C^1$ and $e_C^2$.
  \end{itemize}
  The voters in $V$ have the same approvals as in the original instance.
  We construct $4(|V|+1) + k + 2$ additional voters.
  \begin{itemize}
    \item We construct $k$ voters $V^*$. Those voters approve $\{e_i \mid i \in W\} \cup \{e_j^4 \mid j \in [\kappa]\} \cup \{e_C^1\}$
    \item We construct $|V| + 1$ voters $V_1^*$ and $|V| +1$ voters $V_2^*$. The voters in $V_1^*$ approve $\{e_j^1 \mid j \in [2\kappa]\}$ and the voters in $V_2^*$ approve $\{e_j^2 \mid j \in [2\kappa]\}$.
    \item We construct $|V| + 1$ voters $V_3^*$ and $|V| +1$ voters $V_4^*$. The voters in $V_3^*$ approve $\{e_j^1 \mid j \in [\kappa]\} \cup \{e_j^2 \mid j \in [\kappa + 1, 2\kappa]\} \cup \{e_j^1 \mid j \in [2\kappa +1, 3\kappa]\}$. The voters in $V_4^*$ approve $\{e_j^3 \mid j \in [\kappa]\} \cup \{e_j^2 \mid j \in [2\kappa +1, 3\kappa]\}$.
    \item We construct two voters $v^*_1$ and $v^*_2$. Voter $v^*_1$ approves edge $e_C^1$ and voter $v^*_2$ approves edge $e_C^2$.
  \end{itemize}

  Correctness:
  $[\Rightarrow]$ Assume $(G, V, k)$ is a Yes-instance for \textsc{Utilitarian Welfare}. Then the matching with highest utilitarian welfare in the original graph is approved by at least $k$ voters. Let $M$ be a matching with maximum utilitarian welfare and let that welfare be $k'$. Consider the following two matchings:
  \(M_A := M \cup \{e_j^1 \mid j \in [\kappa]\} \cup \{e_j^2 \mid j \in [\kappa + 1, 3\kappa]\} \cup \{e_C^1\}\) and
  \(M_B := M \cup \{e_j^1 \mid j \in [\kappa]\} \cup \{e_j^2 \mid j \in [\kappa + 1, 3\kappa]\} \cup \{e_C^2\}.\)

  Both matchings are approved by all voters in  $V_1^* \cup V_2^* \cup V_3^* \cup V_4^*$ and by at least $k$ voters from $V$. If $\kappa = 0$ they are also approved either $v^*_1$ or $v^*_2$ and if $\kappa > 0$ they are both approved by both voters. Thus, they have both the same utilitarian welfare of $4|V+1| + k' + 1$ (if $\kappa = 0$) or $4|V+1| + k' + 2$ (if $\kappa > 0$).

  We now want to show that there is no matching that wins against $M_A$ and $M_B$.
  Assume there is a matching $\tilde M$ with strictly higher utilitarian welfare. We can note the following about $\tilde M$:
  \begin{enumerate}
    \item For $i \in [4]$ the matching $\tilde M$ is approved by $V_i^*$
    \item If it is approved by the voters in $V^*$, it is not approved by any voter in $V$.
    \item If $\kappa = 0$ and it is approved by  $v^*_1$, then it is not approved by  $v^*_2$. If $\kappa > 0$ it is approved by both.
  \end{enumerate}
  We now prove these statements:
  \begin{enumerate}
    \item There are at most $|V| + 1$ voters that do not approve $M_A$ and $M_B$. As each set $V_i^*$ contains $|V| + 1$ voters, it is not possible to have a matching with higher utility than $M_A$ and $M_B$ that is not approved by the voters in these sets.
    \item From (1) we know that the matching is approved by all voters in $V_i^*$. Using the same line of arguments as in\Cref{thm:check_wsCW}, it follows that  $\{e_j^1 \mid j \in [\kappa]\} \cup \{e_j^2 \mid j \in [\kappa + 1, 3\kappa]\} \in \tilde M$. Thus, there are already $\kappa$ edges from the preferred matchings of the voters in $V^*$ missing. For these voters to still approve $\tilde M$ all other edges must be in $\tilde M$, which blocks all edges in the original graph and therefore blocks all edges corresponding to the original voters.
    \item If $\kappa = 0$ then the complete preferred matching must be in $\tilde M$ for a voter to approve that matching. As each of $v^*1$ and $v^*_2$ approve only one single edge and they are adjacent to each other, it is not possible for them to approve the same matching. For $\kappa > 0$ they both approve every matching.
  \end{enumerate}
  From statement (2) follows that either the $k$ voters in $V^*$ or some voters in $V$ approve the matching. As matching $M$ is the matching with highest utilitarian welfare $k'$ in the original instance, a matching with higher utilitarian welfare than $4|V+1| + k' + 1$ (when $\kappa = 0$) or $4|V+1| + k' + 2$ (when $\kappa > 0$) cannot exist.

  Thus, the matchings $M_A$ and $M_B$ are both matchings with maximum utilitarian welfare and there is no strong Condorcet winner.

  $[\Leftarrow]$ Assume $(G, V, k)$ is a No-instance for \textsc{Utilitarian Welfare}. Then all matchings in the original graph have an utilitarian welfare of at most $k-1$. Consider the matching
  \(M:= \{e_i \mid i \in W\} \cup \{e_j^1 \mid j \in [\kappa]\} \cup \{e_j^2 \mid j \in [\kappa + 1, 3\kappa]\} \cup \{e_C^1\}\).
  This matching has a utilitarian welfare of $4|V+1| + k + 1$ (when $\kappa = 0$) or $4|V+1| + k' + 2$ (when $\kappa > 0$).
  Assume there is a matching $M'$ with the same or higher utilitarian welfare. Following the same arguments as in \Cref{thm:check_wsCW} we know that this matching must have the same edges as $M$ in the gadgets. As we have already shown, it is not possible for a matching to be approved by the voters in $V^*$ and any voter in $V$. As there is no matching with utilitarian welfare of at least $k$ in the original instance, the matching $M'$ must be approved by the voters in $V^*$. Thus, all edges that are approved by the voters in $V^*$ are in $M'$ and $M' = M$.
  As there is no matching that wins or ties against $M$, this matching is a strong Condorcet winner.
\end{proof}

\end{document}